\def\Tr{\mbox{Tr}}
\newtheorem{theorem}{Theorem}[section]
\newtheorem{proposition}[theorem]{Proposition}
\theoremstyle{definition}
\newtheorem{definition}[theorem]{Definition}
\newtheorem{example}[theorem]{Example}
\newtheorem{xca}[theorem]{Exercise}
\theoremstyle{remark}
\newtheorem{remark}[theorem]{Remark}
\numberwithin{equation}{section}
\begin{document}

\title{Some Aspects of Operator Algebras in Quantum Physics}


\author{Andr\'es F. Reyes-Lega}
\address{Departamento de F\'{\i}sica, Universidad de los Andes,\\
Bogot\'a, Colombia}
\email{anreyes@uniandes.edu.co}
\thanks{I would like to thank all students who enthusiastically took part in the lectures on which these notes are based. Many of them contributed with useful criticisms and suggestions. I also want to  warmly thank  A.P. Balachandran, Andr\'es Vargas and Nicol\'as Escobar, for their very careful reading of the manuscript. Their feedback helped to improve the presentation of this document in a substantial way. Financial support from the Faculty of Science and the Vice Rectorate for Research of Universidad de los Andes, through project No. P13.700022.005, is gratefully acknowledged.}

\subjclass[2010]{81Qxx}

\keywords{Quantum Phase Transitions, Entanglement, Operator Algebras, Spin Chains}

\date{\today}

\begin{abstract}
Motivated by the sharp contrast between classical and quantum physics as probability theories, in these lecture notes I introduce the basic notions of operator algebras that are relevant for the algebraic approach to quantum physics. Aspects of the representation theory of C*-algebras will be motivated and illustrated in physical terms.  Particular emphasis will be given to explicit examples from the theory of quantum phase transitions, where concepts coming from strands as diverse as quantum information theory, algebraic quantum physics and statistical mechanics agreeably converge, providing a more complete picture of the physical phenomena involved.
\end{abstract}

\maketitle

\section{Introduction}
This notes represent the written version of lectures I gave in  mini-courses at  Universidade de Bras\'{\i}lia  (April 3-6, 2013), Universidad Central de Venezuela (May 23-27, 2016)  and at the Villa de Leyva Summer School ``Geometric, Topological and Algebraic Methods for Quantum Field Theory'' (July 15-27, 2013). They were mainly intended as an introduction to some aspects of operator algebras, emphasizing  the prominent role they play in quantum physics. As the audience consisted of students of both physics and mathematics at different stages of their studies, my choice was to focus on the most basic structures and examples, in the hope that a good grasp of these would  motivate them to go deeper into the subject.
Now, something that to a physicist may appear as completely familiar (as, say, an experimental set-up with polarizers, or the distinction between a classical and a quantum field) to a mathematician may not. The same could be said of the proof that the spectrum of any element in a  $C^*$-algebra is never empty: It is a standard result in analysis, but may look quite awkward to many physics students.
Therefore, the emphasis of these lecture notes will be on explaining \emph{why} certain mathematical structures may be useful for the study of quantum theory. This will be illustrated by means of several examples that include a discussion of bipartite entanglement, algebraic and geometric aspects of quantum phase transitions in spin chains, quasi-free states in fermionic systems and applications to quantum field theory.

Our starting point will be based on the sharp distinction between \emph{classical probability} and the probability theory inherent to quantum mechanics. This will provide a physical motivation to the various mathematical concepts we will be dealing with. After introducing the basic concepts about $C^*$-algebras, we will show how an algebraic approach to entanglement can lead to a resolution of certain discrepancies appearing when we deal with systems of identical particles.
Then we will focus on the study of certain specific models (quantum Ising and XY models) that turn out to be ideal in order to  illustrate how  entanglement, geometry and the theory of  CAR algebras\footnote{A special type of $C^*$-algebras used to model fermionic systems} are interrelated.

\subsection{Quantum correlations: Bell-type inequalities}
 Our first objective will be to understand what are the main structural differences  between \emph{classical} and \emph{quantum} physics, when regarded as probability theories. This will, by the way, provide a physical motivation for some of the mathematical notions we will consider in the next sections. Let us recall that
classical physics is usually modeled on a configuration (or phase) space, with a dynamics governed by, {\it e.g.}, Hamilton's principle. On the other hand, quantum mechanics is modeled on a Hilbert space. So the first issue we want to explore is: Why do we have to use Quantum Mechanics to describe the microscopic world?
Indeed, simple experiments with light polarizers make it clear that there is no way we can describe certain
phenomena using classical physics or, better said, classical \emph{probability}. Let us then explore some elementary polarization phenomena, following the presentation in~\cite{Maassen2010}.

 First let us recall that light is just made of electromagnetic waves, their behavior being governed
by Maxwell's equations:
\begin{eqnarray}
\label{eq:Intro-Maxwell}
                 \nabla \cdot E = \rho/\epsilon_0, &  & \nabla \cdot B = 0,\\
                 \nabla \times E = -\frac{\partial B}{\partial t}, &  & \nabla \times  B = \epsilon_0 \mu_0\frac{\partial E}{\partial t} +\mu_0 j,\nonumber
 \end{eqnarray}
where $E$ (resp. $B$) stands for the electric (resp. magnetic) field, $\rho$ for the  charge density and $j$ for the current density.
\begin{xca}
\label{ex:intro1} Show that Maxwell's equations (\ref{eq:Intro-Maxwell}) in vacuum ($\rho=0,\,  j=0$) lead to plane-wave solutions for the electromagnetic field propagating at a speed $c=1/\sqrt{\epsilon_0 \mu_0}$ and such that $E$ and $B$ are always perpendicular to each other and to the propagation direction.
What is the relation between the  \emph{intensity} of the wave (defined as $I=\| E \|^2$) and the energy content of the fields?
\end{xca}
In simple terms, a \emph{polarizer} is a filter that only allows the transmission of light waves which have a specific
polarization angle.  If we let unpolarized light go through an ideal polarizer, the intensity $I_1$ of the transmitted wave  will be found to be half the intensity $I_0$ of the incident wave: $I_1=1/2 \,I_0$ (Malus' law). After passing through the polarizer, the light is said to be linearly polarized. Let us now suppose  that we have a beam of linearly polarized light and we let it go through a second polarizer, such that its polarization axis has been rotated by an angle $\varphi$ with respect to the axis of the first polarizer. Then,  experiment tells us  that the intensity of the transmitted light will be $I_2 = \cos ^2 \varphi\, I_1$.
This is all fine if we are working  with classical electromagnetic waves, which are described by Maxwell's equations. In this case we just need to consider the projection of the field onto the direction singled out by the polarization axis.  But we know that light is actually made of photons and, if their number is small, we are led to regard the $\cos ^2 \varphi$ term as a kind of ``expectation value''.

In a very influential paper~\cite{Einstein1935}, Einstein, Podolski and Rosen presented a criticism of quantum theory in what is now known as
the \emph{EPR paradox}. This led to the development of alternative, so-called ``hidden variable" theories that aimed at
 explaining physical phenomena using classical probability models. It was only until Bell proved his famous inequalities,
 and Aspect's experiments proved the former were violated,  that the controversy could be resolved, showing that quantum
 theory provided the correct description of the phenomena.
An experimental set-up, of the type studied by Aspect,
 consists of a source (Ca atom) located in the middle that emits simultaneously a pair of photons. One of them
 goes to the right, the other to the left. There are two detectors, one at each extreme. There is also a
 polarizer in-between each detector and the source, as depicted in Fig.~\ref{fig:Intro-Ca-atom}.
\begin{figure}
\includegraphics[scale=0.675]{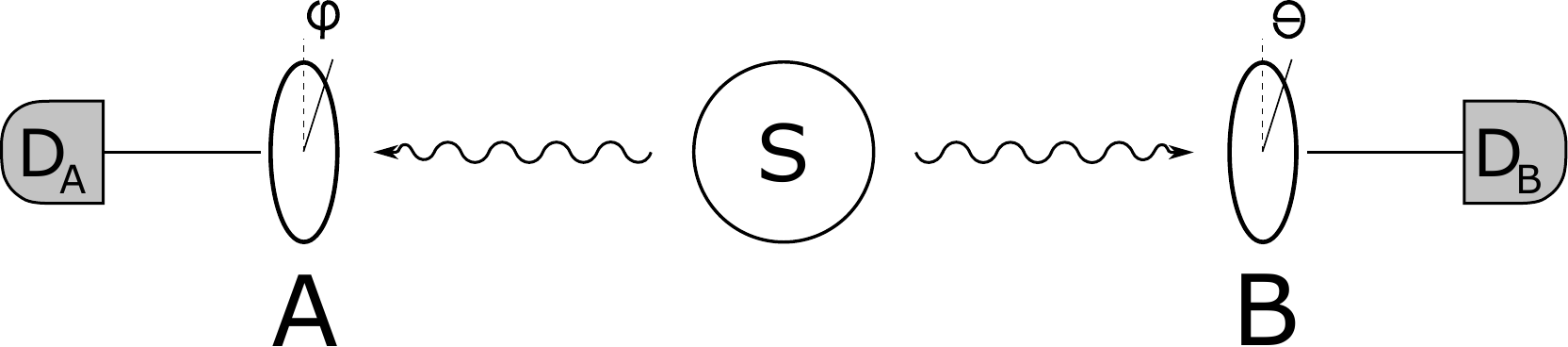}
\caption{A source S (Calcium atom) emits a pair of photons, each of which may be detected by detectors located in the extremes ($D_A$, $D_B$). There is a polarizer between the Ca atom and each of the detectors.}
\label{fig:Intro-Ca-atom}
\end{figure}
Let us now consider the following two propositions:
\bigskip
\\
$A=$``Left photon passes through (is detected by $D_A$) when polarizer's angle is
$\varphi$''.\medskip
\\
$B=$``Right photon passes through (is detected  by $D_B$) when polarizer's angle is $\theta$''.
\medskip
\\
Then, what we learn from experiment is that the \emph{joint probability} for both photons passing through the polarizers, thus being detected, is
\[
p(AB) = \frac{1}{2} \sin^2(\varphi -\theta).
\]
Before exploring why classical probability is in conflict with this result, let us recall how the principles of quantum mechanics allow us to predict it.

 Let $\mathcal H$ denote a Hilbert space which, for simplicity,  will be considered to be finite dimensional.
 In quantum mechanics, probability distributions are obtained from state vectors $|\Psi\rangle \in \mathcal H$  or, more generally, by density matrices $\rho$, that is, self-adjoint positive operators of trace one.
 On the other hand, observables are  described by self-adjoint operators. Let $A$ be such an observable. Let $\sigma(A)= \lbrace a_1, \ldots, a_N\rbrace$ denote its spectrum.
We can then consider its spectral decomposition
\begin{equation}
\label{eq:Intro-specA}
A= \sum_j a_j E_j,\;\;\; \sum_j E_j = \mathds 1,\;\;\;E_iE_j= \delta_{ij}E_j.
\end{equation}
If the state of the system is given by a density matrix $\rho$, the \emph{expectation value} of $A$ is defined as
\begin{equation}
\label{eq:Intro<A>}
\langle A\rangle_\rho:= \mbox{Tr}(\rho A).
\end{equation}
In the case of a \emph{pure} state, the density matrix is a rank-one projector, of the form $\rho=|\psi\rangle\langle\psi|$, $|\psi\rangle\in \mathcal H$, and so (\ref{eq:Intro<A>}) reduces to
 $\langle A\rangle_\psi=\langle \psi| A| \psi\rangle$.  In this way we obtain a probability distribution over $\sigma(A)$, with
\[
p(a_i)= \langle E_i\rangle_\rho.
\]
In fact, it follows  from (\ref{eq:Intro-specA}) and (\ref{eq:Intro<A>}) that $0\leq p(a_i) \leq 1$ and $\sum_i p(a_i)=1$.

Let us use this to give a mathematical description of the experiment described above. The polarization state of a photon can be described using a 2-dimensional Hilbert space.  Let $\lbrace |x\rangle, |y\rangle\rbrace$ denote an orthonormal basis, that can be used to describe, say, horizontal and vertical polarization states. Since we are considering a system consisting of two photons, the Hilbert space of the system can be taken to be  $\mathcal H= \mathds C^2 \otimes \mathds C^2$ (as the photons are supposed to be far from each other at the time of detection, the symmetrization postulate can be ignored).
The 2-photon state, as produced by the emission from an excited state of the Ca atom, can be described by the following state vector:
\begin{equation}
\label{eq:Intro-2-photon-state}
|\psi\rangle=\frac{1}{\sqrt 2}\left(|x\rangle \otimes |y\rangle -|y\rangle\otimes |x\rangle\right).
\end{equation}
For a polarizer with polarization axis pointing at an angle $\alpha$ we can use
 $P(\alpha)$, defined as the  \emph{projector} onto $\cos\alpha |x\rangle +\sin \alpha |y\rangle$.
Since we have two polarizers, we introduce the following projectors ({\it cf.} Fig.~\ref{fig:Intro-Ca-atom}):
\[
P_A(\varphi)= P(\varphi)\otimes \mathds 1,\;\;\;P_B(\theta)= \mathds 1 \otimes P(\theta).
\]
It is important to remark that these two operators correspond to compatible observables, in the sense that the measurement
of one of them does not affect the result of the other, \emph{i.e.}, they are commuting observables:
\[
[P_A(\varphi),P_B(\theta)]=0.
\]
\begin{xca}
Show that
\[
\langle\psi|P_A(\varphi)|\psi\rangle =   \langle\psi|P_B(\theta)|\psi\rangle= \frac{1}{2}.
\]
Also show that, for the \emph{joint} measurement of the two polarization states, one obtains
\begin{eqnarray}
\label{eq:Intro-PA-PB}
\langle\psi|P_A(\varphi)P_B(\theta)|\psi\rangle & =& \langle\psi|(\mathds 1-P_A(\varphi))(\mathds
1-P_B(\theta))|\psi\rangle\nonumber\\
&=&  \frac{1}{2} \sin^2(\varphi-\theta).
\end{eqnarray}
\end{xca}
The previous exercise shows that the predictions of quantum mechanics for this type of experiment are in accordance with what is actually measured in the laboratory.  In spite of its extreme simplicity, there is an intriguing feature of this result: By appropriately arranging the polarization angles $\theta$ and $\varphi$, we can obtain a \emph{total anticorrelation} for the joint measurements. As we will see, this is due to the fact that the state  (\ref{eq:Intro-2-photon-state}) is an \emph{entangled state}. That these kind of correlations cannot be obtained from a (local, realistic) classical theory is at the core of the original  EPR controversy. For an interesting discussion of these issues in the context of an actual experimental situation, we recommend~\cite{Guzman2015}.

Our immediate aim will  therefore be to understand where exactly classical probability fails at describing the results of
such experiments. For this purpose we will assume an approach to (classical) probability based on propositions and degrees
of plausibility, following Jaynes~\cite{Jaynes2003}.

In this setting, the objects to be considered are the following:
\begin{itemize}
\item[$\cdot$] A set of propositions:
$\lbrace A,B,C,\ldots\rbrace$, each one of which can take on (only) two values: true or false.
\item[$\cdot$]
Logical operations, that can be performed on the set of propositions:
\begin{itemize}
\item Conjunction, or logical product (AND): $AB$. It is true if and only if  both $A$ and $B$ true.
 \item Disjunction, or logical sum (OR): $A+B$. It is true if  at least one of them true.
 \item Negation (NOT): $\bar A$, with opposite truth value as $A$.
\end{itemize}
\end{itemize}
These logical operations are assumed to satisfy the defining rules of a \emph{Boolean algebra}:
\begin{itemize}
\item[$\cdot$] Idempotency: $AA= A\;\;,\;\; A+A=A$
 \item[$\cdot$] Commutativity: $AB = BA \;\;,\;\; A+B = B+A$
 \item[$\cdot$] Associativity:
 \begin{itemize}
\item[(A1)] $A(BC)= (AB)C \equiv ABC$ \item[(A2)] $A+ (B+C)= (A+B)+ C\equiv A+B+C$
 \end{itemize}

 \item[$\cdot$] Distributivity:
 \begin{itemize}
\item[(D1)] $A(B+C)= AB + AC$ \item[(D2)] $(A+ B)(A+C)= A + BC$
 \end{itemize}
\item[$\cdot$] Duality (De Morgan's laws): $\overline{AB}= \bar A + \bar B, \;\;\;\;\overline{A+B}= \bar A\,  \bar B$
\end{itemize}

\begin{xca}
Show that the proposition ``$A\Rightarrow B$ '' is equivalent to ``$A= A\,B$''.
\end{xca}

\begin{xca}
Show that (D2) follows from the other rules.
\end{xca}
Now, given a proposition $A$, a probability (or ``degree of plausibility'', \emph{cf.}~\cite{Jaynes2003}) $p(A)$ can be assigned to it, under the following basic assumptions:
\begin{itemize}
\item[I.]  Let $p(A)\in \mathds R$ denote the probability, or  degree of plausibility, of a given proposition $A$. Then we assume that:
\begin{itemize}
\item $p(f(A_1,A_2,\ldots,A_n))\in \mathds R$, for  any logical function $f$ of the propositions $A_1,A_2,\ldots,A_n$
\item Using the notation $p(A|B)$ for the conditional probability  that $A$ is true, given $B$ is true, we assume
that $p(A|C)>p(B|C)$ whenever $A|C$ more plausible than $B|C$.
\end{itemize}
\item[II.] Given a proposition $C$, let us suppose that we improve our state of knowledge, obtaining a
new proposition $C'$. If as a result  $A$ becomes \emph{more} plausible, \emph{i.e.}  $p(A|C')>p(A|C)$,
but $p(B|A C')=p(B|A C)$, then  $p(AB|C')\geq p(AB| C)$ should hold (this is dubbed the \emph{common sense} assumption by Jaynes ~\cite{Jaynes2003}).
\end{itemize}
From I and II above we may obtain, under very general assumptions (like consistency), the following two basic rules:
\begin{itemize}
\item Product Rule:\begin{eqnarray}\label{eq:prod-rule} P(AB|C) &=& P(A|BC)P(B|C)\\
                                                                  &=& P(B|AC) P(A|C)\nonumber \end{eqnarray}

\item Sum Rule:
 \begin{equation}\label{eq:sum-rule}
P(A|C) + P(\bar A|C) = 1.
 \end{equation}
\end{itemize}
The reader is invited to consult Jaynes for a comprehensive exposition of this point of view, including many illustrative derivations and examples.

Now, in order to return to our experiment, let us define the following ``coincidence'' function $f$, for two given
propositions $A$ and $B$:
\begin{equation}
f(A,B):= A \,B +\bar A\,\bar B.
\end{equation}
\begin{xca}
Given 4 propositions $A_1,A_2,B_1,B_2$, use the product and sum  rules to show that the following proposition is true:
\begin{equation}
f(A_1,B_1) \Rightarrow f(A_1,B_2)+ f(A_2,B_2)+ f(A_2,B_1).
\end{equation}
\end{xca}
Since  $p(A|X)\leq p(B|X)$ whenever $A\Rightarrow B$ holds (product rule) we obtain, from the previous exercise,
\medskip\\
``Bell's Inequality'':
\begin{equation}
\label{eq:Intro-Bell-CHSH}
 p(f(A_1,B_1)) \leq p(f(A_1,B_2))+ p(f(A_2,B_2)) + p(f(A_2,B_1)).
\end{equation}
Referring back to figure \ref{fig:Intro-Ca-atom}, let us consider the following propositions ($i,j=1,2$):
\medskip\\
\noindent
$A_i=$ ``Left photon passes through (is detected by $D_A$) when polarizer's angle is $\varphi_i$''.
\medskip\\
$B_j=$ ``Right photon passes through (is detected  by $D_B$) when polarizer's angle is $\theta_j$''.

The inequality (\ref{eq:Intro-Bell-CHSH}) should hold true for all choices of $\varphi_i,\theta_j$. But this then implies:
\[
\sin^2(\varphi_1-\theta_1) \leq \sin^2(\varphi_1-\theta_2) + \sin^2(\varphi_2-\theta_2)  + \sin^2(\varphi_2-\theta_1).
\]
For the choice $\varphi_1=0,\;\varphi_2= \frac{\pi}{3},\; \theta_1=\frac{\pi}{2}, \; \theta_2=\frac{\pi}{6}$, this
means:
\[
\sin^2\left(\frac{\pi}{2}\right) \leq \sin^2\left(\frac{\pi}{6}\right) + \sin^2\left(\frac{\pi}{3}- \frac{\pi}{6}\right)  + \sin^2\left(\frac{\pi}{3}-\frac{\pi}{2}\right),
\]

\[
\sin^2\left(\frac{\pi}{2}\right ) \leq \sin^2\left(\frac{\pi}{6}\right) + \sin^2\left(\frac{\pi}{6}\right)  + \sin^2\left(-\frac{\pi}{6}\right),
\]
\[
1\leq \frac{3}{4}.
\]
This contradiction means that the assumptions we have considered above (which are at the basis of classical probability) do not apply in the quantum realm. So, definitely, quantum theory leads to a very different type of probability theory. In the next section we will explore some of the more notorious differences between classical and quantum probabilities.

\subsection{Classical versus quantum probability}
Until now we have avoided any mention of propositions in terms of set theory, which is the basis of the
 Kolmogorov axiomatics. In the finite dimensional case, at least, they turn out to give equivalent structures.
 Let us then consider the main properties of a (classical) probability theory, formulated in terms of set theory and,
for simplicity,  in the finite dimensional context. Let us consider a finite set $\Omega=\lbrace x_1, x_2,\ldots,x_N\rbrace$, regarded here as the \emph{sample space}. The   \emph{event space}  $\mathcal E$ is  a certain collection of subsets of $\Omega$, that  must contain the empty set  and be closed under
complements and unions.

The set of  events forms a Boolean algebra, under the standard set theoretic operations:
            \begin{itemize}
               \item[$\cdot $]  AND: $A \cap B$,
               \item[$\cdot $] OR: $A \cup B$,
               \item[$\cdot $]  NOT: $A^c= \Omega\setminus A$.
             \end{itemize}
A probability distribution is then defined as a map $p:\mathcal{E}\rightarrow [0,1]\subset \mathds R$, such that
\begin{itemize}
 \item[(i)]  $p(\emptyset)= 0,\;\;$ $p(\Omega)=1$,
\item[(ii)] $p(\bigcup_{k=1}^n A_k)=\sum_{k=1}^n p(A_k),$ for $A_1,A_2,\ldots,A_n\in \mathcal{E}$ pairwise disjoint.
\end{itemize}
\begin{remark}
Notice that the space of probability distributions is a convex set, for if $p_1$ and $p_2$ are two probability
distributions, then $\lambda p_1+ (1-\lambda)p_2$
is again a probability distribution, provided $0\leq \lambda \leq1$. But not only is the space of probability distributions a convex set, it is a special kind of convex set: a \emph{simplex}. To see this, all we have to do is to consider the following ``extremal'' distributions, $p^{(1)}, p^{(2)},\ldots, p^{(N)}$, defined by
   \[
      p^{(i)}(x_j):= \delta_{ij}.
   \]
It is clear that any $p$ can be written as a \emph{unique} convex combination of these extremal distributions, as we have
\[
p= \sum_i \lambda_i p^{(i)},
\]
with $\lambda_j=p(x_j)$.
\end{remark}
A random variable (or ``observable'') is a function $f:\mathcal E \rightarrow \mathds R$. We call the space of all such functions $\mbox{Obs}(\Omega)$.
Notice that $\mbox{Obs}(\Omega)$ forms a commutative algebra.

Summarizing, in the finite-dimensional case we have the following structure:
\begin{itemize}
\item[$\cdot$]  The event space ($\mathcal E\subseteq \mathcal P(\Omega)$) forms a Boolean algebra.
\item[$\cdot$]  The
space of probability distributions is \emph{convex} and furthermore  has the structure of a simplex: Every probability
distribution $p$ can be uniquely
 written as convex combination of   ``extremal" distributions.
\item[$\cdot$] The space of  observables (random variables) has the structure of a commutative algebra.
\end{itemize}
\begin{remark}
In cases where the sample space is not a finite set (like in classical mechanics) we need a suitable generalization  of the above definitions.
This is afforded by measure theory, in the following way: Sample and event spaces are now replaced by  a measurable space  $(\Omega, \mathcal E)$,
 where  $\mathcal E$ is a  suitable $\sigma$-algebra of $\Omega$.
Probability distributions are then defined as \emph{normalized, positive measures}. In particular, we can now handle countable additivity:
\[
p\left(\bigcup_{k=1}^\infty A_k\right)=\sum_{k=1}^\infty p(A_k), \;\;\;\mbox{for}\;\; A_j\;\;\mbox{pairwise disjoint}.
\]
Finally, the space of observables is still a commutative algebra, for the product and sum  of two measurable functions
is a measurable function. A similar remark applies to the convex structure of the space of probability distributions.
\end{remark}
How do these structures show up in classical physics?  In classical mechanics, for example,  the dynamics of a system can be described in
terms of canonical variables ``position'' ($q_i$) and ``momentum'' ($p_j$), that give rise to the \emph{phase space} of the system. It
follows from Hamilton's variational principle that if  $H(q,p)$ is the Hamiltonian of the system, the canonical variables will evolve
along solutions to Hamilton's equations:
\begin{eqnarray}
\label{eq:Intro-Hamilton-eqns}
\frac{dp_i}{dt} &=& -\frac{\partial H}{\partial q_i}, \nonumber\\
\frac{dq_i}{dt} &=& \;\;\,\frac{\partial H}{\partial p_i},\;\;\; i=1,\ldots, n.
\end{eqnarray}
Usually the phase space is the cotangent bundle $T^*Q$ of some configuration space $Q$. The canonical variables $(q,p)$ are then
local coordinates on $T^*Q$ and the \emph{observables} are smooth functions on phase space: $C^\infty(T^*Q)$.
Since time evolution is given by Hamilton's equations or, at the level of observables by
 $\frac{df}{dt} = \lbrace H,f \rbrace$, with $\lbrace \cdot,\cdot\rbrace$ denoting the Poisson bracket, one
 would think that there is no much space for a probabilistic model here since, given a set of initial conditions, by solving the
  equations of motion we are able, in principle, to predict the position and momenta of all the particles.  But, as we learn from
   classical statistical physics, for  $n >> 1$ ($n\sim 10^{23}$) we certainly need a statistical approach! Therefore we are forced
   to introduce probability distributions, expressed in terms of probability densities $\rho(q,p)$, so that the average value of an
    observable $f\in C^\infty( T^*Q)$ is given by
 \[
\langle f \rangle = \int  f(q,p) \rho(q,p) d\mu,
\]
where  $d\mu=d^n qd^np$ is the Liouville measure on phase space. For instance, for the \emph{canonical ensemble} (used
to describe a subsystem embedded in a thermal bath at temperature $T$), we have $\rho(q,p) \propto e^{-H(q,p)/k_B T}$.
Now we can return to the case of a few point particles and interpret the state of the system at a given time, usually
defined as just a point $(q_0,p_0)$ in phase space, as an extremal probability distribution (a ``pure state''), for
which the probability density is just a Dirac delta distribution:
 $\rho(q,p)= \delta(q-q_0, p-p_0)$. In any case, we see that classical physics can be regarded as a special type of probability
 theory, where the three properties of classical probability highlighted above still hold, \emph{i.e.},
 every classical system of point particles can be understood as a probability theory, in the
Kolmogorov sense.

What about Quantum Mechanics? As discussed above, probability distributions are obtained from state vectors
$|\Psi\rangle$ in a  Hilbert space $\mathcal H$,  or more generally from density matrices, $\rho$, whereas observables
are described by self-adjoint operators. Consider the spectral decomposition of a self-adjoint operator $A$, as in
(\ref{eq:Intro-specA}). Assuming a non-degenerate, discrete spectrum, we notice that the projectors $E_i$, having as
spectrum the set $\lbrace 0, 1 \rbrace$, can be regarded as ``indicators of events''. Now, there is a correspondence
between projections $E:\mathcal H \rightarrow \mathcal H$ and subspaces $V\subset \mathcal H$. The partial order  that
we naturally obtain by inclusion then gives rise to the structure of an ``orthocomplemented lattice of proposition'',
where the corresponding operations are defined as
    \begin{eqnarray*}
    \mbox{OR:} \;\; V_1\vee V_2 &\longleftrightarrow& \mbox{span}(V_1,V_2)\\
    \mbox{AND:}\;\; V_1\wedge V_2 &\longleftrightarrow& V_1 \cap V_2\\
    \mbox{NOT:} \;\;\;\;\;\;\;V' \;\;\; &\longleftrightarrow&  V^\perp.
    \end{eqnarray*}
An important feature of this system is that it \emph{does not} give rise to a  Boolean algebra structure. The reason for this is that the orthogonal
complement is not the only possibility for a complement in this lattice. This in turn implies the breakdown of the distributive law, which is part of the definition of a Boolean algebra.
As in the classical case, the state space (here the space of density matrices) is a convex space. But, in contrast to
the classical case, this space is not a simplex. A consequence of this is that the representation of a  density matrix as
 a convex sum of  pure states is (highly) non-unique. Finally, in quantum theory the space of
observables forms a non-commutative algebra, in contrast to the classical case, where the algebra is commutative.

It is now clear that, in principle, both approaches (classical and quantum) can be considered in order to describe physical
phenomena in probabilistic terms.  In fact, EPR-like arguments are in favor of a ``local-realistic'' point of view according to
which even quantum phenomena should be explained in terms of classical probability.
That this is not the case is proved by Bell's  inequalities and their violation, experimentally verified by Aspect in the 80's.
So quantum theory can be regarded as a kind of ``non-commutative'' probability theory. One of the points of these lectures is
that, when formulated in the language of operator algebras, both quantum and classical physics can be described in a unified way.
We will also take advantage of this formulation to discuss  a similar phenomenon, occurring in topology and geometry: In
non-commutative geometry~\cite{Connes1995} the generalization of topological/geometric notions to the non-commutative setting
has been mainly achieved by first expressing them in algebraic terms and then realizing that dropping the commutativity assumption
 allows for  vast  generalizations.



\section{Aspects of operator algebras  in quantum physics}
\label{sec:2}


\subsection{Observable algebras and states}
\label{sec:algebras-and-states}
Let $\mathcal H$ be a separable Hilbert space. Recall that, given a linear operator $T:\mathcal H\rightarrow \mathcal H$, $T$ is
said to be bounded if there is some $C>0$ such that
\[
\|T(x)\| \leq C \| x\|
\]
for all $x$ in $\mathcal H$, where the norm is the one induced by the inner product: $\|x\|^2= (x,x)$. If $T$ is a
bounded operator, we define its norm as follows:
\[
\| T \|:=\; \stackrel[x \neq 0]{\mbox{sup}}{ } \frac{\|T(x)\|}{\|x\|}.
\]
One then checks that for $T$ and $S$ bounded the inequalities
\begin{eqnarray*}
\| T + S\| \leq \| T \| +\|S\|,\\
\| T \, S \| \leq \| T \| \, \| S \|,
\end{eqnarray*}
are satisfied. From the completeness of $\mathcal H$ it follows that the space
\[
\mathcal B (\mathcal H)= \lbrace T: \mathcal H \rightarrow \mathcal H\,|\,T\; \mbox{is linear and bounded}\rbrace
\]
is a complete normed space. It is also an algebra and has an involution ``$*$'' given by the adjoint: $T^*:=T^\dagger$. We thus may call  $\mathcal B (\mathcal H)$ the ``$*$-algebra of bounded operators on $\mathcal H$''
\begin{xca}
Show that for $T\in \mathcal B (\mathcal H)$ we have:
\[
\|T^* T\| = \|T\|^2.
\]
\end{xca}
We now abstract these notions and make them independent of any underlying Hilbert space. As we will see below, a lot will be gained from this, since we will then be able to study \emph{representations} of the (abstract) operator algebras, and the equivalence/inequivalence of these representations will have a deep physical meaning.
\begin{definition}A \emph{Banach space} is a normed vector space $(V,\|\cdot\|)$ which is complete\footnote{ ``all Cauchy sequences converge''} with respect to the (metric induced by) $\|\cdot\|$.
\end{definition}
\begin{definition}
A \emph{Banach algebra}  is a Banach space
  $(\mathcal A,\|\cdot\|)$ which is also an algebra,  with the property that
$\|ab\|\leq \|a\|\|b\|,\;\;\forall a,b \in \mathcal A$ (which in turn implies that multiplication is a continuous
operation).
\end{definition}
\begin{definition}
An \emph{involution} on a (complex) algebra $\mathcal A$ is a map $*:\mathcal A \rightarrow \mathcal A$ such that, for any
$a,b\in \mathcal A$ and $\mu,\nu \in \mathds C$:
\begin{itemize}
\item[i.] $(\lambda a + \mu b)^* = \bar \lambda a^* + \bar \mu b^*$
\item[ii.] $(ab)^*= b^* a^*$
\item[iii.] $(a^*)^*= a$
\end{itemize}
Notice that this is, basically, an abstraction of the adjoint operation on $\mathcal B (\mathcal H)$.
\end{definition}
\begin{definition} A \emph{$C^*$-algebra} is a Banach $*$-algebra $(\mathcal A,\|\cdot\|, *)$ with the
 fundamental property
\begin{equation}
\label{eq:basics-cstar-ppty}
\|a^* a\| = \|a\|^2.
\end{equation}
\end{definition}
\begin{example}\label{example:Hausdorff}
Let $M$ be a compact, Hausdorff topological space and set $\mathcal A= C(M)$, the space of continuous complex functions
on $M$ and, for $f\in C(M)$, set $f^*(x)=\overline{f(x)}$, and $\|f\|=\stackrel[x\in M]{\mbox{sup}}{ }|f(x)|$. Then
$(C(M), \|\cdot\|,*)$ is a $C^*$-algebra.
\end{example}
\begin{example} Let $\mathcal H$ be a Hilbert space. Then, as we expect, $(\mathcal B(\mathcal H), \|\cdot\|, *)$ is a $C^*$-algebra, where the norm is the operator norm and the involution is given by the adjoint operation.
\end{example}
Later we will see that the list of examples of $C^*$-algebras is basically exhausted by the previous two examples, a remarkable fact. A motivation  to work with algebras of bounded operators comes from physics since, as can be shown, the \emph{canonical commutation relations} (CCR)
\[
[\hat q ,\hat p] =i\hbar \mathds 1
\]
cannot be implemented by means of bounded operators $\hat q$ and $\hat p$. Then, although these commutation relations have a very clear meaning from the physical point of view, mathematically they correspond to \emph{unbounded} operators and hence issues like self-adjointness, domains, etc. come into play, which make them more difficult to deal with. One can, nevertheless, replace the CCR by their exponentiated (or Weyl) form, as follows. First we define operators $U(a)$ and $V(b)$,
for $a,b\in \mathds R$, acting on wave functions as follows:
\begin{eqnarray}
\label{eq:U(a)V(b)def}
\left(U(a) \psi\right)(x) & := & \psi (x -\hbar a),  \\
\left(V(b) \psi\right)(x) & := & e^{-ib x}\psi (x).\nonumber
\end{eqnarray}
By Stone's theorem, it follows that $U(a)= e^{-ia\hat p}$ and $V(b)=e^{-ib \hat q}$.
\begin{xca}
\label{ex:Weyl-CCR} Show that the operators $U(a)$  and $V(b)$ satisfy the following commutation relations:
\begin{eqnarray}
\label{eq:Weyl-CCR}
U(a_1)U(a_2) &=& U(a_1+a_2),\nonumber\\
V(b_1)V(b_2) &=& V(b_1+b_2),\\
U(a)V(b)&=& e^{i\hbar ab} V(b) U(a).\nonumber
\end{eqnarray}
\end{xca}
As we will see later on, these operators give rise to a $C^*$-algebra. In fact, we will see that to any symplectic vector space we can canonically associate a $C^*$-algebra (its Weyl $C^*$-algebra). The unitary representations of these algebras, in the case
of infinite dimensional symplectic vector spaces, play a prominent role in the study of quantum field theory on curved spacetimes.
\begin{example}
Consider a (complex, involutive, with unit $\mathds 1$) algebra $\mathcal A$ generated by elements
$a_1,a_2,\ldots, a_n, \mathds 1$, subject to the following \emph{canonical anticommutation relations (CAR)}:
\begin{equation}
\label{eq:basics-CAR1}
a_i a_j^* + a_j^* a_i = \delta_{ij} \mathds 1,\;\;\;a_i a_j + a_j a_i =0.
\end{equation}
If we want to define a $C^*$ norm on this algebra, it has to be such that the $C^*$-property (\ref{eq:basics-cstar-ppty})
is satisfied.  But from (\ref{eq:basics-CAR1}) we obtain
$(a_i^*a_i)^2 = a_i^*a_i$, which along with  (\ref{eq:basics-cstar-ppty}) implies $\|a_i^*a_i\|^2= \|a_i^* a_i\|$. The only option we have, then, is to define $\|a_i\|=1$ for every $i\in \lbrace 1,\ldots,n\rbrace$.
\end{example}
\begin{xca}
\label{ex:CAR-algebra-finite-dim} Generalize the previous example to the case of a finite dimensional Hilbert space
$\mathcal H\cong \mathds C^n$, in the following way. Let $\langle\, \cdot\, | \,\cdot\,\rangle$ denote the inner product. For
any pair of vectors, $u,v\in \mathcal H$, consider  generators $a(u), a(v)$ satisfying the following relations (CAR):
\[
\lbrace a(u), a(v)^*\rbrace= \langle u|v\rangle \mathds 1,\;\;\; \lbrace a(u), a(v)\rbrace =0.
\]
where $\lbrace A,B \rbrace \equiv AB +BA$.
How must  $\|a(u)\|$  be defined in this case?
\end{xca}

The (mathematical) notion of \emph{state} for a $C^*$-algebra is also very close to the notion of quantum state. To appreciate this, consider a physical system modeled by the algebra $\mathcal B(\mathcal H)$ of bounded operators on a Hilbert space $\mathcal H$. As mentioned in the previous section, in physics we distinguish between two kinds of states:
\begin{itemize}
\item[$\cdot$] Pure states: These are described by normalized vectors $|\psi\rangle\in \mathcal H$  (actually, by rays on Hilbert space). If $A$ is an observable, then its expectation value is defined as
    \[ \langle A\rangle_\psi := (\psi, A\psi)\equiv \langle \psi |A |\psi\rangle.\]
\item[$\cdot$] Mixed states: They are described by operators $\rho:\mathcal H\rightarrow \mathcal H $ such that $\rho^\dagger=\rho>0$, $\Tr \rho =1$. The expectation value is given, in this case, by
    \[
    \langle A\rangle_\rho := \Tr_{\mathcal H}(\rho A).
    \]
\end{itemize}
    \begin{xca}
    Show that the condition for a mixed state (described by a density matrix $\rho$) to be pure is  $\rho^2=\rho$. Thus, all states can be described by density matrices. The space of all density matrices is naturally a convex space. The extremal elements then turn out to be the pure ones. Check this assertion in the simple case $\mathcal H=\mathds C^2$.
    \end{xca}
We therefore see that a state can be regarded as a mapping $A\mapsto \langle A\rangle$ from the observable algebra to the complex numbers. It must be possible, then, to express the properties $\rho>0$ and $\Tr \rho =1$ only in terms of this mapping. This leads us to the general definition of state for a $C^*$-algebra.
\begin{definition}
\label{def:basics-state}
Let $\mathcal A$ denote a (unital) $C^*$-algebra, with unit $\mathds 1$. A positive linear functional
\[
\omega: \mathcal A\rightarrow \mathds C
\]
such that $\omega (\mathds 1)= 1$ is called a \emph{state}. We will denote the set of all states on $\mathcal A$ with $\mathcal S_{\mathcal A}$.
\end{definition}
\begin{remark}
In this context, positivity of $\omega$ means that $\omega(a^*a) \ge 0$ for all $a\in \mathcal A$.
\end{remark}
\begin{remark}
If $\mathcal A$ is not unital, we can replace the normalization condition by the condition $\|\omega\|=1$. For unital
algebras, both definitions coincide.
\end{remark}
Having in mind the example of $\mathcal B (\mathcal H)$, it then makes sense to have a notion of spectrum for an abstract $C^*$-algebra. This leads, in turn, to the spectral theorem for $C^*$-algebras~\cite{Werner2000}.
\begin{definition}
Let $a$ be an element of a unital  $C^*$-algebra $\mathcal A$. Then its \emph{spectrum} is defined as the set
\[
\sigma(a):= \lbrace \lambda\in \mathds C\,|\, (\lambda \mathds 1 - a )\;\;\mbox{is not invertible}\rbrace.
\]
\end{definition}
\begin{definition}
Let $a$ be an element of a unital  $C^*$-algebra $\mathcal A$. Then its \emph{spectral radius} is defined as
\[
\rho(a):= \mbox{sup}\lbrace |\lambda|\,|\, \lambda \in \sigma(a)\rbrace.
\]
\end{definition}
\begin{xca} Show that the space $M_n(\mathbb C)$ of $n\times n$ matrices with complex entries is a $C^*$-algebra
with respect to the norm defined as
\[
\| A\|_{\tiny\mbox{max}} := |\lambda_{\tiny\mbox{max}}|^{1/2},
\]
where $\lambda_{\tiny\mbox{max}}$ is the eigenvalue of $A^\dagger A$ with the largest absolute value. Show also that this norm
coincides with the operator norm.
\end{xca}
\begin{remark}
The previous exercise illustrates the remarkable fact that, for a $C^*$-algebra ($\mathcal A,\|\cdot\|, *$), the norm
of any element normal $a\in \mathcal A$ coincides with its spectral radius:
\begin{equation}
\label{eq:spectral-radius-formula}
\|a\| = \sqrt{\rho(a^* a)}.
\end{equation}
This is an important fact, as it links topological properties of the algebra with algebraic ones. In particular, it
implies that the norm in a $C^*$-algebra is \emph{unique}.
\end{remark}

The following two properties will be very useful:
\begin{xca}\label{ex:Cauchy-Schwarz}
Let $\phi: \mathcal A\rightarrow \mathds C$ be a positive linear functional on a $C^*$-algebra. Prove that
\begin{itemize}
\item[($i$)] $\phi (a^* b )= \overline{\phi(b^* a)}$.
\item[($ii$)] $|\phi(a^* b)|^2 \leq \phi (a^* a) \phi (b^* b)$
(``Cauchy-Schwarz inequality'')
\end{itemize}
Hint: Consider $\phi( (\lambda a+b)^* (\lambda a+b))$ as a quadratic form on $\lambda$. As an alternative, you may
try deriving these relations just by thinking of $\phi(a^*b)$ as  ``$\langle a|b\rangle$''.
\end{xca}

Another consequence of positivity is continuity:
\begin{proposition} Let $\phi$ be a positive linear functional on a unital $C^*$-algebra $\mathcal A$. Then $\phi$ is a continuous
linear functional, and $\|\phi\|=\phi(\mathds 1)$.
\end{proposition}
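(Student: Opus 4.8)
The plan is to prove continuity by establishing the bound $|\phi(a)| \le \phi(\mathds 1)\,\|a\|$ for all $a \in \mathcal A$, which simultaneously shows $\phi$ is bounded (hence continuous) with $\|\phi\| \le \phi(\mathds 1)$, and then to note that testing against $a = \mathds 1$ gives the reverse inequality $\|\phi\| \ge \phi(\mathds 1)$, forcing equality. The first step is to reduce to self-adjoint elements: for general $a$, apply the Cauchy--Schwarz inequality from Exercise~\ref{ex:Cauchy-Schwarz} with the pair $(\mathds 1, a)$ to get $|\phi(a)|^2 = |\phi(\mathds 1^* a)|^2 \le \phi(\mathds 1)\,\phi(a^* a)$, so it suffices to control $\phi(b)$ for the positive element $b = a^* a$, whose norm is $\|a\|^2$ by the $C^*$-identity (\ref{eq:basics-cstar-ppty}).

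The core estimate is then: if $b \ge 0$ (meaning $b = c^* c$ for some $c$, equivalently $b = b^*$ with $\sigma(b) \subseteq [0,\infty)$), then $\phi(b) \le \phi(\mathds 1)\,\|b\|$. For this I would invoke the functional calculus / spectral picture for $C^*$-algebras alluded to just before the statement: the element $\|b\|\mathds 1 - b$ is self-adjoint with spectrum contained in $[0, \|b\|]$, hence is itself a positive element (it has a self-adjoint square root inside $\mathcal A$, or equivalently, via Example~\ref{example:Hausdorff} applied to the commutative $C^*$-subalgebra generated by $b$ and $\mathds 1$, it is represented by a nonnegative continuous function). Positivity of $\phi$ then gives $\phi(\|b\|\mathds 1 - b) \ge 0$, i.e. $\phi(b) \le \|b\|\,\phi(\mathds 1)$. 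Since $\phi(b) \ge 0$ as well, we also get $|\phi(b)| \le \phi(\mathds 1)\,\|b\|$. Combining with the Cauchy--Schwarz reduction yields $|\phi(a)|^2 \le \phi(\mathds 1)\cdot \phi(\mathds 1)\,\|a\|^2$, hence $|\phi(a)| \le \phi(\mathds 1)\,\|a\|$ for all $a$.

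The main obstacle is the claim that $\|b\|\mathds 1 - b$ is a positive element when $b \ge 0$ — this is exactly the point where one needs the spectral theorem for $C^*$-algebras (or the Gelfand representation of the commutative subalgebra generated by $b$), since in an abstract algebra ``positive'' cannot be read off without relating it to the spectrum. Everything else is bookkeeping: linearity and positivity of $\phi$, the $C^*$-identity to identify $\|a^*a\| = \|a\|^2$, and one application of Cauchy--Schwarz. Finally, to finish, observe $\phi(\mathds 1) = \phi(\mathds 1^*\mathds 1) \ge 0$ and $\phi(\mathds 1) = |\phi(\mathds 1)| \le \|\phi\|\,\|\mathds 1\| = \|\phi\|$, while the displayed bound gives $\|\phi\| \le \phi(\mathds 1)$; hence $\|\phi\| = \phi(\mathds 1)$, as claimed. (One should also remark $\phi(\mathds 1) > 0$ unless $\phi \equiv 0$, so the statement is non-vacuous.)
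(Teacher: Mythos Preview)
Your proposal is correct and follows essentially the same route as the paper: reduce to positive elements via Cauchy--Schwarz with the pair $(\mathds 1,a)$, then use the continuous functional calculus (equivalently, the Gelfand picture for the commutative subalgebra generated by $b$ and $\mathds 1$) to see that $\|b\|\mathds 1 - b$ is positive whenever $b\ge 0$, and finish with the trivial reverse bound from evaluating at $\mathds 1$. Your identification of the ``main obstacle'' matches exactly where the paper invokes the functional calculus, and your bookkeeping is in fact slightly more careful than the paper's (you write $|\phi(a)|$ rather than $\phi(a)$ and note $\phi(\mathds 1)\ge 0$).
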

\begin{proof}[Proof (\emph{cf.}~\cite{Gracia-Bond'ia2001})]
If $a$ is positive ($a=a^*$ and $\sigma(a)\subseteq \mathbb R_+$) then we have, from the spectral radius formula, that
$\sigma(a)\subseteq [0,\|a\|]$. This means that $\|a\|-\lambda \ge 0$ for all $\lambda\in \sigma(a)$. This can be
restated as follows. Define
\begin{eqnarray*}
f: \sigma(a) & \longrightarrow & \mathbb C \\
\lambda &\longmapsto & f(\lambda):= \|a\| -\lambda.
\end{eqnarray*}
Then $f\in C(\sigma(a))$ and $f\ge 0$. But then it follows (from the continuous functional calculus) that
$\sigma(f(a))=f(\sigma(a))$, so that $f(a)\ge 0$  or, in other words, that  $\|a\|\mathds 1 -a$ is a positive operator. Since $\phi$ is positive and
linear, this implies $\phi(a)\le \phi(\mathds 1) \|a\|$. Now, for an arbitrary $\alpha\in \mathcal A$, apply the
Cauchy-Schwarz inequality from exercise \ref{ex:Cauchy-Schwarz} with $a=\mathds 1$ and $b=\alpha$, and use the $C^*$
property of the norm to obtain $\phi(\alpha)\le \phi(\mathds 1) \|\alpha\|$. It follows that $\phi$ is continuous, with
$\|\phi\| \leq \phi(\mathds 1)$. But we also have
\[
\phi(\mathds 1)= |\phi(\mathds 1)|\leq \|\phi\| \|\mathds 1\| =\|\phi\|,
\]
so that $\|\phi\|=\phi(\mathds 1)$.
\end{proof}
Thus, for unital $C^*$-algebras we conclude that  the set of states over $\mathcal A$ is a convex subset of the continuous dual
$\mathcal A^*$ of $\mathcal A$.

We gather from all this that it makes sense to formulate quantum physics in terms of $C^*$-algebras. The general philosophy will
 be to describe a given physical system in terms of its ``algebra of observables'', which will be here taken to be a $C^*$-algebra
  $(\mathcal A, \|\cdot\|,*)$. Then we will consider the following two ``dual'' notions:
\begin{itemize}
\item[$\cdot$] A \emph{quantum state} will be defined as a state (in the sense of definition \ref{def:basics-state}) on the algebra
 $\mathcal A$.
\item[$\cdot$] An \emph{observable} will be an element of $\mathcal A$.
\end{itemize}
Furthermore, the ``pairing'' between an observable $a$ and a state $\omega$ will be given the physical interpretation of an \emph{expectation value}, and we will write
\[
\langle a\rangle_\omega := \omega (a).
\]
 Regarding the distinction between pure and mixed states, we can now call a state \emph{pure} if it cannot be written
 as a convex
 combination of other states; otherwise we will call it a \emph{mixed} state.

\subsection{The Gelfand-Naimark theorem and the GNS construction}
Before discussing physical applications of the notions introduced in the previous section, we will take the opportunity
to briefly review the characterization of $C^*$-algebras due to Gelfand, Naimark and Segal. The characterization of
commutative $C^*$-algebras (the Gelfand-Naimark theorem) is of fundamental importance, in particular because it leads
to the notion of a \emph{``noncommutative topological space''}~\cite{Connes1995}. Also, the so-called GNS-construction
discussed below is relevant not only because of its role in the characterization problem for noncommutative
$C^*$-algebras, but also because of its striking  consequences for the study of  quantum systems (equilibrium states,
symmetry breaking, inequivalent vacua, among others)~\cite{Haag1996,Bratteli1997}.

\begin{definition}
A \emph{character} of a $C^*$-algebra $\mathcal A$ is a $*$-homomorphism $\mu:\mathcal A\rightarrow \mathbb C$. Let us
denote with $\mathcal M_{\mathcal A}$ the set of all characters of $\mathcal A$.
 \end{definition}
Now, it is easy to see that if $\mu$ is a character, then $\mu(a)\in \sigma(a)$. This, together with the spectral
radius formula (\ref{eq:spectral-radius-formula}), implies that $\|\mu\|\leq
1$, so that $\mathcal M_{\mathcal A}\subset \mathcal A^*_1$, where
\[
\mathcal A^*_1:= \lbrace \phi\in \mathcal A^* \,|\, \|\phi\| \leq 1\rbrace
\]
is the unit ball in the dual space. Since (by  the Banach-Alaoglu theorem~\cite{Rudin1991}) this ball is compact in the
weak-$*$ topology (\emph{i.e.} the one induced by the family of seminorms on $\mathcal A^*$:
$p_\alpha(\phi):=|\phi(\alpha)|$, $\alpha\in \mathcal A$) then  $\mathcal M_{\mathcal A}$ becomes a \emph{compact
topological space} in the subspace topology inherited from $\mathcal A^*$.
 This is quite interesting in view of example
\ref{example:Hausdorff} as this is providing the  converse statement. In fact, for  a commutative,
unital $C^*$-algebra $\mathcal A$, we define the \emph{Gelfand transform} $\mathcal G$ as the map
\begin{eqnarray*}
\mathcal G: \mathcal A &\longrightarrow & C(\mathcal M_{\mathcal A})\\
a & \longmapsto & \hat a,
\end{eqnarray*}
where $\hat a(\mu):=\mu(a)$.
\begin{theorem}[Gelfand-Naimark~\cite{Gelfand1943,Connes1995,Gracia-Bond'ia2001}] \label{thm:Gelfand-Naimark}The Gelfand
 transform is an isometric $*$-isomorphism $\mathcal A\cong C(\mathcal M_{\mathcal A}).$
\end{theorem}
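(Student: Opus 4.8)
The plan is to verify, in turn, that the Gelfand transform $\mathcal G$ is a unital $*$-homomorphism, that it is isometric (hence injective), and finally that it is surjective. The linchpin of the whole argument is the identity $\sigma(a)=\{\mu(a):\mu\in\mathcal M_{\mathcal A}\}$ for every $a\in\mathcal A$. One inclusion, $\mu(a)\in\sigma(a)$, was already observed; for the other I would argue that if $\lambda\in\sigma(a)$ then $a-\lambda\mathds 1$ is non-invertible, hence contained in some maximal ideal $\mathfrak m\subset\mathcal A$, which is automatically closed (the invertibles form an open set), so that $\mathcal A/\mathfrak m$ is a unital Banach algebra which is also a field; by the Gelfand--Mazur theorem, which is exactly where the non-emptiness of the spectrum enters, $\mathcal A/\mathfrak m\cong\mathbb C$, and the quotient map is then a character $\mu$ with $\mu(a)=\lambda$. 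In particular $\mathcal M_{\mathcal A}\neq\varnothing$, and, combined with the spectral radius formula recorded above, this gives $\rho(a)=\sup_{\mu}|\mu(a)|=\|\hat a\|_\infty$ for every $a$.

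With that in hand, the $*$-homomorphism property is mostly formal: linearity, multiplicativity and $\widehat{\mathds 1}=1$ are immediate since each $\mu$ is a unital $*$-homomorphism, and for the involution one needs only $\mu(a^*)=\overline{\mu(a)}$, which, upon writing $a=b+ic$ with $b=\tfrac12(a+a^*)$ and $c=\tfrac1{2i}(a-a^*)$ self-adjoint, reduces to the standard fact that self-adjoint elements of a $C^*$-algebra have real spectrum; this in turn is a one-line estimate using $\|b+is\mathds 1\|^2=\|b^2+s^2\mathds 1\|\le\|b\|^2+s^2$ for real $s$. Isometry then follows from the $C^*$-identity: for any $a$ one has $\|a\|^2=\|a^*a\|=\rho(a^*a)=\sup_\mu|\mu(a)|^2=\|\hat a\|_\infty^2$, so $\|\mathcal G(a)\|_\infty=\|a\|$. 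In particular $\mathcal G$ is injective, and $\mathcal G(\mathcal A)$ is a \emph{closed} subalgebra of $C(\mathcal M_{\mathcal A})$ since $\mathcal A$ is complete and $\mathcal G$ is isometric.

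For surjectivity I would invoke the Stone--Weierstrass theorem on the compact Hausdorff space $\mathcal M_{\mathcal A}$: the image $\mathcal G(\mathcal A)$ contains the constants (as $\widehat{\mathds 1}=1$), is stable under complex conjugation (as $\overline{\hat a}=\widehat{a^*}$), and separates points (if $\mu_1\neq\mu_2$ then some $a\in\mathcal A$ has $\hat a(\mu_1)=\mu_1(a)\neq\mu_2(a)=\hat a(\mu_2)$, essentially by the definition of the weak-$*$ topology on $\mathcal M_{\mathcal A}$). Hence $\mathcal G(\mathcal A)$ is dense in $C(\mathcal M_{\mathcal A})$, and being also closed it equals $C(\mathcal M_{\mathcal A})$; therefore $\mathcal G$ is an isometric $*$-isomorphism.

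The step I expect to be the genuine obstacle is the non-trivial inclusion $\sigma(a)\subseteq\{\mu(a):\mu\in\mathcal M_{\mathcal A}\}$: it is precisely here that commutativity is used in an essential way, to pass from maximal ideals to one-dimensional quotients, and it is here that one must import the non-emptiness of the spectrum through Gelfand--Mazur. Everything downstream of that identity — the $*$-homomorphism property, the isometry via the $C^*$-identity, and the Stone--Weierstrass surjectivity argument — is either routine or a direct appeal to the $C^*$ property and the spectral radius formula already available.
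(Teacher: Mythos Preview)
The paper does not actually prove this theorem: immediately after the statement it says ``Rather than reviewing the proof of this theorem, we will consider some basic examples\ldots'' and moves on to illustrations. So there is nothing in the text to compare your argument against.

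That said, your proposal is the standard textbook proof and is correct. One small remark on bookkeeping relative to the paper's conventions: the paper \emph{defines} a character to be a $*$-homomorphism, so the identity $\widehat{a^*}=\overline{\hat a}$ is automatic once you know $\mu\in\mathcal M_{\mathcal A}$. Where your real-spectrum argument is genuinely needed is one step earlier, when you manufacture a character from a maximal ideal $\mathfrak m$: the quotient map $\mathcal A\to\mathcal A/\mathfrak m\cong\mathbb C$ is \emph{a priori} only an algebra homomorphism, and it is precisely the fact that self-adjoint elements have real spectrum that forces it to respect the involution and hence to land in $\mathcal M_{\mathcal A}$. With that placement of the argument, every step is in order.
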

Rather than reviewing the proof of this theorem, we will consider some basic examples that should provide a good
intuition  about the correspondence between  commutative $C^*$-algebras and (locally compact, Hausdorff) topological
spaces.
\begin{example} Consider the unital, commutative $C^*$-algebra $\mathcal A$ generated by a unitary element  $u$
(\emph{i.e.} $u^* u= u u^* = \mathds 1$), with norm fixed by the condition~\cite{Paschke2001} that
\[
\|\mathds 1 +e^{-i\alpha} u\|= 2,
\]
for all $\alpha\in [0,2\pi)$. Theorem \ref{thm:Gelfand-Naimark} states that this algebra must be the  function algebra
of a topological space which, as a set, is precisely the set $\mathcal M_{\mathcal A}$ of all characters of $\mathcal
A$. Since the algebra is generated by $u$, a character $\mu$ is fixed by its value on $u$. But $|\mu(u)|^2=\mu(u^* u)=1$, and
so every character is of the form $\mu(u)=e^{i\theta}$, \emph{i.e.} $\mathcal M_{\mathcal A}\subseteq S^1$. In order to
see that actually $\mathcal M_{\mathcal A}= S^1$, we proceed as in~\cite{Paschke2001} by noticing that
  from example \ref{example:Hausdorff} and from the uniqueness of the norm, it follows that
$\|a\|=\mbox{sup}_{\mu\in \mathcal M_{\mathcal A}}|\mu(a)|$, so that we must have
\[
\stackrel[\mu\in \mathcal M_{\mathcal A}]{\mbox{sup}}{ }|1+e^{-i\alpha}\mu(u)|=2.
\]
But this can only happen if for every $\alpha\in [0,2\pi)$ there is a character $\mu$ with $\mu(u)=e^{i\alpha}$.
\end{example}
\begin{example} The 2-sphere $S^2$ can also be easily characterized in terms of its algebra of continuous  functions.
 In this case, the algebraic structure is well-known, as $C(S^2)$ is generated by the spherical harmonics $Y_{lm}$,
 their product being given as in the usual Clebsch-Gordan decomposition. More details on how  $S^2$ can be obtained
 as the character space of such an algebra (this involves an appropriate definition of the norm, as in the previous example)
  can be found in~\cite{Paschke2001}.
\end{example}
\begin{example} Take $\mathcal A= C(S^2)$ as above (with the sup norm) and consider the subalgebra $\mathcal A_+$ consisting
of all even functions on $\mathcal A$, that is, functions $f$ such that $f(-x)=f(x)$. Then   $\mathcal A_+\cong
C(\mathbb R P^2)$. This fact has been exploited in studies of the spin-statistics connection in quantum mechanics~\cite{Paschke2001,Papadopoulos2004,Papadopoulos2010,Reyes-Lega2011}.
\end{example}
\begin{remark}
The isomorphism provided by theorem \ref{thm:Gelfand-Naimark} is actually an equivalence between the categories of
commutative $C^*$-algebras and locally  compact, Hausdorff topological spaces. This means, among other things,  that
all information regarding the topology of $M$ is encoded in the algebra $C(M)$: An open set in $M$ can be equivalently
described in terms of a ideal in $C(M)$, closed sets are described by quotient algebras, metrizability of $M$ amounts to
 separability of $C(M)$, and so on~\cite{Wegge-Olsen1994}. This is the reason why the Gelfand-Naimark theorem is considered one of the main
  sources for the development of noncommutative geometry~\cite{Connes1995,Gracia-Bond'ia2001}.
\end{remark}
Now we turn to the GNS (Gelfand-Naimark-Segal) construction. Let $\mathcal A$ be a $C^*$-algebra, and $\omega$ a state thereon.
 The GNS construction furnishes a representation of $\mathcal A$ on some ($\omega$-dependent) Hilbert space. The basic idea is to
 use multiplication in $\mathcal A$ in order to obtain a linear action of $\mathcal A$ on a vector space. So we begin by regarding
 $\mathcal A$ as a vector space $\hat{\mathcal A}$ (\emph{i.e.} we just ``forget'' multiplication). Even though  the underlying
 spaces are equal, it will be convenient to distinguish the \emph{algebra}, with elements $a\in \mathcal A$, from the underlying
  \emph{vector space}, with elements $|a\rangle \in \hat{\mathcal A}$. Recalling the Cauchy-Schwarz inequality
  (exercise \ref{ex:Cauchy-Schwarz}), we realize that it makes sense to introduce the following sesquilinear form on $\hat {\mathcal A}$:
\begin{equation}
\label{eq:w(a*b)}
\langle a | b \rangle_\omega:= \omega (a^* b).
\end{equation}
This is ``almost'' an inner product, as it may happen that $\langle a | a \rangle_\omega=0$ for some $a\neq 0$. But ignoring that
fact for a moment, we notice that the product on $\mathcal A$ can be used to make $\mathcal A$ act as a linear operator on
 $\hat {\mathcal A}$:
\begin{eqnarray*}
\mathcal A \times \hat{ \mathcal A} & \longrightarrow & \hat{ \mathcal A} \\
a, |b\rangle & \longmapsto & a\cdot | b \rangle := | a b\rangle.
\end{eqnarray*}
Now, in order to have a Hilbert space representation, we have to ``fix'' the problem with (\ref{eq:w(a*b)}). This is done as
follows. \begin{xca}
Define $\mathcal N_\omega:=\lbrace a\in \mathcal A\,|\, \omega(a^* a)=0\rbrace$, and show that it is a closed left-ideal of
$\mathcal A$.
\end{xca}
We therefore obtain a Hilbert space $\mathcal H_\omega$ given by the completion of $\hat {\mathcal A}/\mathcal
N_\omega$ with respect to the inner product $\langle [a] | [b]\rangle_\omega:= \omega (a^*b)$, where we use
$|[a]\rangle$ to denote the equivalence class of $a$ in  (the completion of) $\hat {\mathcal A}/\mathcal N_\omega$.
With this we have obtained a $*$-representation of
 $\mathcal A$ by bounded operators acting on $\mathcal H_\omega$:
\begin{eqnarray*}
\pi_\omega: \mathcal A & \longrightarrow & \mathcal B(\mathcal H_\omega)\\
a & \longmapsto & \pi_\omega (a),
\end{eqnarray*}
where $\pi_\omega (a) |[b]\rangle:= |[ab]\rangle$.
\begin{xca}
Obtain the following inequality:
\[
\sqrt{\omega(a^* a)} \leq \|\pi_\omega (a)\| \leq \|a\|.
\]
\end{xca}
Although the representation $\pi_\omega$ is a $*$-homomorphism, it is not isometric. But it can be shown that for every
$a\in \mathcal A$ there is a state $\omega$ such that $\omega(a^*a)=\|a\|^2$. Then, if we consider the direct sum
\[
\mathcal H = \bigoplus_{\omega\in \mathcal S_{\mathcal A}} \mathcal H_\omega:=\big\lbrace
 (\xi_\omega)_{\omega\in \mathcal S_{\mathcal A}}\,|\,\xi_\omega\in \mathcal H_\omega,\;\;\sum_{\omega\in \mathcal S_{\mathcal A}   }
 \|\xi_\omega\|^2_\omega<\infty\big\rbrace,
\]
where in each sequence $(\xi_\omega)_{\omega\in \mathcal S_{\mathcal A}}$ only  countable many elements are different
from zero. The representations $\pi_\omega$ then give rise to a representation $\pi$ on $\mathcal H$ for which
$\|\pi(a)\|=\|a\|$. The details of this construction can be found \emph{e.g.} in~\cite{Werner2000}. The importance of
this result is that \emph{any}  $C^*$-algebra  is isometrically isomorphic to the $C^*$-algebra of bounded operators in
some Hilbert space.
 \subsection{Composite systems, entanglement}
In this section, our aim will be to study certain quantum correlations that arise as a result of \emph{entanglement}.
This is  a vast subject, and here we will only consider elementary examples, corresponding to bipartite systems.
Although very simple from the mathematical point of view, these examples already contain the essence that will allow us
to distinguish between classical and genuinely quantum correlations. We will also discuss a recent application  of the
algebraic formalism to the study of entanglement for systems of identical
particles~\cite{Balachandran2013,Balachandran2013a}, a topic of current interest for quantum information, condensed
matter, atomic physics, and quantum optics.

We thus start by considering a composite system that is the result of coupling two subsystems 1 and 2.  We will first
consider quantum systems described directly in terms of Hilbert spaces. But, as will become clear, the algebraic
approach based only on observable algebras and states will allow us to reformulate these ideas in more generality. This
will be important for the applications to systems of identical particles. Let $\mathcal H_1$ and $\mathcal H_2$ denote
the corresponding (finite dimensional, for simplicity) Hilbert spaces, and assume that the subsystems are described in
terms of observable algebras $\mathcal A_i \subseteq \mathcal B(\mathcal H_i)$, $i=1,2$. Subsystems 1 and 2 are then
coupled to form a composite system that will be described by $\mathcal A= \mathcal A_1 \otimes\mathcal A_2$.

Let now $\omega: \mathcal A\rightarrow \mathds C$ be a state of the composite system. Consider ``partial'',  or
``local'' measurements performed on each subsystem. This leads us to consider the restriction of $\omega$ to $\mathcal
A_i$, $i=1,2$:
\[
\omega_1:=\omega\big|_{\mathcal A_1}, \;\;\;\;\omega_2:=\omega\big|_{\mathcal A_2}.
\]
\begin{definition}
In the present context of a bipartite composite system, if $\omega$ is a pure state on $\mathcal A$, we say it is
a \emph{separable} state if
\[
\omega (a\otimes b)=\omega_1(a) \omega_2(b),
\]
for all $a\in \mathcal A_1$, $b\in \mathcal A_2$. A (pure) state that is not separable in the above sense is called
an \emph{entangled} state.
\end{definition}
\begin{example}
Let $\mathcal H_1=\mathds C^n$, with basis $\lbrace |e_1\rangle,\ldots,|e_n\rangle\rbrace$ and
$\mathcal H_2=\mathds C^m$, with basis $\lbrace |u_1\rangle,\ldots,|u_m\rangle\rbrace$. Let
$\mathcal A_1=M_n(\mathds C)$ and $\mathcal A_2=M_m(\mathds C)$. The composite system is then
described by the matrix algebra $\mathcal A=\mathcal A_1\otimes \mathcal A_2$. Pick now a vector state
$|\psi\rangle \in \mathcal H_1\otimes \mathcal H_2$ and set
\begin{eqnarray*}
\omega_\psi: \mathcal A&\longrightarrow & \mathds C \\
  \alpha  & \longmapsto & \langle \psi|\alpha|\psi\rangle.
\end{eqnarray*}
Then, the following three conditions are equivalent:
\begin{itemize}
\item[($i$)] The vector state $|\psi\rangle$ is of the form $|\psi\rangle = |\varphi\rangle \otimes |\xi\rangle$.
\item[($ii$)] $S(\rho_{\psi,i})=0$, ($i=1,2$) where $S(\rho_{\psi,i})$ stands for the \emph{von Neumann entropy} of
the reduced density matrix  $\rho_{\psi,i}$ (see definitions below). \item[($iii$)] The state $\omega_\psi$ is
separable.
\end{itemize}
To see where these equivalences come from, let us expand $|\psi\rangle$ as
\[
|\psi\rangle = \sum_{i=1}^n \sum_{j=1}^m A_{ij}|e_i\rangle \otimes |u_j\rangle.
\]
Then $A\in M_{n,m}(\mathds C)$, so we may perform a singular value decomposition and write it in the form $A=U D
V^\dagger$, with $D$ a diagonal matrix in $M_{n,m}(\mathds C)$,  $U\in M_n(\mathds C)$ and $V\in M_m(\mathds C)$.  The
elements of $D$ are $D_{kl}= \delta_{kl} \sqrt{\lambda_k}$, where the $\lambda_k$, called the ``Schmidt coefficients",
are the eigenvalues of $A^\dagger A$. If we define vectors
\[
|k\rangle_1:= \sum_{i=1}^{n}U_{ik} |e_i\rangle, \;\;\;
|k\rangle_2:= \sum_{j=1}^{m}\bar V_{jk} |u_j\rangle,
\]
then we obtain
\[
|\psi\rangle = \sum_k \sqrt\lambda_k |k\rangle_1\otimes |k\rangle_2.
\]
From this expression, it is clear that ($i$) holds precisely when  there is only one non-vanishing Schmidt coefficient.
Furthermore,  from $\langle \psi |\psi\rangle=1$ we obtain $\sum_k \lambda_k =1$, with $0\leq \lambda_k \leq 1.$ We can,
therefore, regard $\lbrace \lambda_k\rbrace_k$ as a probability distribution, and compute its (Shannon) entropy,
defined as
\begin{equation}
\label{eq:H(lambda)}
H(\lbrace \lambda_k\rbrace_k):=-\sum_k \lambda_k \log \lambda_k.
\end{equation}
Notice that this function vanishes precisely when $|\psi\rangle$ is a separable state. We can relate it to the  von
Neumann entropy of the restricted states $\omega_{\psi,i}$ as follows.
\begin{xca}
\label{ex:partial-trace} Consider the restriction $\omega_{\psi,i}$ of $\omega_\psi$ to $\mathcal A_i \subseteq
\mathcal A$ ($i=1,2$). Find density matrices $\rho_{\psi,i}$ (``reduced density matrices'') acting on $\mathcal H_i$
and such that $\omega_\psi(a\otimes\mathds 1_m)= \Tr_{\mathcal H_1}(\rho_{\psi,1} a)$ and $\omega_\psi(\mathds 1_n
\otimes b)= \Tr_{\mathcal H_2}(\rho_{\psi,2} b)$.

The \emph{von Neumann entropy} of a density matrix $\rho$ is defined as
\[
S(\rho)=- \Tr (\rho \log \rho).
\]
Show that $S(\rho_{\psi,1})$ and $S(\rho_{\psi,2})$ coincide and are exactly equal to $H(\lbrace \lambda_k\rbrace_k)$
from (\ref{eq:H(lambda)}).
\end{xca}
\end{example}
We therefore consider the quantity  $S(\rho_{\psi,i})$ ($i=1,2$) as a \emph{measure} of the ``amount of entanglement''
of the state $\omega_\psi$. In order to really appreciate the meaning of this assertion, we have to understand the
(very surprising, and interesting)  features of quantum correlations. This can be considered as one of the starting
points of quantum information theory. In quantum information theory, there are orderly ways to pose and study these type
of problems where, for instance, in the (pure) bipartite case the von Neumann entropy of the reduced density matrix can
be obtained as (basically) the unique entanglement measure of the state $|\psi\rangle$, when entanglement is regarded
as a physical resource and defined in operational terms~\cite{Nielsen2010,Keyl2002}. The general problem, that of
multipartite entanglement, is much more complicated and involves many mathematically as well as physically interesting
problems. However, our interest here will be restricted to the (pure) bipartite case. This will be enough to illustrate
one of our points, which is that the algebraic framework leads to a unified description of physical systems, no matter
whether they are classical or quantum. Also, the ``detachment'' from an \emph{a priori} given Hilbert space in the
quantum case allows for a more clear understanding of phenomena. As it turns out, the noncommutativity of the
observable algebra can be seen as the main source of (truly quantum) correlations. This fits nicely with the relation
between commutative and noncommutative spaces, as hinted above.

The next example illustrates the previous remarks.

\begin{example}
Let us consider the composition of two classical systems. By this we mean that the ``coupling'' of two systems
described by (commutative) observable algebras of the  form $\mathcal A_1=C(X)$ and $\mathcal A_2= C(Y)$ is effected by
the tensor product $\mathcal A_1\otimes \mathcal A_2$. Tensor products of $C^*$-algebras have to be treated carefully,
but  assuming we have defined the appropriate tensor product, let us assume that $C(X)\otimes C(Y)\simeq C(X\times Y)$
(a detailed discussion about tensor products can be found in~\cite{Wegge-Olsen1994}). Hence, the composite system will
be assumed to be described by the observable algebra $\mathcal A= C(X\times Y)$. Consider now a state $\omega$ on
$\mathcal A$. It is not difficult to see that it must be given by a probability distribution $p(x,y)$  in such a way
that for $h\in\mathcal A$,
\[
\omega(h)= \int_{X\times Y}h(x,y) p(x,y) dx dy.
\]
Restriction to $\mathcal A_1$ or $\mathcal A_2$ in this case leads to marginal  distributions $p_1(x)=\int_Y p(x,y) dy$
and $p_2(y)=\int_X p(x,y) dx$, which in turn define states $\omega_i$ on $\mathcal A_i$. From the  above definition we
have that $\omega$ is separable if, and only if,
\[
\omega(f\otimes g)= \omega_1(f)\omega_2(g),\;\;\mbox{for all}\;\;f\in \mathcal A_1, g\in \mathcal A_2.
\]
But from the properties of the Gelfand transform it follows that any pure state on $C(X\times Y)$ is of the form
$\omega(h)=h(x_0,y_0)$, \emph{i.e.}, they are given by point measures: $p(x,y)=\delta_{(x_0,y_0)}(x,y)$. It follows
immediately that $\omega(f\otimes g)= \omega_1(f)\omega_2(g)$. This illustrates the fact that entanglement entails
purely quantum correlations,  as all pure states of a classical composite (bipartite) system are separable.
\end{example}
 \begin{xca}
 The previous example was formulated in rather loosely terms. Provide the necessary details to turn it into a rigorous proof
 that works for locally compact spaces.
 \end{xca}
\begin{example}\label{example:psi_lambda}
On $\mathcal H=\mathbb C^2\otimes \mathbb C^2$ consider the following family of states:
\[
|\psi_\lambda \rangle:= \sqrt{\lambda}\, |+-\rangle -\sqrt{1-\lambda}\,|-+\rangle,
\]
and the following types of observables, acting on $\mathbb C^2$:
\[
P(a):=\frac{1}{2}\left(\mathds 1_2 +\vec a\cdot \vec \sigma\right),\;\;\;
E(a):=\vec a\cdot \vec \sigma,
\]
with $\vec a$ a \emph{unit} vector in $\mathbb R^3$ and $\vec \sigma=(\sigma_1,\sigma_2,\sigma_3)$ (Pauli matrices).
Straightforward computations give
\[
\langle\psi_\lambda|E(a)\otimes E(b)|\psi_\lambda\rangle= -a_3 b_3-2\sqrt{\lambda(1-\lambda)}(a_1b_1+a_2b_2)
\]
and
\begin{eqnarray*}
\lefteqn{\langle\psi_\lambda|P(a)\otimes P(b)|\psi_\lambda\rangle={ }}\\
&& { } =\frac{1}{4}\left( 1+(2\lambda-1)(a_3-b_3)-a_3b_3-2\sqrt{\lambda(1-\lambda)}(a_1b_1+a_2b_2)) \right).{
}
\end{eqnarray*}
Whereas for $\lambda=0,1$ we obtain separable states, for $0<\lambda <1$ we have entangled states.  In particular, for
$\lambda =1/2$ we obtain a maximally entangled state (Bell state), for which
\[
\langle\psi_{1/2}|P(a)\otimes P(b)|\psi_{1/2}\rangle=\frac{1}{4}\left(1- \vec a\cdot \vec b \right).
\]
In this case, the marginal distributions give
\begin{eqnarray*}
P_A(a,+)\;=&\frac{1}{2}&=\;P_A(a,-),\\
P_B(b,+)\;=&\frac{1}{2}&=\;P_B(b,-),
\end{eqnarray*}
whereas for the joint probabilities we obtain ($r,r'= \pm 1$):
\[
P_{r r'}=\frac{1}{4}\left(1- r r' \vec a\cdot \vec b\right).
\]
In particular, for the choice $\vec a \cdot \vec b=1$, we obtain \emph{total anticorrelation}.
\end{example}
What is so special about the previous example? Consider the Clauser-Horne-Shimony-Holt (CHSH)
inequality~\cite{Nielsen2010,Keyl2002}:
\[
\langle   A (B+B')+ A' (B-B' )\rangle \leq 2,
\]
where $\langle X \rangle$ denotes the expectation value of an observable $X$, and where for the involved  observables
it is assumed that $-1\leq A,A',B,B'\leq 1$. The inequality can be obtained if we assume some underlying (classical)
probability space $(\Omega, d\mu)$, for which the correlations above take the form
\[
\langle  A B\rangle_\mu = \int_\Omega A(x)B(x)d\mu (x),
\]
or, stated in terms of algebras of observables:
\begin{xca}
Let $\mathcal A$ be a \emph{commutative} $C^*$-algebra, $\omega$ a pure state thereon, and
$a,a',b, b'$ elements of $\mathcal A$, all of them with norm less or equal to one. Show that
\[
\omega (a(b+b') + a' (b-b'))\leq 2.
\]
\end{xca}
But, in contrast to this,  in  example \ref{example:psi_lambda}  (with $\lambda=1/2$) we obtain
\begin{eqnarray*}
\lefteqn{ \langle \psi_{1/2}|E(a)\otimes (E(b)+E(b')) + E(a')(E(b)-E(b')) |\psi_{1/2}\rangle= { }}\\
&&  { } \hspace{6cm} =  \vec a \cdot \vec b +\vec a \cdot \vec b'+\vec a'\cdot\vec b-\vec a' \cdot \vec b'. { }
\end{eqnarray*}
Now, it is easy to find an arrangement for the vectors $a,a',b,b'$ such that the CHSH inequality is violated. The
importance of these type of inequalities, of which Bell's inequality was the first one, cannot be overemphasized, as
they have allowed for definitive  experimental tests of quantum mechanics.
 The CHSH inequality is also relevant in the context of hidden-variable models, and related no-go theorems~\cite{Holevo2011}.

In the next section we will consider  the problem of entanglement for systems of identical particles, for which the
tools developed in the last sections will prove very useful.

\subsection{Identical particles and entanglement}
When combined, two of the most intriguing features of quantum theory -the intrinsic indistinguishability of  identical
particles and quantum entanglement- lead to formidable conceptual issues that have been addressed for years, and for
which no generally  accepted framework exists. The main source of problems comes from the fact that  entanglement and
related concepts have been studied mainly in cases where a subsystem decomposition coincides with a tensor product
decomposition of the Hilbert space. But precisely in the case of identical particles, the Hilbert space is the
antisymetric/symmetric subspace obtained from the action of the permutation group on a tensor product space. So
precisely in this case decomposition into subsystems does not correspond to tensor product decomposition. This makes
the use of partial trace a doubtful operation, leading to a clash with standard notions of entanglement.  In~\cite{Balachandran2013,Balachandran2013a,Balachandran2013b}, we have shown how, using the representation theory of
operator algebras, it is possible to obtain a generalized notion of entanglement. The universality of the approach,
which can be applied to particles obeying any kind of statistics including bosons, fermions, parafermions and particles
obeying braid statistics, provides a unifying view of entanglement, with many potential applications.

The key idea is that, since  subsystems can always be described in terms of subalgebras,  the use  of partial trace
will be superseded by the more general notion of restriction of a quantum state $\omega$ to a subalgebra. Then one can
make use of the GNS construction in order to find a Hilbert space representation $\pi_\omega$, as above. 
\begin{xca}
\label{ex:entropy-irreducibility} Prove that the GNS representation $\pi_\omega$ is irreducible precisely when the
state $\omega$ is pure.
\end{xca}
 From this exercise, it follows that  the condition for irreducibility of the representation $\pi_\omega$ is
 precisely that the von Neumman entropy of the state $\omega$ vanishes and so, given a subsystem described in
 terms of a subalgebra $\mathcal A_0\subset \mathcal A$, one may consider the restriction of the state $\omega$
 to $\mathcal A_0$. This is a generalization of the notion of partial trace. Hence, a generalized notion of
  entanglement emerges,  based on the von Neumann entropy of the restricted state. The properties of this
  entanglement measure are very closely tied to the GNS-representation of the subalgebra and the restricted state.
   In other words, one finds that entanglement depends \emph{both} on the state and the subsystem  of the full system.
This formalism, then, lends itself to the study of  problems where partial trace loses its meaning,
as is the case with systems of identical particles. Let us consider a few illustrative examples.
\begin{example}
\label{example:2x2matrix}
Consider the algebra $\mathcal A=M_2(\mathds C)$ of $2\times 2$ matrices with complex entries, and let
\[
\omega_\lambda (a)= \lambda a_{11} +(1-\lambda ) a_{22}
\]
be a state on $\mathcal A$. Of course, this can
only happen if $0\leq \lambda \leq1$. Let us then see how the GNS representation is constructed, and how we can
associate an entropy to the  state $\omega_\lambda$. For this purpose, let us consider the matrix units
$e_{ij}=|i\rangle\langle j|$ $(i,j=1,2)$. They generate the algebra, and provide a basis for the underlying vector
space $\hat {\mathcal A}$. They also fulfill the relations
\begin{equation}
\label{eq:e_ij}
e_{ij}e_{kl}=\delta_{jk} e_{il}.
\end{equation}
Let us now check whether there will be null vectors. Writing a general element $a\in \mathcal A$ as
\[
a= \sum_{ij} a_{ij} e_{ij},
\]
we compute
\begin{eqnarray}
\label{eq:null-states-2x2}
\omega_\lambda (a^* a) & = & \sum_{i,j,k}\omega_\lambda(\bar a_{ki} a_{kj}e_{ij})\\
 & = &  \lambda (|a_{11}|^2+ |a_{21}|^2) +  (1-\lambda) (|a_{12}|^2+ |a_{22}|^2).\nonumber
\end{eqnarray}
From this expression we recognize that, if $\lambda\in (0,1)$, then $\mbox{dim}(\mathcal H_{\omega_\lambda})= 4$,
whereas for $\lambda=0$ or $\lambda=1$ we obtain  $\mbox{dim}(\mathcal H_{\omega_\lambda})= 2$. In fact, if $\lambda$
is different from zero or one, we must have both $(|a_{11}|^2+ |a_{21}|^2) =0$ and $(|a_{12}|^2+ |a_{22}|^2)=0$. It
follows that $\mathcal N_{\omega_\lambda}=\lbrace 0 \rbrace$. From
\[
\pi_{\omega_\lambda} (a)  |[b]\rangle = |[ab]\rangle
\]
 and (\ref{eq:e_ij}) we see that the representation is reducible, with 2 invariant subspaces generated,
 respectively, by $\lbrace |[e_{11}]\rangle,|[e_{21}]\rangle\rbrace$ and $\lbrace |[e_{12}]\rangle,|[e_{22}]\rangle\rbrace$.
 From exercise \ref{ex:entropy-irreducibility}  we conclude that, in this case, the state has to be a mixed state.
  The cases $\lambda=0$ and $\lambda =1$ lead to a non-trivial null space, and from
(\ref{eq:null-states-2x2}) we conclude that the dimension of these irreducibles is, in both cases, equal to 2.
 It is left to the reader to obtain a pure state on the algebra $M_2(\mathds C)\otimes M_2(\mathds C)$ such that
$\omega_\lambda$ is the restriction to $M_2(\mathds C)$.
\end{example}
The next question we need to address is that of how to compute the entropy of the reduced state.  Since our algebras
will in general be unital, there is a simple way to acomplish this, namely, given a (unital) $C^*$-algebra $\mathcal A$
and a state $\omega$ on it, it is always possible to find a \emph{density matrix} $\rho_\omega$ acting on the GNS space
$\mathcal H_\omega$, and such that
\[
\omega(a)= \Tr_{\mathcal H_\omega}( \rho_\omega a)
\]
for all $a$ in $\mathcal A$. From the decomposition of the GNS space into irreducibles,
\[
\mathcal H_\omega = \bigoplus_j \mathcal H^{(j)}_\omega,
\]
we obtain \emph{projectors} $P^{(j)}$, with  $\sum_j P^{(j)} = \mathds 1_{\mathcal H_\omega}$.  From the definition of
the inner product in $\mathcal H_\omega$ we have $\omega (a) = \langle [\mathds 1_{\mathcal A}] | \pi_\omega (a) |
[\mathds 1_{\mathcal A}]\rangle$. But then, using an orthonormal basis $\lbrace |n\rangle \rbrace_n$ on $\mathcal
H_\omega$, we can write:
\begin{eqnarray}
\omega (a) & =& \langle [\mathds 1_{\mathcal A}] | \pi_\omega (a) |  [\mathds 1_{\mathcal A}]\rangle\nonumber\\
&=&
\langle [\mathds 1_{\mathcal A}] | \sum_k P^{(k)}\pi_\omega (a) |  [\mathds 1_{\mathcal A}]\rangle\nonumber \\
&=&
\langle [\mathds 1_{\mathcal A}] | \sum_k P^{(k)}\pi_\omega (a)P^{(k)} |  [\mathds 1_{\mathcal A}]\rangle\nonumber\\
&=&
\langle [\mathds 1_{\mathcal A}] | \sum_k P^{(k)}\pi_\omega (a)  \sum_n |n\rangle \langle n| P^{(k)} |  [\mathds 1_{\mathcal A}]\rangle\nonumber\\
&=& \Tr_{\mathcal H_\omega} (\rho_\omega \pi_\omega(a)),\nonumber
\end{eqnarray}
with $\rho_\omega$ given by
\[
\rho_\omega= \sum_k P^{(k)} | [\mathds 1_{\mathcal A}] \rangle \langle [\mathds 1_{\mathcal A}]| P^{(k)}.
\]
 Consider now a composite, bipartite system for which the Hilbert space is of the form $\mathcal H=\mathcal H_A\otimes \mathcal H_B$. If $\omega_\psi$ is a state on the algebra of bounded  operators on  $\mathcal H_A\otimes \mathcal H_B$ that is obtained from a vector state $|\psi\rangle \in \mathcal H$, then its restriction to the subalgebra $\mathcal A_1$ generated by elements of the form
$K\otimes \mathds 1_B$ gives a state
\[
\omega_1(K)\equiv \omega\mid_{\mathcal A_1}(K\otimes \mathds 1_B)= \Tr_{\mathcal H_A}(\rho_A K),
\]
where $\rho_A$ is the \emph{reduced} density matrix, defined through partial trace (\emph{cf.} exercise~\ref{ex:partial-trace}),
\[
\rho_A =\Tr_{\mathcal H_B} |\psi\rangle\langle \psi|.
\]
We therefore see that, for these kind of systems,  partial trace equals restriction (to some subalgebra).
\begin{xca}
Compute de von Neumann entropy  of the density matrix $\rho_{\omega_\lambda}$ corresponding to the GNS  space of
example \ref{example:2x2matrix} and provide a physical interpretation.
\end{xca}
 The real usefulness  of the algebraic approach becomes apparent only when we consider situations where the Hilbert space does
  not have a simple tensor product structure. This is what happens, \emph{e.g.}, with systems of identical particles. Because of the
  symmetrization postulate, the Hilbert space contains only the symmetric (for bosons) or the antisymmetric (for fermions)
  subspaces of the many-particle Hilbert space. As a consequence, one finds that states that from a physical point of view
  should not \emph{a priori} be considered to be entangled, will have reduced density matrices with non-vanishing entanglement
  entropy~\cite{Tichy2011}. In approaches like those of~\cite{Benatti2012} or~\cite{Balachandran2013}, the description of
  subsystems is given by specifying suitable subalgebras. Then, the restriction of a given state to the subalgebra provides
  a physically sensible generalization of the notion of partial trace.
 Applying the GNS construction to the restricted state, it is possible
 to  study the entropy emerging from  restriction and use it as a generalized measure of entanglement.

Let us briefly discuss how this entanglement measure can be computed in concrete cases. For this purpose, consider a
Hilbert space $\mathcal H^{(1)}=\mathds C^d$, assumed to correspond to the space of 1-particle states of a fermionic
system. The full Hilbert space is then the antisymmetric Fock space $\mathcal F$ obtained from $\mathcal H^{(1)}$. It
decomposes as a direct sum of spaces of fixed number of particles. The $k$-particle Hilbert space $\mathcal{H}^{(k)}$
is just the antisymmetrized $k$-fold tensor product of $\mathcal{H}^{(1)}$. This is a kind of ``toy model'' for a
fermionic quantum field theory, but many important features of a quantum field theory can be seen to appear already at
this level. One of these features is the connection to the representation theory of Clifford algebras, as explained in
full detail in~\cite{Gracia-Bond'ia2001}. As is well-known, the Clifford algebra of $\mathcal H^{(1)}$ acts naturally
on the exterior algebra $\bigwedge^\bullet(\mathcal H^{(1)})$ which, in turn, is related to the Fock space construction
in the following way.

Let $\lbrace e_n \rbrace_n$ denote an orthonormal basis for $\mathcal H^{(1)}$, and denote with $a_n^{(\dagger)}$ the
corresponding annihilation (creation) operators (\emph{cf.}~exercise \ref{ex:CAR-algebra-finite-dim}). Then, there is a vector
space isomorphism between $\mathcal F$ and $\Lambda^\bullet(\mathcal H^{(1)})$, furnished by the correspondence
\begin{equation}
\label{eq:Nelson-correspondence}
e_{i_1}\wedge e_{i_2}\wedge \cdots \wedge e_{i_k} \; \longleftrightarrow \; a^\dagger_{i_1} a^\dagger_{i_2} \cdots
a^\dagger_{i_k} |0\rangle.
\end{equation}
This correspondence is behind the famous quote by E. Nelson: \emph{``first quantization is a mystery, but second quantization
is a functor''}. In fact, given a self-adjoint operator $A$ on $\mathcal{H}^{(1)}$ (that is, a 1-particle observable),
we obtain (by functoriality) an operator $d\Gamma(A)$  acting on Fock space, whose restriction to the $k$-particle
sector $\mathcal{H}^{(k)}$ is given by
\begin{eqnarray}
\label{eq:dGamma^k}
\lefteqn{
d\Gamma^{(k)}(A):= (A\otimes\mathds{1}_d \otimes\cdots\otimes\mathds{1}_d)+{ }}\\
&& { }
 + (\mathds{1}_d\otimes A\otimes\cdots\otimes\mathds{1}_d) +\cdots+ (\mathds{1}_d\otimes\cdots\otimes\mathds{1}_d\otimes
 A).
  { }\nonumber
 \end{eqnarray}
Taking the correspondence (\ref{eq:Nelson-correspondence}) into account, we obtain the following expression for
$d\Gamma(A)$ in terms of creation/annihilation operators,
\begin{equation}
\label{eq:dGamma}
d  \Gamma(A) =\sum_{i,j} A_{ij} a_i^\dagger a_j,
\end{equation}
where $A_{ij}= \langle e_i|A|e_j\rangle$. In the physics literature, the operator $d\Gamma (A)$ is referred to as the
\emph{second quantization} of $A$.

In the present quantum-mechanical context, where the number of particles is kept fixed, we want to focus our attention
on the operator $d\Gamma^{(k)}(A)$. One of the properties of this operator is that it preserves the symmetries of
$\mathcal{H}^{(k)}$. Furthermore, the map $A \longrightarrow d\Gamma^{(k)}(A)$ allows us to study subalgebras of
1-particle observables.

The simplest example we can consider in order to illustrate entanglement issues for systems of identical particles is
that of just two fermions. Let us then consider, as done in~\cite{Eckert2002} and in~\cite{Balachandran2013}, a
2-fermion system, where each fermion can be in  a linear superposition  of 4 basic states which, for the sake of physical
interpretation, will be divided into internal and external degrees of freedom. So we describe 1-particle states in
terms of a set of (fermionic) creation/annihilation operators $a^{(\dagger)}_\lambda, b^{(\dagger)}_\lambda$, where $a$
stands for \emph{``left''}, $b$ for \emph{``right''} (the external degrees of freedom) and $\lambda=1,2$ for spin up and
down (the internal degrees of freedom). Hence, in this case we have $\mathcal H^{(1)}=\mathds C^4$ and,   for the
2-fermion space, $\mathcal H^{(2)}=\bigwedge^2 \mathds C^4$. An orthonormal basis for $\mathcal H^{(2)}$ is given
 by the  vectors
\[
 a^\dagger_1 a^\dagger_2|0\rangle,\;\; b^\dagger_1
b^\dagger_2|0\rangle,\;\; a^\dagger_1 b^\dagger_2|0\rangle\;\;\mbox{and}\;\; a^\dagger_2 b^\dagger_1|0\rangle.
\]
 The two-particle algebra $\mathcal A$ of observables is thus isomorphic to the  matrix algebra $M_{6}(\mathds
C)$.

For $|\psi_\theta\rangle=(\cos\theta a_1^\dagger b_2^\dagger + \sin\theta a_2^\dagger b_1^\dagger) |\Omega\rangle$, the
corresponding state $\omega_\theta$ is given by $\omega_{\theta} (\alpha)= \langle \psi_{\theta}
|\alpha|\psi_{\theta}\rangle$ for $\alpha\in \mathcal A$. We choose the subalgebra $\mathcal A_0$ to be the one generated  by
  $\mathds 1_{\mathcal A}$, $n_{12}= a_1^\dagger a_1 a_2^\dagger a_2$, $N_a=a_1^\dagger a_1 + a_2^\dagger a_2$ and
$T_{i=1,2,3}=(1/2)~a_\lambda^\dagger (\sigma_i)^{\lambda\lambda'} a_{\lambda'}$.
Physically, it corresponds to the subalgebra of  one-particle observables corresponding to measurements at the left location.
\begin{xca}
Consider the restriction of
$\omega_\theta$ to $\mathcal A_0$ and study the GNS representation corresponding to this choice.
For $\theta = 0,\pi/2$ you should obtain vanishing entropy, in contrast to the result $S=\log_2 2$ obtained via partial trace
  for states with Slater rank one~\cite{Balachandran2013,Balachandran2013a,Ghirardi2004,Tichy2011}.
\end{xca}

\section{Spin Chains}

\subsection{The transverse Ising chain}
The Hamiltonian for the (quantum) Ising chain in a transverse field is given by
\begin{equation}
\label{eq:H-Ising-transv}
H = -  \sum_{i=1}^{N-1} \sigma_i^x \sigma_{i+1}^x  - \lambda  \sum_{i=1}^N \sigma_i^z.
\end{equation}
An important aspect when solving the model and studying its solution is the type of boundary conditions considered. We will be interested in both open as well as periodic boundary conditions. The first step in the solution of this model is the so-called Wigner-Jordan transformation, that allows us to express all spin operators in terms of \emph{fermionic} creation/annhilation operators:

As can be easily checked, the operators defined by
\begin{eqnarray}
\label{eq:Wigner-Jordan-def}
 a_i &  = & \sigma^z \otimes \cdots\otimes \sigma^z\otimes \sigma^+, \nonumber\\
 a_i^\dagger & = & \sigma^z \otimes \cdots\otimes \sigma^z\otimes \sigma^-,
\end{eqnarray}
where $\sigma^\pm=\frac{1}{2}(\sigma^x\pm i\sigma^y)$,
are fermionic operators.
\begin{xca}\label{ex:1.1}
Check that the operators defined above indeed satisfy the  CAR algebra: $\lbrace a_i, a_j^\dagger\rbrace =\delta_{ij}$,
 $\lbrace a_i, a_j\rbrace = 0 = \lbrace a_i^\dagger, a_j^\dagger\rbrace$.
\end{xca}
 The inverse transformation is given by
\begin{eqnarray}
\label{eq:Wigner-Jordan-inverse}
\sigma_i^z  & = & 1 -2 a_i^\dagger a_i,\nonumber\\
\sigma_i^x  & = & \left(\prod_{m<i} (1- 2 a_m^\dagger a_m)\right) (a_i^\dagger + a_i),\\
 \sigma_i^y  & = & i\left(\prod_{m<i} (1- 2 a_m^\dagger a_m)\right) (a_i^\dagger - a_i).\nonumber
\end{eqnarray}
\begin{xca}\label{ex:1.2} Check that (\ref{eq:Wigner-Jordan-inverse}) is the inverse transformation to (\ref{eq:Wigner-Jordan-def}).
\end{xca}
With this we obtain, for  the interaction terms of the Hamiltonian,
\[
\sigma_i^x \sigma_{i+1}^x = (a_i^\dagger -a_i)(a_{i+1}^\dagger + a_{i+1}),
\]
and so $H$ the takes the form
\begin{equation}
\label{eq:H-Ising2}
H= -  \sum_{i=1}^{N-1}(a_i^\dagger -a_i)(a_{i+1}^\dagger + a_{i+1}) - \lambda \sum_{i=1}^N (1-2 a_i^\dagger a_i).
\end{equation}
The term containing the external field $\lambda$ is diagonal in this basis. The  constant term  $-\lambda N$ coming from the last sum  is usually disregarded, because its only effect is to shift the energy spectrum. We will nevertheless keep all terms, in order to be able to compare with numerical solutions in the spin basis, for small values of $N$.

Expanding all terms in (\ref{eq:H-Ising2}) and moving all creation operators to the left, we obtain (\emph{cf.}~Eq. (10.14) in~\cite{Sachdev2011}):
\[
H = - \sum_{i=1}^{N-1} (a_i^\dagger a_{i+1}  + a_{i+1}^\dagger a_i +a_i^\dagger a_{i+1}^\dagger - a_i a_{i+1} )
+ 2 \lambda \left( \sum_{i=1}^N a_i^\dagger a_i\right) - \lambda N.
\]
The point of using the Wigner-Jordan transformation for this model is that the Hamiltonian becomes ``almost'' diagonal. By this we mean that it is a sum of local, \emph{quadratic} expressions in the creation and annihilation operators. As we will see, such models can be exactly solved. We now write $H$ in a suggestive matrix notation. Arranging all creation and annihilation operators in rows and columns, we can write $H$ as a kind of quadratic form. For example, we have, for $N=2$,
\[
(a_1^\dagger, a_2^\dagger) \left(
\begin{array}{cc}
 2 \lambda & -1\\
 -1    &  2\lambda
\end{array}
\right) \left(
\begin{array}{c}
a_1\\
a_2
\end{array}
\right) = -a_1^\dagger a_2 -a_2^\dagger a_1 + 2 \lambda a_1^\dagger a_1 + 2 \lambda a_2^\dagger a_2.
\]
The expression corresponding to the same term for $N=4$ is then
\begin{eqnarray*}
\lefteqn{
(a_1^\dagger, a_2^\dagger,a_3^\dagger, a_4^\dagger) \left(
\begin{array}{cccc}
 2 \lambda &    -1   &   0    &    0 \\
 -1    &   2\lambda  &   -1    &    0 \\
 0    &    -1   &  2\lambda  &    -1    \\
 0    &    0   &   -1    &  2\lambda \\
\end{array}
\right) \left(
\begin{array}{c}
a_1\\
a_2\\
a_3\\
a_4
\end{array}
\right) =  { } }    \\
 & & { } =-a_1^\dagger a_2 +a_2^\dagger a_1 -
 a_2^\dagger a_3 -a_3^\dagger a_2 -
 a_3^\dagger a_4 -a_4^\dagger a_3 +
  2 \lambda \sum_{i=1}^4 a_i^\dagger a_i . { }
\end{eqnarray*}
For arbitrary $N$, we may define the following $N\times N$ matrices:
\begin{equation}
\label{eq:AB-open} \hspace{-0.3cm}A=
 \begin{pmatrix}
  2\lambda    &  -1  &  0  &  0 &\cdots \\
  -1    &   2\lambda   &  -1  &  0 &  \\
  0    &  -1  &   2\lambda  &  -1 &  \\
  0  &   0   &  -1  &   2\lambda  & \cdots  \\
    \vdots            &         &      &     \vdots   & \ddots
 \end{pmatrix},\;
 B=
 \begin{pmatrix}
  0    &  -1  &  0  &  0 &\cdots \\
  1    &  0  &  -1  &  0 &  \\
  0    &  1  &  0  &  -1 &  \\
  0  &   0   &  1  &  0 & \cdots  \\
    \vdots            &         &      &     \vdots   & \ddots
 \end{pmatrix}.
 \end{equation}
 The matrix elements of these two matrices can be written as follows:
\begin{equation}
A_{ij}   =   2 \lambda \delta_{i,j} - (\delta_{i+1,j}  + \delta_{i,j+1}),\;\;\;
B_{ij}   =  - (\delta_{i+1,j} - \delta_{i,j+1}). \nonumber
\end{equation}
Hence, the Hamiltonian takes the following  form:
\begin{equation}
\label{eq:H-Ising3}
 H = \sum_{i,j=1}^N \left[ a_i^\dagger A_{ij}a_j +\frac{1}{2}\left( a_i^\dagger B_{ij}a_j^\dagger -
a_i B_{ij} a_j  \right)\right] -\lambda N.
\end{equation}

\subsection{Open boundary conditions}\label{sec:OpenBoundaryConditions}
For several reasons, including the study of edge states, it is instructive to explore the explicit solution of this model for \emph{open} boundary conditions.
 We start with the Ising Hamiltonian written in the form (\ref{eq:H-Ising3}). Any model that can be written as a quadratic form
  can be expressed in this way, the ``only'' difference being the explicit form of the matrices $A$ and $B$ (notice that,
   in order for $H$ to be  Hermitian, $A$ has to be symmetric and $B$ antisymmetric). In principle, thus, the method presented below
   (following the  work of Lieb, Schultz and Mattis~\cite{Lieb1961})
    can be applied to any such model.

For periodic boundary conditions, it is usually more convenient to take into account translation invariance and hence
to introduce Fourier transformed operators. But for open boundary conditions, translation invariance is ``broken'' and
then it is a good idea to start right away with a Bogoliubov transformation, as explained below.

Recall that the operators $a_i,\, a_j^\dagger$  defined in (\ref{eq:Wigner-Jordan-def}) obey fermionic canonical anti-commutation (CAR) relations (\emph{cf.}~exercise~\ref{ex:1.1} ):
\[
\lbrace a_i, a_j^\dagger\rbrace = \delta_{ij},\;\;\;\lbrace a_i, a_j\rbrace=0=\lbrace a_i^\dagger, a_j^\dagger\rbrace.
\]
In general, a Bogoliubov transformation  is a mapping induced by a  change of  basis on the one-particle Hilbert space
(unitary transformation), its main effect being to provide  a new set of creation/annihilation operators for which the
Hamiltonian (\ref{eq:H-Ising3}) becomes diagonal. Consider, then, a new set of operators given by
\begin{equation}
\label{eq:g-h-defined}
c_k   =   \sum_{i=1}^N\left(g_{ki} a_i + h_{ki} a_i^\dagger\right),\;\;\;\;\;c_k^\dagger   =
\sum_{i=1}^N\left(\bar{g}_{ki} a_i^\dagger + \bar{h}_{ki} a_i\right),
\end{equation}
where $g$ and $h$ are $N\times N$ matrices to be chosen so that
\begin{itemize}
\item[(i)] The new operators satisfy the same CAR algebra:
\begin{equation}
\label{eq:Bogol-CAR}\lbrace c_k, c_l^\dagger\rbrace = \delta_{kl},\;\;\lbrace c_k, c_l\rbrace=0=\lbrace c_k^\dagger, c_l^\dagger\rbrace.
\end{equation}
\item[(ii)] The Hamiltonian becomes \emph{diagonal} in the new basis:
\begin{equation}
\label{eq:H-Ising-diagonal}H=\sum_k \Lambda_k c_k^\dagger c_k + \mu\end{equation} (with $\mu$ some constant).
\end{itemize}
\begin{xca}\label{ex:1.3} Show that the requirement (\ref{eq:Bogol-CAR}) leads to the following conditions:
\begin{eqnarray}
\label{eq:gh}
g g^\dagger + h h^\dagger &=& \mathds 1_N,\nonumber\\
g h^t + h g^t & =& 0.
\end{eqnarray}
\end{xca}
\begin{xca}\label{ex:1.4}
Compute the trace of $H$ in two different ways, in order to show that the constant term in (\ref{eq:H-Ising-diagonal}) is
 given by
 \[
 \mu  =\frac{1}{2}\left(  \Tr A-\sum_k \Lambda_k  \right) -\lambda N.
 \]
\end{xca}
The second condition above, Eq. (\ref{eq:H-Ising-diagonal}),  will lead to an eigenvalue problem for $g$ and $h$, the solution of which amounts -in principle- to the solution of the full problem.

Now we compute the commutator $[c_k, H]$ in two different ways, once using  (\ref{eq:H-Ising3}) and once using
(\ref{eq:H-Ising-diagonal}). This leads to the following set of
equations:
\begin{equation}
\label{eq:gA-hB}
g_{ki}\, \Lambda_k   =  \sum_{j=1}^N \left(g_{kj} A_{ji} -h_{kj} B_{ji}\right),\,\;\;
h_{ki}\, \Lambda_k   =  \sum_{j=1}^N \left(g_{kj} B_{ji} -h_{kj} A_{ji}\right).
\end{equation}
In order to solve this eigenvalue problem, it proves  convenient to introduce new matrices $\Phi$ and $\Psi$, as follows:
\begin{equation}
\Phi  :=  g + h,\;\;\;\;\;
\Psi  :=  g - h. \nonumber
\end{equation}
If we now define for each $k$ a vector $|\Phi_k\rangle$, the $i^{th}$ component of which is given by $\Phi_{ki}$, and
similarly for $\Psi$, we find that (\ref{eq:gA-hB}) can be written as follows:
\begin{equation}
(A-B) |\Psi_k\rangle  =  \Lambda_k |\Phi_k\rangle,\;\;\;\;
(A+B) |\Phi_k\rangle  =  \Lambda_k |\Psi_k\rangle,  \nonumber
\end{equation}
or, equivalently, as
\begin{eqnarray}
\label{eq:(A-B)(A+B)}
(A-B) (A+B)|\Phi_k\rangle & = & \Lambda_k^2 \; |\Phi_k\rangle\nonumber\\
(A+B) (A-B) |\Psi_k\rangle & = & \Lambda_k^2 \; |\Psi_k\rangle.
\end{eqnarray}
With this we have reduced our problem from the diagonalization of a $2^N\times 2^N$ matrix to that of diagonalizing two
$N\times N$ ones. As mentioned before, this would be a very easy task if we would have chosen periodic boundary
conditions. The reason being  that for periodic boundary conditions the matrices $(A\pm B)(A\mp B)$ are
\emph{Toeplitz}. But for open boundary conditions the matrices $A$ and $B$ are given by (\ref{eq:AB-open}), so that
\begin{equation}
\label{eq:(A-B)(A+B)matrix}
\frac{1}{4}(A-B)(A+B) = \begin{pmatrix}
  \lambda^2    &  - \lambda   &           &           &         &     \\
 -\lambda   & 1+\lambda^2 &  \hspace*{-0.5cm}-\lambda     &           &         &     \\
         &  -\lambda   &  1+\lambda^2  & \ddots    &         &     \\
         &         &  \ddots   &   \ddots  &         &   \\
         &         &           &           &  -\lambda   &  \\
         &         &           &    -\lambda   & \;\;1+ \lambda^2  & \\
 \end{pmatrix}
\end{equation}
and
\begin{equation}
\label{eq:(A+B)(A-B)matrix}
\frac{1}{4}(A+B)(A-B) = \begin{pmatrix}
 1+ \lambda^2    &  -\lambda   &           &           &         &     \\
 -\lambda   &         1+ \lambda^2      &     &           &         &     \\
         &         &  \ddots   &   \ddots  &         &   \\
                        &         &    &    &         &   \\
         &         &     \ddots        &      1+ \lambda^2      &  -\lambda   &  \\

         &         &           &  \hspace*{-0.5cm}  -\lambda   & \lambda^2  & \\
 \end{pmatrix}.
\end{equation}
In order to solve (\ref{eq:(A-B)(A+B)}), we propose the following ansatz:
\begin{equation}
\label{eq:Psi_kl}
\Psi_{kl} = \alpha e^{i k l} + \beta e^{-i k l},
\end{equation}
with $\alpha$  and $\beta$ constants to be determined. The eigenvalue equation will then give 3 independent equations.
The first one is obtained by equating the $j^{th}$ component of $(A+B)(A-B)|\Psi_k\rangle$ with $\Lambda_k^2\Psi_{kj}$,
for $j=2,\ldots, N-1$. All these choices of $j$ yield the same equation (by enforcing the vanishing of the coefficients
of $\alpha$ and $\beta$), namely,
\begin{equation}
\label{eq:Lambda_k}
\left(\frac{\Lambda_k}{2}\right)^2   =  \lambda^2 + 1 - 2 \lambda \cos k.
\end{equation}
This is (almost) the spectrum of our problem. We still need to find what are the allowed values of the label ``$k$''.
This is done by considering the two other cases ($j=1$  and $j=N$), that give a system of equations for $\alpha$ and
$\beta$:
\begin{eqnarray}
0& = & \;\;\;\alpha \;\;\; + \;\;\;\beta, \nonumber \\
0&=&\left(\lambda e^{ik(N+1)} -e^{i k N}\right) \alpha + \left(\lambda e^{-ik(N+1)} -e^{-i k N}\right)\beta.\nonumber
\end{eqnarray}
The non-trivial solution $\alpha=-\beta$ is obtained provided the determinant of this matrix vanishes. This condition
is equivalent to $k$ being solution of the following transcendental equation, for which $\lambda\neq 0$ has to be
assumed:
\begin{equation}
\label{eq:open-transcendental}
\sin kN = \lambda  \sin k(N+1).
\end{equation}
Notice that, since we must have $N$ eigenvectors, we expect this equation to have $N$ roots. The behavior of these
roots as a function of $\lambda$ is quite relevant; comparison with the periodic chain allows for the recognition of edge states.
\subsection{Periodic boundary conditions} In the case of periodic boundary conditions,
we extend the sums in  (\ref{eq:H-Ising-transv}) to $i=N$, adopting the convention that $\sigma_{N+1}^\alpha\equiv\sigma_{1}^\alpha$ ($\alpha=x,y,z$). This generates a boundary term that couples the first spin operator to the last one:
\begin{equation}
\label{eq:H-Ising-transv-pbc}
H = -  \sum_{i=1}^{N-1} \sigma_i^x \sigma_{i+1}^x  - \lambda  \sum_{i=1}^N \sigma_i^z -\sigma_N^x \sigma_{1}^x.
\end{equation}
We already know how to write the first two terms  of this Hamiltonian in terms of creation and annihilation operators. Let us therefore consider the last term: $\sigma_N^x \sigma_{1}^x$. Here it is convenient to consider the \emph{parity} operator,
$e^{i\pi \mathcal N}$,
where $\mathcal N= \sum_j a_j^\dagger a_j$ is the \emph{number} operator. Using the  identities
\begin{equation}
1- 2a_j^\dagger a_j   =  e^{i\pi a_j^\dagger a_j}, \;\;\; (e^{i\pi a_j^\dagger a_j})^2=1,\;\;\;
 e^{i\pi a_j^\dagger a_j}   (a_j^\dagger +a_j)  =   (a_j - a_j^\dagger ),\nonumber
\end{equation}
and  (\ref{eq:Wigner-Jordan-inverse}) we obtain:
\begin{eqnarray}
\sigma_N^x \sigma_{1}^x &=& \left( \prod_{m=1}^{N-1} (1- 2a_m^\dagger a_m)\right)  (a_N^\dagger +a_N) (a_1^\dagger +a_1) \nonumber\\
&=&\left( \prod_{m=1}^{N-1} e^{i\pi a_m^\dagger a_m}\right)(a_N^\dagger +a_N) (a_1^\dagger +a_1)\nonumber\\
&=& \left(\prod_{m=1}^{N} e^{i\pi a_m^\dagger a_m} \right) e^{i\pi a_N^\dagger a_N} (a_N^\dagger +a_N) (a_1^\dagger +a_1)\nonumber\\
&=& e^{i\pi \mathcal N} (a_N-a_N^\dagger ) (a_1^\dagger +a_1)\nonumber\\
&=& (a_N-a_N^\dagger ) (a_1^\dagger +a_1) e^{i\pi \mathcal N}. \nonumber
\end{eqnarray}
 With this we can write  $H$ in terms of fermionic operators:
 \begin{eqnarray}
 \label{eq:H-with-Bdry-term}
H & = & - \sum_{i=1}^{N-1} (a_i^\dagger a_{i+1}  + a_{i+1}^\dagger a_i +a_i^\dagger a_{i+1}^\dagger - a_i a_{i+1} )
+ 2 \lambda  \sum_{i=1}^N a_i^\dagger a_i \nonumber\\
   &  &\;\;\;\;\; - \lambda N+ \,e^{i\pi \mathcal N}(a_N^\dagger a_1 + a_1^\dagger a_N + a_N^\dagger a_1^\dagger +a_1 a_N).
\end{eqnarray}
An important fact (that can be easily checked) is that the parity operator commutes with $H$:
\begin{equation}
[H, e^{i\pi \mathcal N} ] =0. \nonumber
\end{equation}
Thus, it is possible to diagonalize  the Hamiltonian separately in sectors of even and odd numbers of ``particles''.
The eigenvalues of $e^{i\pi \mathcal N}$ are of the form $\sigma= \pm 1$, with the plus sign for states with an \emph{even} number of particles and the minus sign for states with an \emph{odd} number of particles.
We can, as in the previous case, write $H$ as a quadratic form in the fermion operators (\emph{cf.} (\ref{eq:H-Ising3})), the only difference being the explicit form of the matrices $A$ and $B$, for which we now get:
\begin{equation}
\label{eq:AB-periodic} \hspace{-0.36cm} A=
 \begin{pmatrix}
  2\lambda    &  -1  &  0  &  \cdots & \sigma \\
  -1    &   2\lambda   &  -1  &  0 & \cdots  \\
  0    &  -1  &   2\lambda  &  -1 &  \\
   \vdots   &   0   &  -1  &   2\lambda  & \cdots  \\
  \sigma &  \vdots                   &      &     \vdots   & \ddots
 \end{pmatrix},\;\;\;\;
 B=
 \begin{pmatrix}
  0    &  -1  &  0  &  \cdots & -\sigma \\
  1    &  0  &  -1  &  0 &  \cdots \\
  0    &  1  &  0  &  -1 &  \\
  \vdots  &   0   &  1  &  0 & \cdots  \\
   \sigma & \vdots                     &      &     \vdots   & \ddots
 \end{pmatrix}
 \end{equation}
In contrast to (\ref{eq:(A-B)(A+B)}) or (\ref{eq:(A+B)(A-B)matrix}), the matrix $(A\pm B) (A \mp B)$ is now  Toeplitz:
\begin{equation}
\label{eq:(A-B)(A+B)Toeplitz}
\frac{1}{4}(A\pm B)(A\mp B) =\left(
   \begin{array}{ccccc}
        1\hspace{-0.1cm}+\hspace{-0.1cm}\lambda^2   &    -\lambda   &  0  &\cdots &\sigma \lambda\\
           -\lambda        &   1\hspace{-0.1cm}+\hspace{-0.1cm}\lambda^2 &  -\lambda &  0 & \cdots \\
               0                & -\lambda  & 1\hspace{-0.1cm}+\hspace{-0.1cm}\lambda^2  & \ddots & \\
        \vdots &    0 & \ddots& \ddots & \hspace{-0.3cm}-\lambda\\
     \sigma\lambda & \vdots &  &  -\lambda &1\hspace{-0.1cm}+\hspace{-0.1cm}\lambda^2\\
   \end{array}
 \right)
\end{equation}
The eigenvalue problem (\ref{eq:(A-B)(A+B)}) can again be solved using the ansatz (\ref{eq:Psi_kl}). The eigenvalues are again given by (\ref{eq:Lambda_k}), with the difference that the allowed values of $k$ can now be explicitly given\footnote{This stands in contrast to the case of open boundary conditions, where they are given by the solutions of the transcendental equation (\ref{eq:open-transcendental}).}, as follows from the following exercise.
\begin{xca}\label{ex:1.5}
Show that, in the even parity sector ($\sigma = 1$), the allowed values of $k$ are given by
\begin{equation}
k_m = \frac{(2m+1)}{N} \pi,\;\;\;\;\;\;(m=0,1,\ldots, N-1),\nonumber
\end{equation}
whereas in the odd parity sector ($\sigma = -1$) they are given by
\begin{equation}
k_m = \frac{2m}{N}\pi ,\;\;\;\;\;\;(m=0,1,\ldots, N-1).\nonumber
\end{equation}
\end{xca}
\begin{xca}\label{ex:1.6}
Explain how the $k$'s can be made to take positive and negative values
 and how then we obtain an explicit solution for the matrix $\Phi$, of the form
\[
\Phi_{k,l} \sim \left \{  \begin{array}{c}
                          \sin (kl),\; k>0\\
                           \cos(kl),\; k<0.
                        \end{array}
\right.
\]
\end{xca}
It should by now be clear that the solution of the eigenvalue problem is much easier in the periodic case. We have chosen to discuss the open chain mainly because of the role that boundary effects play in connection to symmetry breaking  and also
for studies of surface effects~\cite{Henkel1999,Karevski2000}. Now we are going to take advantage of the translational symmetry in the periodic case. As discussed below, if instead of applying a Bogoliubov transformation right after transforming the spin operators to fermion operators via the Jordan-Wigner transformation, we perform a Fourier transformation as an intermediate step, we quickly obtain a much more  elegant solution, that can  be easily generalized to treat the limiting case of an infinite chain. Here we will again make use of the fact that the Hamiltonian commutes with the parity operator.

Looking back at (\ref{eq:H-with-Bdry-term}), we realize that it is possible to express $H$ in a more compact form if we introduce boundary conditions of the fermion operators, depending on the parity sector we are interested in. In fact, notice that if we choose $\sigma=-1$, then the boundary term can be included in the first sum, just by setting $a_{N+1}\equiv a_1$. But if we want to do the same in the sector $\sigma = +1$, we will need $a_{N+1}\equiv  -a_1$.
In view of our previous discussion, it is clear that the  choice $\sigma=-1$ is more convenient computationally (the matrices  $(A\pm B)(A\mp B)$ are circulant in that case).
 But if we are interested in the \emph{ground state}, then we must consider the
other choice ($\sigma=1$). This can be seen if we compare the ground state energies of the fermionic Hamiltonian
in each sector. As can be easily checked, the  lowest energy eigenvalue indeed comes from the sector  $\sigma=1$. The difference between them  decreases with increasing $N$ and eventually disappears in the thermodynamic limit. It is for this reason that
the boundary term is usually disregarded. We nevertheless feel that the price to pay if we keep all terms is actually very low, and in return we can have complete control over the spectrum and its degeneracies~\cite{Schmidt1974}.
\begin{xca}\label{ex:1.7}
Making use of the solution of the eigenvalue problem for both periodic and antiperiodic boundary conditions, plot the lowest energy eigenvalues for $\sigma=\pm 1$ as functions of $\lambda$ and $N$.
\end{xca}
We therefore define
\begin{equation}
a_{N+1}:= -e^{i\pi \mathcal{N}} a_1. \nonumber
\end{equation}
 In this way the Hamiltonian becomes
 \begin{equation}
 \label{eq:H-fermions-periodic}
H  =  - \sum_{i=1}^{N} (a_i^\dagger a_{i+1}  + a_{i+1}^\dagger a_i +a_i^\dagger a_{i+1}^\dagger - a_i a_{i+1}- 2 \lambda a_i^\dagger a_i )
- \lambda N.
\end{equation}
Notice that the sum now goes over  $1\leq i\leq N$. So we see that we can describe \emph{both} sectors using the \emph{same} quadratic form, the choice of parity being now encoded in the boundary conditions for the fermion operators.
It should be remarked that when we consider anti-periodic boundary conditions for the fermions we are still
considering periodic boundary conditions for the \emph{spin chain}. Anti-periodicity for the fermions in this context is only related  to the choice of a negative eigenvalue for the parity operator.

Let us now introduce the Fourier transformed operators
\begin{equation}
\label{eq:Fourier-dk}
d_k :=\frac{1}{\sqrt{N}} \sum_{l=1}^N a_l e^{-i \phi_k l}.
\end{equation}
The choice of the phases $\phi_k$ must be made in such a way that the CAR algebra is preserved, but in addition it has to imply the boundary condition $a_{l+N} = -e^{i\pi \mathcal N } a_l$. Therefore, the phases $\phi_k$ will also depend on the parity sector.
\begin{xca}\label{ex:1.8}
Show that in the \emph{even} parity sector ($\sigma=1, a_{l+N} = - a_l$ ), the phases are given by
\begin{equation}
\phi_k = \left(\frac{ 2k + 1}{N}\right) \pi,\nonumber
\end{equation}
whereas in the \emph{odd} parity sector ($\sigma=-1, a_{l+N} =  a_l$ ) they are given by
\begin{equation}
\phi_k = \left(\frac{ 2k + 2}{N}\right) \pi. \nonumber
\end{equation}
\end{xca}
A word of caution is perhaps in order: In principle, $k$ is an integer that takes values in the range $1\leq k\leq N$. Nevertheless, due to the translational symmetry, $k$ actually belongs to a ``Brillouin zone'', meaning with this that there is no difference
between $k$ and $k\pm N$. So, for instance, for $\sigma=1$ we have that $e^{i(\phi_k+\phi_{k'})}=1$ whenever
$2\pi (k+k'+1)/N$ is an integer or, in other words, when
\begin{equation}
\label{eq:mod-N} k+k'+1 =0 \;\;\;(\mbox{mod}\; N).
\end{equation}
For this reason, even if the allowed values of $k$ are originally in the range $\lbrace 1,\ldots, N\rbrace$, we will allow $k$
to take any integer value, provided we interpret this (for $\sigma=1$) in the sense of (\ref{eq:mod-N}).
With this  convention we then obtain, for example:
\begin{equation}
\phi_{-k-1} = -\phi_k.\nonumber
\end{equation}
A similar remark applies to the case $\sigma=-1$.
For the remaining part of this section we will confine ourselves to the even parity sector and so $\sigma=1$ will
be tacitly assumed.

We can now go back to the Hamiltonian (\ref{eq:H-fermions-periodic}) and insert there the new operators defined by (\ref{eq:Fourier-dk}). As a result, we obtain the following form of the Hamiltonian:
\begin{equation}
\label{eq:H-en-terminos-de-d_k}
H = \sum_{k=1}^N \left [  2(\lambda -\cos \phi_k) d_k^\dagger d_k
-i \sin \phi_k (d_k^\dagger d_{-k-1}^\dagger + d_k d_{-k-1}) -\lambda    \right].
\end{equation}
This form of the Hamiltonian is indeed very convenient as it suggests that a diagonalization by means of
a Bogoliubov transformation that only mixes $d_k^{(\dagger)}$ with  $d_{-k-1}^{(\dagger)}$ should be possible.
For this reason we propose the following transformation:
\begin{equation}
\label{eq:c_k}
c_k = \alpha_k d_k  + \beta_k d_{-k-1}^{\dagger},
\end{equation}
with $\alpha_k, \beta_k$ (possibly complex) coefficients to be found.  Imposing the condition $\lbrace c_k, c_{k'}^\dagger \rbrace=\delta_{k,k'}$ we readily obtain $|\alpha_k|^2+|\beta_k|^2=1$. Furthermore, it is easy to check that the choices
\begin{equation}
\alpha_{-k-1}   =  \alpha_k,\;\;\;\;\;\;\beta_{-k-1}   = -\beta_k, \nonumber
\end{equation}
enforce $\lbrace c_k, c_{k'} \rbrace = 0$. Inverting (\ref{eq:c_k}) and inserting the result in (\ref{eq:H-en-terminos-de-d_k}), we obtain:
\begin{eqnarray}
\label{eq:RL2}
\lefteqn{H = \sum_{k=1}^N\left(    c_k^\dagger c_{-k-1}^\dagger \Big[ 2(\cos\phi_k -\lambda)\alpha_k\beta_k
-i \sin\phi_k (\alpha_k^2 + \beta_k^2)\Big] + \mbox{h.c.} \right) {  } }   \nonumber\\
& & { } +2 \sum_{k=1}^N c_k^\dagger c_k \Big[ (\lambda -\cos \phi_k)(  |\alpha_k|^2-|\beta_k|^2) + i \sin\phi_k (\bar{\alpha}_k \beta_k  -   \alpha_k \bar{\beta}_k ) \Big] { }\;\;\;\nonumber\\
&&  { }+ \sum_{k=1}^N\Big(   2(\lambda -\cos\phi_k) |\beta_k|^2  + i \sin \phi_k ( \alpha_k  \bar{\beta}_k-   \bar{\alpha}_k\beta_k ) -\lambda \Big). { }
\end{eqnarray}
Vanishing of the coefficients in the first sum leads to
\begin{equation}
\label{eq:RL1}
\frac{\sin\phi_k}{\lambda-\cos\phi_k} = \frac{2 i \alpha_k \beta_k}{\alpha_k^2 + \beta_k^2}.
\end{equation}
Recalling that $|\alpha_k|^2+|\beta_k|^2=1$, we see that the right hand side of this expression simplifies if we parameterize
 $\alpha_k$ and $\beta_k$ as follows:
 \[
 \alpha_k = \cos\frac{\theta_k}{2},\;\;\; \beta_k = -i \sin\frac{\theta_k}{2}.
 \]
 In fact, with this choice (\ref{eq:RL1}) simplifies to
 \[
 \tan \theta_k = \frac{\sin \phi_k}{\lambda - \cos\phi_k}.
 \]
 We can now introduce a
 normalization factor $\Lambda_k$\footnote{In section \ref{sec:XYmodel} we will multiply  $H$ by a factor  1/2, in order to simplify some expressions.}
  by means of
 \[
 \sin\theta_k = \frac{\sin\phi_k}{(\Lambda_k/2)}, \;\;\;\;\;\;\cos\theta_k = \frac{\lambda-\cos\phi_k}{(\Lambda_k/2)}.
 \]
 This leads to
 \[
 \Lambda_k = 2 \sqrt{(\lambda - \cos\phi_k)^2 + \sin^2\phi_k}.
 \]
 With these definitions, the remaining terms in (\ref{eq:RL2}) simplify further yielding the desired diagonal
 form for the Hamiltonian:
 \begin{equation}
 \label{eq:RL3}
 H = \sum_{k=1}^N \Lambda_k \Big( c_k^\dagger c_k -\frac{1}{2}\Big).
 \end{equation}
 \begin{xca}\label{ex:1.9}
  Verify all the computations leading from (\ref{eq:H-fermions-periodic}) to (\ref{eq:RL3}).
 \end{xca}
 {\bf The Ground State.} Having brought the Hamiltonian to the diagonal form (\ref{eq:RL3}), we now proceed to find
 an expression for the ground state. Since $\Lambda_k$ has been chosen to be positive (or zero) for all values
 of $k$, the ground state $|\Omega(\lambda)\rangle$ will be given by the condition $c_k |\Omega(\lambda)\rangle=0,\;\;
 \forall k\in \lbrace 1,\ldots,N\rbrace$. Let $|0\rangle$ be the ``vacuum'' state for the operators $d_k$, defined by the condition $d_k|0\rangle = 0$. Observe now
 that if we define
 \[
 B_k:= \cos\frac{\phi_k}{2} + i \sin\frac{\phi_k}{2} d_k^\dagger d_{-k-1}^\dagger,
 \]
 then we get $c_k B_k |0\rangle = 0$ for all $k$. Since $[B_k,B_k']=0$, one would think that
 $\prod_{k=1}^N B_k |0\rangle$ does the job, but we must take into account that $B_{k'}= B_k$ whenever
 $k'+k+1=0\;(\mbox{mod}\,N)$.
 If $N$ is even, this means that \emph{each} $B_k$ appears \emph{twice} in $\prod_{k=1}^N B_k$ and then an easy
 calculation shows that $c_{k'}\prod_{k=1}^N B_k |0\rangle$ does not vanish. Hence, we must be careful to include in
 the product only one instance of each $B_k$.

To do this, we distinguish  two cases: $N$ even and $N$ odd. Restricting the domain of $\phi_k$ to the interval $(-\pi,\pi)$ we get,  for $N$ even, $0<\phi_k<\pi$ for $1\leq k\leq
N/2-1$ (also for $k=N$) and  $-\pi <\phi_k<0$ for $N/2 \leq k \leq N-1$. Now, for each $k$ such that
$0<\phi_k<\pi$, there is \emph{exactly} one $k'$ with $-\pi<\phi_{k'}<0$ and such that $B_k=B_{k'}$. In fact, let $k$
be such that $0<\phi_k<\pi$ and put $k'=N-k-1$. It follows that $-\pi<\phi_{k'}<0$ and $k'+k+1 = 0\; (\mbox{mod} N)$,
so that $B_k=B_{k'}$.  We therefore see that, \emph{for even} $N$, the ground state is given by
 \[
|\Omega(\lambda)\rangle=
   \prod_{0<\phi_k<\pi}B_{k} |0\rangle.
 \]
Writing $N=2 M$, we may as well consider $k$ to be such that $-M\leq k \leq M-1$. The ground state, then, takes the form
$|\Omega(\lambda)\rangle=
   \prod_{k=0}^{M-1}B_{k} |0\rangle.$

For the case $N$ odd, write $N=2M +1$. In this case all the $B_k$ appear twice in $\prod_{k=1}^N B_k$, \emph{except}
when $k=M$. When $k=M$ we have $\phi_k=\pi$. This, together with  the fact that $d_M^\dagger\equiv d_{-M-1}^\dagger$,
implies that $B_M\equiv 0$, so we must exclude it from the product.  In analogy to the previous case, for each $k$ for
which $0<\phi_k <\pi$ we can find exactly one $k'$ such that $-\pi< \phi_{k'}< 0$ and $B_k=B_{k'}$, namely $k'=2M-k$.
Now, for $0\leq K\leq M-1$, we have $0< \phi_k < \pi$, whereas for $M+1\leq k\leq 2M$, we have $\phi_k\in (-\pi,0)$.
The result is that, \emph{for odd} $N$, the ground state is given by
 \begin{equation}
 \label{eq:GS-Ising-odd}
|\Omega(\lambda)\rangle=
   \prod_{0<\phi_k<\pi}B_{k} d_M^\dagger |0\rangle.
 \end{equation}
 The operator $d_M^\dagger$ has to be included in (\ref{eq:GS-Ising-odd}) in order to ensure that $c_M|\Omega(\lambda)\rangle=0$ (recall that in this case
 $c_M=-id_M^\dagger$).
 \subsection{Entanglement properties of the ground state}
 \label{sec:entanglement-properties-GS}
 Having obtained the exact spectrum of the model, as well as the ground state both for open and periodic boundary
 conditions, it is now possible -in principle- to obtain all relevant correlation functions and relate them to physical
 observables like magnetization, susceptibilities, and so on. As is well-known, the quantum Ising chain does not
 present any phase transition at finite temperature, but it certainly is one of the paradigmatic examples of a system
 displaying a quantum critical point. In fact, in the thermodynamic limit there is a quantum phase transition
 occurring when the external field $\lambda$ approaches the critical value $\lambda_c=1$. In this notes we will review
 an approach to the study of this quantum phase transition from the point of view of \emph{entanglement}. We will also
 be interested in certain geometric and topological properties of the ground state that are relevant for this quantum
 phase transition.

 Let us start by considering the model (\ref{eq:H-Ising-transv}) for an open,  two-site chain ($N=2$). Putting
 $\alpha=\sqrt{1+4 \lambda^2}$, we find that the ground state is given in the spin basis by
 \begin{equation}
 \label{eq:GS-Ising-two-sites}
 |\Omega(\lambda)\rangle = \frac{1}{\sqrt{2 \alpha (\alpha - 2\lambda)}} \left(|++\rangle +
 (\alpha-2\lambda)|--\rangle\right).
 \end{equation}
 Taking the partial trace with respect to one of the sites, we obtain
 \begin{equation}
\rho \equiv \mbox{Tr}_1 |\Omega(\lambda)\rangle  \langle \Omega(\lambda)| =
 \frac{1}{2 \alpha (\alpha - 2\lambda)} \left(|+\rangle\langle +| + (\alpha -2\lambda)^2|-\rangle\langle
 -|\right), \nonumber
 \end{equation}
 which in terms of the basis $\lbrace |+\rangle, |-\rangle\rbrace$ takes the following simple matrix form:
 \begin{equation}\label{eq:rho-N=2}
 \rho =\left( \begin{array}{cc}
   \frac{1}{2} +\frac{\lambda}{\alpha} & 0 \\
   0 & \frac{1}{2} -\frac{\lambda}{\alpha} \\
 \end{array}
 \right).
 \end{equation}
 Figure \ref{fig:1.1} displays a plot of the eigenvalues of $\rho$ as a function of $\lambda$.
\begin{figure}[t!]
\centering
\includegraphics[width=0.6\linewidth]{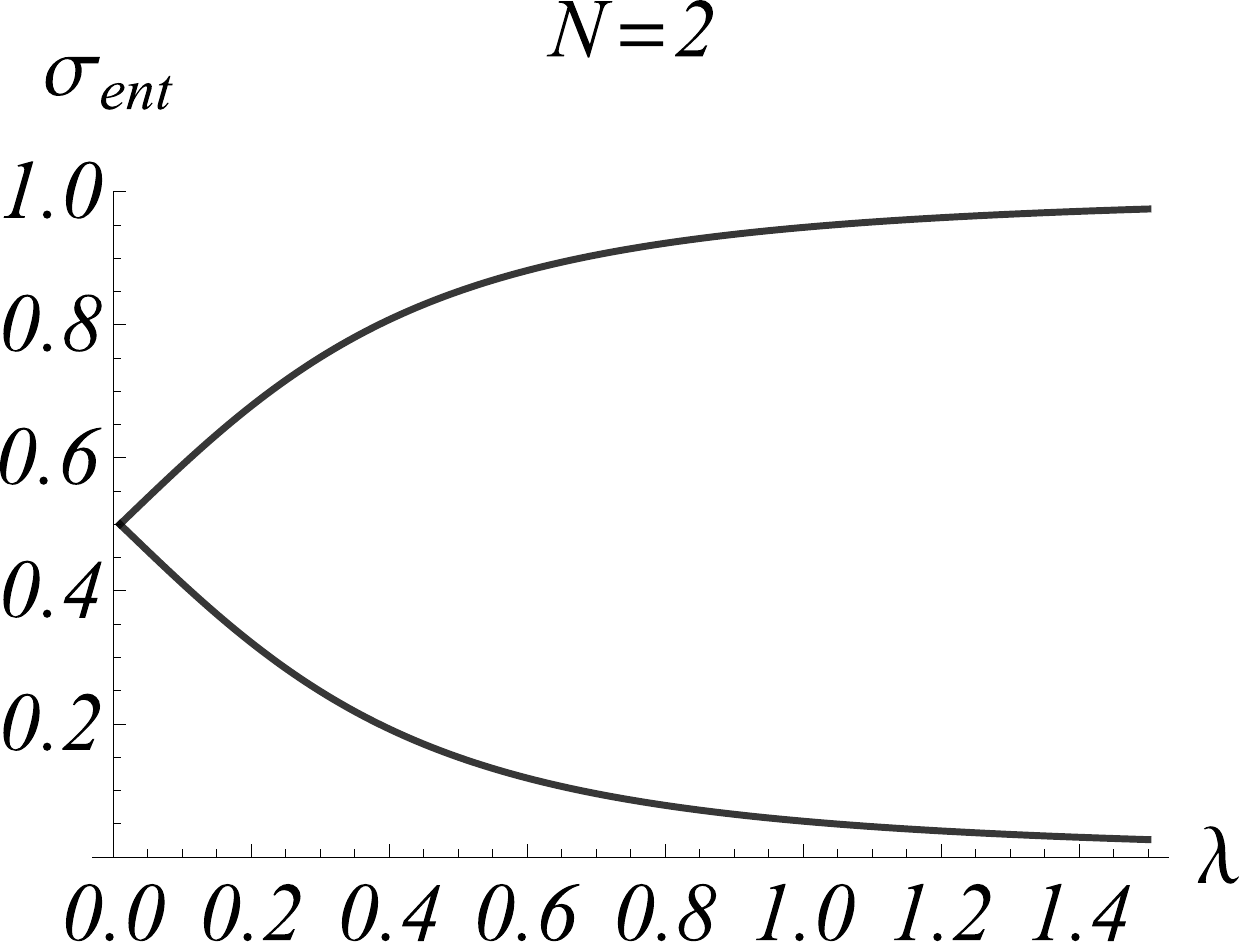}
\caption{Entanglement spectrum for two-site chain as a function of the external field  $\lambda$} \label{fig:1.1}
\end{figure}
As we shall see, the behavior of this \emph{entanglement spectrum} as the  size of the  chain increases will provide
important information regarding the quantum phase transition in this model.
\begin{xca}\label{ex:1.10}
Follow the steps outlined in section \ref{sec:OpenBoundaryConditions} to show that for $N=2$ the matrices $g$ and $h$
introduced in (\ref{eq:g-h-defined}) are given by
\[
g=\frac{1}{\sqrt{8\alpha}}\left(%
\begin{array}{cc}
  \frac{2\lambda +\alpha -1}{\sqrt{\alpha -1}} &  \frac{2\lambda +\alpha -1}{\sqrt{\alpha -1}}\\
  \frac{-2\lambda -\alpha -1}{\sqrt{\alpha +1}} & \frac{2\lambda +\alpha +1}{\sqrt{\alpha +1}}  \\
\end{array}%
\right),\qquad
h=\frac{1}{\sqrt{8\alpha}}\left(%
\begin{array}{cc}
  \frac{2\lambda -\alpha +1}{\sqrt{\alpha -1}} & \frac{\alpha -1-2\lambda}{\sqrt{\alpha -1}}\\
  \frac{-2\lambda +\alpha +1}{\sqrt{\alpha+1}} & \frac{\alpha +1-2\lambda}{\sqrt{\alpha +1}}  \\
\end{array}%
\right).
\]
Use this in order to show that the condition $c_k|\Omega(\lambda)\rangle\stackrel{!}{=}0\;\;(k=1,2)$ leads exactly to
(\ref{eq:GS-Ising-two-sites}).
\end{xca}
As we have seen, in this simple case ($N=2$) the density matrix for the reduced state can be easily computed. In spite
of the fact that the model can be exactly solved, the computation of correlation functions (as those involved in the
computation of the reduced density matrix) for arbitrary values of $N$ is a non-trivial task. We will therefore
present, following~\cite{Peschel2009}, a method that allows us to obtain the spectrum of $\rho$ for arbitrary values of
$N$ and that can be used for any quadratic Hamiltonian. As a preparation for the general case, we first explain the
idea using our very simple example of a two-site chain, where all calculations can be explicitly carried out.

Let $\mathcal A$ be the CAR algebra generated by the two fermionic operators $a_1,a_2$ and the unity $\mathds 1$. We
consider the ``Fock representation'' where $a_i^\dagger$ acts on the state $|0\rangle \equiv |++\rangle$. This state is
cyclic and therefore we can write the ground state (\ref{eq:GS-Ising-two-sites}) in terms of it (\emph{cf.}~exercise
\ref{ex:1.10}):
\begin{equation}
\label{eq:GS-N=2}
|\Omega(\lambda)\rangle = \frac{1}{\sqrt{2\alpha(\alpha-2\lambda)}} \left(|0\rangle +(\alpha-2\lambda)a^\dagger_1
a^\dagger_2|0\rangle\right).
\end{equation}
As discussed in section \ref{sec:algebras-and-states}, we can use it to define a state $\omega_\lambda$ in the \emph{algebraic sense}:
\begin{eqnarray}
\omega_\lambda: \mathcal A &\longrightarrow & \mathds C \nonumber\\
A & \longmapsto & \omega_\lambda(A):= \langle\Omega(\lambda)|\,A\,|\Omega(\lambda)\rangle.\nonumber
\end{eqnarray}
This is, of course, a pure state. Nevertheless, if we \emph{restrict} it to the subalgebra $\mathcal A_1$ that is
generated by $\mathds 1$ and $a_1$ (``a half-chain''), then we find that in general the resulting state $\omega_\lambda
|_{\mathcal A_1}$ will be a \emph{mixed} state. Let us consider this restriction in more detail. Defining
$\omega_{\lambda,1}:=\omega_\lambda |_{\mathcal A_1}$ we expect that for any $A\in \mathcal A_1$
\begin{equation}
\label{eq:A1-N=2} \omega_{\lambda,1}(A)= \mbox{Tr}(\rho A)
\end{equation}
 will hold, with
\begin{equation}
\rho=\frac{e^{-\varepsilon_1 a_1^\dagger a_1}}{Z},\nonumber
\end{equation}
$Z=\mbox{Tr} e^{-\varepsilon_1 a_1^\dagger a_1}$,
 both traces being  taken on the Hilbert space corresponding to the half-chain (which in this case has a
 basis given by $\lbrace |0\rangle, a_1^\dagger|0\rangle  \rbrace$). Finding $\varepsilon_1$ amounts to finding the
 spectrum of $\rho$ and this, in turn, can be used to compute the entanglement entropy. The specific form of $\rho$
 as the exponential  of a quadratic form is to be expected in view of the fact that the ground state is Gaussian.
 In order to find $\varepsilon_1$, we notice that it is possible to compute the left hand side of (\ref{eq:A1-N=2}).
 The only relevant case is $A=a_1^\dagger a_1$, so let us define $C_1:=\omega_{\lambda,1}(a_1^\dagger a_1)$. Using
(\ref{eq:GS-N=2}), we get:
\begin{eqnarray}
C_1 & = & \omega_{\lambda,1}(a_1^\dagger a_1) =  \omega_{\lambda}(a_1^\dagger a_1) \nonumber\\
    & = & \langle \Omega(\lambda) |a_1^\dagger a_1| \Omega(\lambda)\rangle\nonumber\\
    &=& \frac{1}{2} -\frac{\lambda}{\alpha},\nonumber
\end{eqnarray}
where we still keep the notation $\alpha=\sqrt{1+4\lambda^2}$. On the other hand, we have
\begin{equation}
\mbox{Tr}(\rho a_1^\dagger a_1)=\frac{1}{1+e^{\varepsilon_1}}.\nonumber
\end{equation}
It follows that
\begin{equation}
1-2C_1 =\frac{2\lambda }{\alpha} = \tanh \frac{\varepsilon_1}{2} \nonumber
\end{equation}
or, equivalently,
\begin{equation}
\varepsilon_1 = \ln \left( \frac{1/2 +\lambda/\alpha}{1/2 -\lambda/\alpha}\right). \nonumber
\end{equation}
Since $\rho$ is already diagonal in the basis $\lbrace |0\rangle, a_1^\dagger|0\rangle  \rbrace$, we obtain:
\[
\rho=\left(%
\begin{array}{cc}
  \frac{1}{1+e^{-\varepsilon_1}} &            0                   \\
                0                & \frac{1}{1+e^{\varepsilon_1}}  \\
\end{array}%
\right) = \left(%
\begin{array}{cc}
  \frac{1}{2} +\frac{\lambda}{\alpha} & 0 \\
  0 &  \frac{1}{2} -\frac{\lambda}{\alpha} \\
\end{array}%
\right),
\]
which coincides with (\ref{eq:rho-N=2}). Notice, in passing, that $C_1$ can also be obtained directly from $h$. In
fact, a small computation shows that $C_1 = (h^t h)_{11}$.

This last remark gives a hint towards the solution for the general case: As we shall see, the spectrum of $\rho$ in the
general case can be extracted from an eigenvalue problem that uses the matrices $h$ and $g$ as input. Let us, then,
start by considering the algebra $\mathcal A$ generated by $\mathds 1$ and by all operators $a_i$, with $i=1,\ldots,N$.
Again, the ground state $|\Omega(\lambda)\rangle$ gives rise to a state $\omega_\lambda:\mathcal A\rightarrow \mathds
C$. We are interested in the restriction of this state to the subalgebra $\mathcal A_L$ generated by $\lbrace \mathds
1,a_1,\ldots,a_L \rbrace$. By the same argument as before, we expect this new state
$\omega_{\lambda,L}:=\omega_\lambda|_{\mathcal A_L}$ to be of the form $\omega_{\lambda,L}(A)=\mbox{Tr}(\rho A)\;$
($A\in \mathcal A_L$), with
\begin{equation}
\rho =\frac{e^{-H_\rho}}{Z}, \;\;\;\;\; Z=\mbox{Tr}e^{-H_\rho} \nonumber
\end{equation}
and
\begin{equation}
H_\rho=\sum_{i,j=1}^L\left( a_i^\dagger K_{ij}a_j +\frac{1}{2}(a_i^\dagger M_{ij}a_j - a_i M_{ij}a_j) \right).\nonumber
\end{equation}
For the Ising model, the matrices $K$ and $M$ (yet to be determined) will be real, so that $K$ has to be symmetric, and
$M$ antisymmetric. Now we define~\cite{Peschel2009}, for $i,j=1,\ldots , L$ the correlation functions
\begin{equation}\label{eq:C_ij-F_ij}
C_{ij}:= \omega_{\lambda,L}(a_i^\dagger a_j),\;\;\;\; F_{ij}:= \omega_{\lambda,L}(a_i^\dagger a_j^\dagger).
\end{equation}
Using (\ref{eq:g-h-defined}) one shows that $C_{ij}=(h^t h)_{ij}$ and $F_{ij}=(h^t g)_{ij}$. Let us first consider the
simpler case $M=0$. In this case $H_\rho$ can be diagonalized by a simple transformation of the form
\begin{equation}\label{eq:b-gamma-a}
b_l = \sum_{i=1}^L \gamma_{li}a_i,\;\;\;\;l=1,\ldots,L,
\end{equation}
where the matrix $\gamma$ is such that $\gamma^t \gamma =\mathds 1_L$. Solving the eigenvalue problem
\begin{equation}
K|\varphi_l\rangle = \varepsilon_l |\varphi_l\rangle \nonumber
\end{equation}
and introducing the convention $|i\rangle= (0,0,\ldots,1,\ldots,0)$ ($1$ in the $i^{th}$ entry and $0$ everywhere
else), we set $\gamma_{li}:= \langle i| \varphi_l\rangle$, \emph{i.e.}, the rows of  $\gamma$ are the eigenvectors
$|\varphi_l\rangle$. We then obtain
\begin{equation}
H_\rho =\sum_{l=1}^L\varepsilon_l b_l^\dagger b_l.  \nonumber
\end{equation}
It follows that $Z=\prod_{l=1}^L (1+ e^{-\varepsilon_l})$, so that $\omega_{\lambda,L}(b_i^\dagger
b_j)=\delta_{ij}(1+e^{\varepsilon_i})^{-1}$. Using (\ref{eq:b-gamma-a}) we then obtain
$C_{ij}=\sum_{l=1}^L \gamma_{li}\gamma_{lj}(1+e^{\varepsilon_l})^{-1}$, which is equivalent to the matrix identity
$C=(1+e^K)^{-1}$. This can also be written in the following form:
\[
1-2C=\tanh\frac{K}{2}.
\]
If $M\neq 0$, we replace ($\ref{eq:b-gamma-a}$) by
\[
b_l =\sum_{i=1}^L (\gamma_{li} a_i +\eta_{li} a_i^\dagger),
\]
where now we require (\emph{cf.}~exercise~\ref{ex:1.3}):
\[
\gamma^t \gamma +\eta^t \eta =\mathds 1_L,\;\;\;\gamma^t \eta +\eta^t \gamma = 0.
\]
Once again, the matrices $\gamma$ and $\eta$ should be such that $\rho$ becomes diagonal, \textit{i.e.}, $H_\rho=\sum_{l=1}^L
\varepsilon_lb_l^\dagger b_l$ should hold. The eigenvalue problem to be solved is
\begin{eqnarray}
\label{eq:(K+M)(K-M)}
(K-M)|\psi_l\rangle &=& \varepsilon_l |\varphi_l\rangle, \nonumber\\
(K+M)|\varphi_l\rangle &=& \varepsilon_l |\psi_l\rangle.
\end{eqnarray}
Using these eigenvectors as rows for the matrices $\psi$ and $\varphi$ and putting
\[
\gamma=\frac{1}{2}(\varphi+\psi),\;\;\;\;  \eta=\frac{1}{2}(\varphi-\psi),
\]
we obtain the desired diagonal form for $H_\rho$. As in the previous case, we can now write all correlation functions
that involve the operators $a_i$ in terms of correlation functions that only involve the operators $b_l$.
\begin{xca}\label{ex:1.11}
Obtain the following two relations:
\begin{eqnarray}
C_{ij} & = & \sum_{l=1}^L \left(\frac{\gamma_{li} \gamma_{lj}}{1+e^{\varepsilon_l}} + \frac{\eta_{li}
\eta_{lj}}{1+e^{-\varepsilon_l}}\right),\nonumber\\
F_{ij} & = & \sum_{l=1}^L \left(\frac{\gamma_{li} \eta_{lj}}{1+e^{\varepsilon_l}} + \frac{\gamma_{lj}\eta_{li}
}{1+e^{-\varepsilon_l}}\right), \nonumber
\end{eqnarray}
and use them in order to prove the following identity:
 \[
 \frac{1}{1+e^{(K\pm M)}} -\frac{1}{1+e^{-(K\pm M)}} = 2C -\mathds 1_L \pm 2F,
 \]
where $C$ and $F$ are the $L\times L$ matrices with components given by (\ref{eq:C_ij-F_ij}).
\end{xca}
From this exercise we obtain, making use of (\ref{eq:(K+M)(K-M)}),
\begin{equation}
\label{eq:2C-2F}
(2C -\mathds 1_L - 2F)(2C -\mathds 1_L + 2F) |\varphi_l\rangle = \tanh^2 \frac{\varepsilon_l}{2} |\varphi_l\rangle.
\end{equation}
\begin{figure}[b!]
\centering
\includegraphics[width=0.6\linewidth]{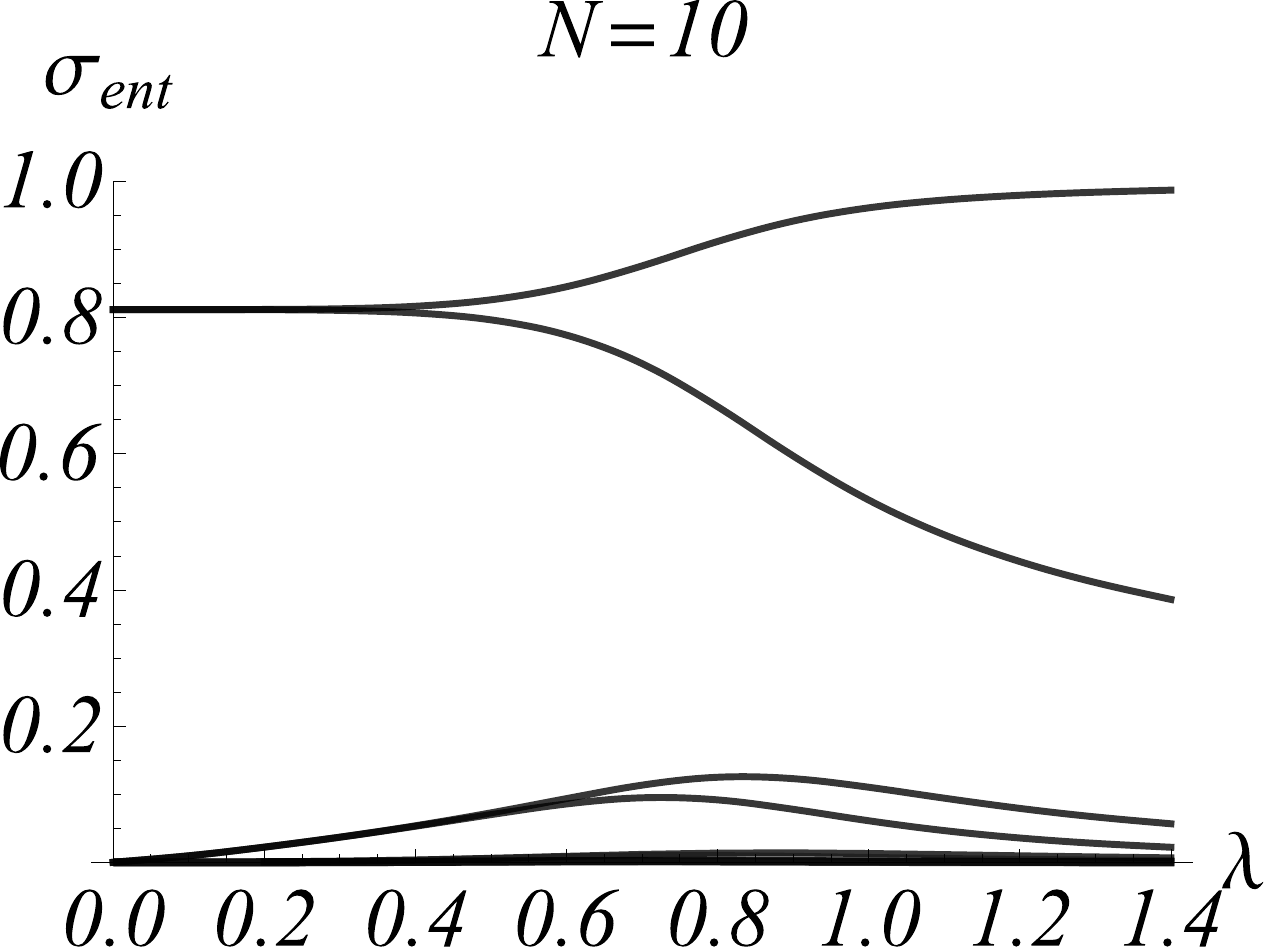}
\caption{Entanglement spectrum for an Ising  chain of $N=10$ sites as a function of the external field  $\lambda$}
\label{fig:1.2}
\end{figure}
We will now make use of this expression in order to obtain the entanglement spectrum and the entanglement entropy of
the ground state of the Ising chain when $L=N/2$, for various values of $N$ and $\lambda$. In Fig.~\ref{fig:1.2} we
plot the entanglement spectrum (in logarithmic scale) for a chain of $N=10$ sites, when the ground state is restricted
to a half-chain. We notice that, in contrast to the case $N=2$, the gap between the first two eigenvalues is now closed
for almost all values of $\lambda$ below $\lambda_c=1$.

In Fig.~\ref{fig:subfigures} we compare the entanglement spectrum for an Ising chain of $N=10$ sites against
the spectrum of a chain of $N=100$ sites. It is apparent that for $N=100$ the gap near $\lambda_c$ has almost closed.

\begin{figure}[t!]%
\hspace{0.8cm}
\parbox{1.2in}{\includegraphics[width=0.4\textwidth]{Fig-entangl-spec-Ising-N=10-nuevo}}%
\qquad \qquad \qquad
\begin{minipage}{1.2in}%
\includegraphics[width=1.5\textwidth]{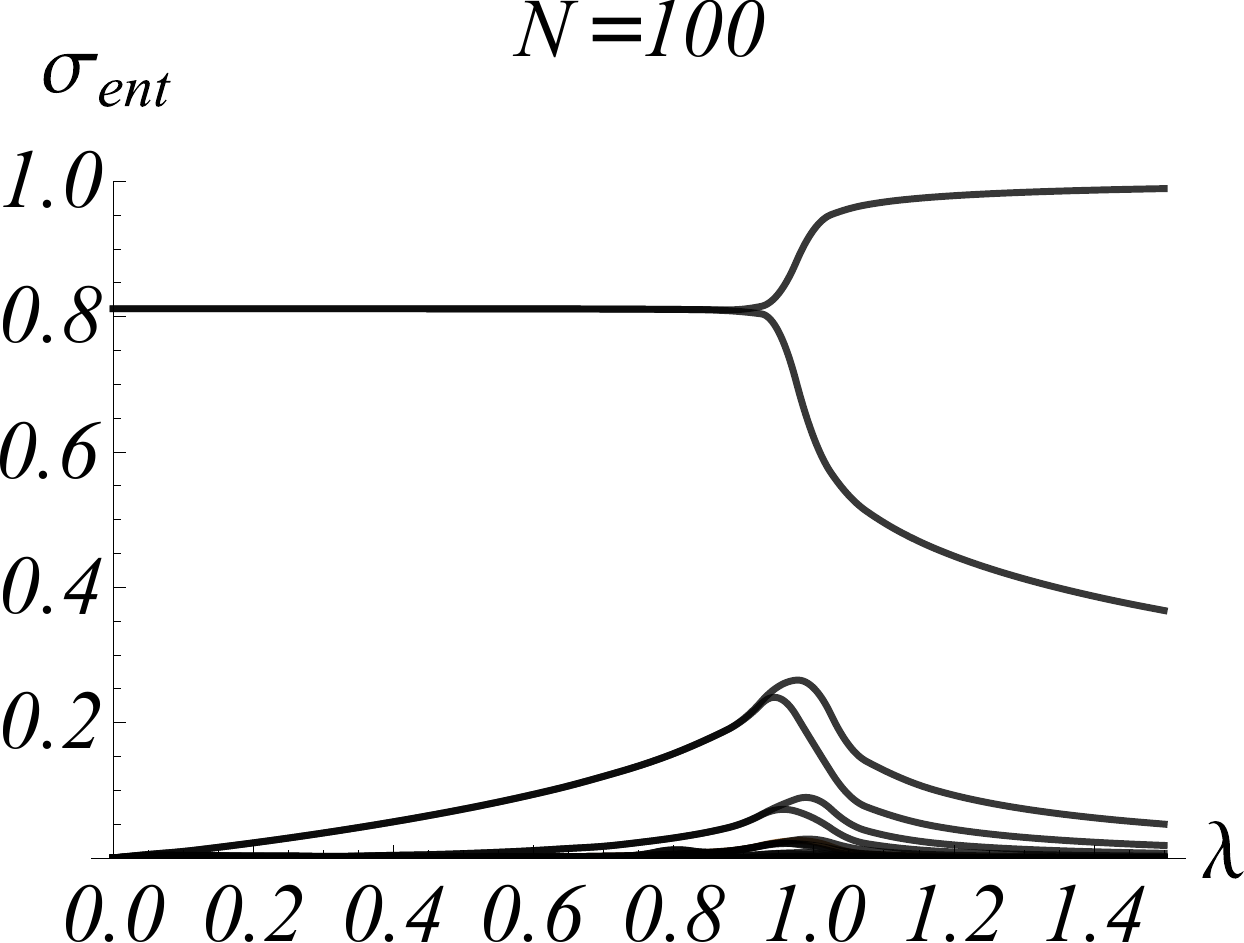}
\end{minipage}%
\caption{Entanglement spectrum for an Ising  chains of $N=10$ and $N=100$, showing how the gap at $\lambda=1$ closes as the number of sites increases.}%
\label{fig:subfigures}%
\end{figure}

This gap, when evaluated at the critical value $\lambda_c$,
bears the name of \emph{Schmidt gap}. We will use the notation $\Delta_S$ for the Schmidt gap. De Chiara and collaborators have shown, using finite-size scaling, that the Schmidt gap can be interpreted as an \emph{order parameter}~\cite{DeChiara2012}.
In fact, from a finite size scaling analysis, one can in fact obtain numbers $\mu_1$ and $\mu_2$ such that when we plot
$\Delta_S N^{\mu_1}$ versus $|\lambda-\lambda_c|N^{\mu_2}$, all points collapse to a single curve, irrespective of the value of $N$ chosen. This result was obtained in~\cite{DeChiara2012} using a DMRG algorithm. Figure \ref{fig:catalina}, obtained by C. Rivera using the same DMRG technique~\cite{Rivera2014} depicts the resulting curve, for which points corresponding to different lengths of the chain (ranging from $N\approx 700$ to $N=8000$) are seen to collapse to a single curve. The values of $\mu_1$ and $\mu_2$ for which this is achieved give  values for the critical exponents which are very close to the actual values ($\nu=1$ and $\beta=1/8$).
\begin{figure}
\centering
\includegraphics[width=0.8\linewidth]{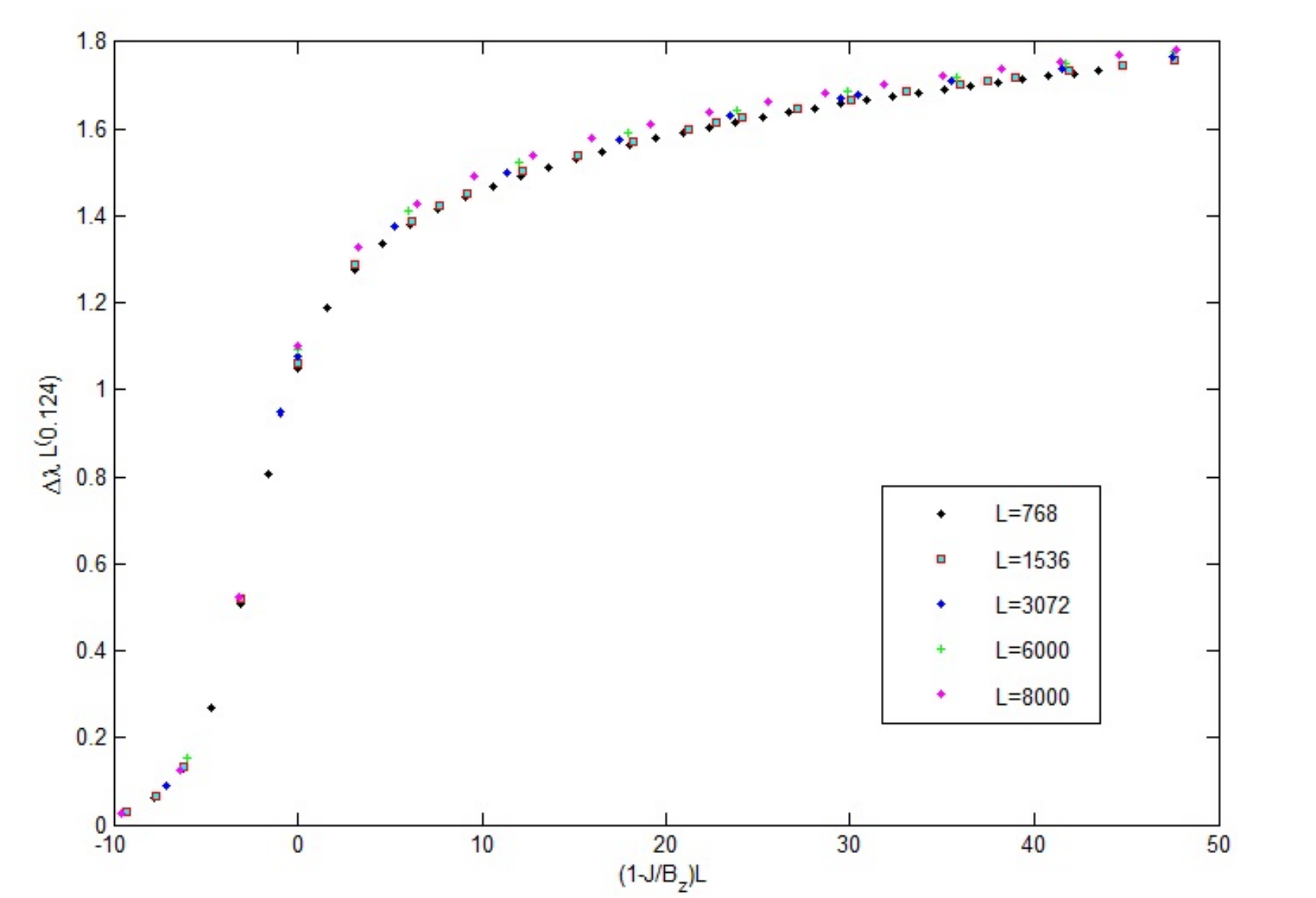}
\caption{Finite-size scaling}
\label{fig:catalina}
\end{figure}
The entanglement spectrum can also be used to compute the entanglement entropy  and its dependence on the size of the subsystem chosen.
A numerical evaluation of this entropy gives the result expected from conformal field theory, that is, at the critical point, the entanglement entropy follows a logarithmic behavior, of the form
\[
S(l) \sim \frac{c}{3} \log l,
\]
where $c$ denotes the central charge of the corresponding CFT. In this case (Ising chain), numerical evaluation of the entropy
for different chain sizes gives a value of $c=1/2$, in accordance with the predictions of CFT.
\subsection{The XY model}
\label{sec:XYmodel}
The XY model is a generalization of the Ising model. The model also represents a spin chain with nearest-neighbor interactions. Its Hamiltonian is given by
\[
 H= -\frac{1}{2} \sum_{j=1}^N \left(\frac{1+\gamma}{2} \sigma_j^x \sigma_{j+1}^x  + \frac{1-\gamma}{2}
\sigma_j^y \sigma_{j+1}^y +\lambda \sigma_j^z\right).
\]
There are two parameters in this model. As in the Ising case, $\lambda$ represents an external field. The parameter $\gamma$ is an anisotropy parameter. Notice that if $\gamma=1$, we recover, up to an overall factor of $1/2$, the Ising Hamiltonian (\ref{eq:H-Ising-transv}). Our interest in this model lies in the fact that according to the values of the parameters $\gamma$ and $\lambda$, we may study critical regions corresponding to \emph{different} universality classes: If $\gamma>0$, there is a critical line determined by $\lambda = 1$ that corresponds to the Ising universality class, with central charge $c=1/2$. On the other hand, for $\gamma=0$ the whole line $0<\lambda< 1$ corresponds to the $XX$ universality class, with central charge $c=1$.
\subsection{Solution of the model}
 The solution of this model can be obtained in the same fashion as we did in the case of the Ising model. The first step, then, consists in using the Wigner-Jordan transformation to bring the Hamiltonian to a quadratic form in fermionic operators. The result one obtains is
 \begin{eqnarray}
 \label{eq:H_XY}
 \lefteqn{H= -\frac{1}{2}\sum_j (a_j^\dagger a_{j+1} + a_{j+1}^\dagger a_j + \gamma a_j^\dagger a_{j+1}^\dagger
-\gamma a_{j} a_{j+1}) { }} \\
&& { } +\lambda \left( \mathcal N -\frac{N}{2}\right),{ }\hspace{5cm}\nonumber
\end{eqnarray}
  with $\mathcal N\equiv \sum_j a_j^\dagger a_j$, the fermionic number operator.
 Since we are interested in the ground state, we consider the even parity sector and proceed, as with the Ising model, with  a Fourier transformation, followed by a Bogoliubov transformation. For odd $N$ ($=2M+1$) and with
\[
d_k=\frac{1}{\sqrt N}\sum_{l=1}^N a_l e^{-i\phi_k  l},\;\; \;\;\phi_k:=\left( \frac{2k+1}{N} \right )\pi,\;\; -M\leq k
\leq M,
\]
we obtain:
\[
H =\sum_k ((\lambda -\cos \phi_k) d_k^\dagger d_k -\frac{i \gamma}{2} \sin \phi_k (d_k^\dagger d_{-k-1}^\dagger
+d_k d_{-k-1})) -\frac{\lambda N}{2}.
\]
We can now consider a Bogoliubov transformation, as in (\ref{eq:c_k}),
\[
c_k=\cos\frac{\theta_k}{2} d_k - i \sin\frac{\theta_k}{2} d^\dagger_{-k-1}, \;\;\; \tan\theta_k = \frac{\gamma \sin
\phi_k}{\lambda-\cos\phi_k},
\]
and obtain the following diagonal form for the Hamiltonian:
\[
H = \sum_{k=-M}^M \Lambda_k\left(c_k^\dagger  c_k  -\frac{1}{2}\right),
\]
with
\[
\Lambda_k =\sqrt{(\lambda - \cos\phi_k)^2 + \gamma^2\sin^2\phi_k}.
\]
Defining $B_k:= \cos\frac{\phi_k}{2} + i \sin\frac{\phi_k}{2} d_k^\dagger d_{-k-1}^\dagger$ we obtain the following explicit
form for the ground state $|\Omega(\lambda,\gamma)\rangle$:
\[
|\Omega(\lambda,\gamma)\rangle =\left(\prod_{0\leq k <M}B_k\right)d_M^\dagger|0\rangle,
\]
where $|0\rangle$ is the vacuum state with respect to the operators $d_k$.
 \subsection{Criticality}
The XY model has a rich phase diagram, and this provides an opportunity to check the behavior of entanglement entropy at the critical point and its relation to the central charge. In the previous section we focused on a finite-size scaling analysis for the Ising model, which in turn brought up the difficulty of having to compute correlation functions for finite size chains. In this section, we will only be interested in properties in the thermodynamic limit, \textit{i.e.}, we will discuss the behavior of the entanglement entropy of a sub-chain of length $L$, considered as a subsystem of a chain of infinite length, or the geometric phase of the ground state associated to closed loops around the critical point. Calculations in the thermodynamic limit are greatly simplified because discrete sums become integrals that can be readily evaluated. Let us now very briefly describe how to obtain the entanglement entropy of a sub-chain of size $L$ in the thermodynamic limit. We will follow the approach of Vidal \emph{et al.}~\cite{Vidal2003}, which amounts to the real (\textit{e.g.}~Majorana) version of the method discussed in section \ref{sec:entanglement-properties-GS}.
We begin by replacing the fermion operators ($a_i$) appearing in (\ref{eq:H_XY}) by new, Majorana fermion operators ($\tilde c_i$),
 defined as follows:
\begin{eqnarray}
\tilde c_{2n-1} &:=& (a_n+a_n^\dagger) \nonumber \\
\tilde c_{2n} &:=&i(a_n^\dagger-a_n). \nonumber
\end{eqnarray}
These Majorana operators satisfy commutation relations of the form $\lbrace \tilde c_m ,\tilde c_n\rbrace =
2\delta_{mn}$. Now, consider the correlation matrix $\langle \tilde c_m \tilde c_n\rangle = \delta_{mn} +i
\Gamma_{mn}$, where the
 expectation value is taken with respect to the ground state. The matrix $\Gamma$ can be expressed as a block matrix, with the block $(i,j)$  being given by the $2\times 2$ matrix $\Pi_{j-i}$  ($i$ and $j$ range from $0$ to $N-1$, $N$ being the size of the full chain), where
\[
\Pi_l :=\left(
          \begin{array}{cc}
            0 & g_l \\
            -g_l & 0 \\
          \end{array}
        \right).
\]
In the thermodynamic limit ($N\rightarrow \infty$) the coefficient function $g_l$ takes the following compact form:
\[
g_l=\frac{1}{2\pi} \int_0^{2\pi} d\phi e^{-il\phi} \frac{(\cos\phi -\lambda -i\gamma \sin\phi )}{|\cos\phi-\lambda -i\gamma \sin\phi|}.
\]
The restriction of the correlation matrix to the sites belonging to a sub-chain of size $L$ is then a $L\times L$ matrix $\Gamma_L$ which is obtained just by deleting the entries of $\Gamma$ that do not correspond to sites in the sub-chain.
\begin{xca}
Following the method of section \ref{sec:entanglement-properties-GS} as well as Vidal et al.~\cite{Vidal2003}, show that the entanglement entropy corresponding to the restriction of the full chain to a sub-chain of $L$ sites is given by the formula
\begin{equation}
\label{eq:S_L}
S(L)=\sum_{m=0}^{L-1} H\left(\frac{1+\nu_m}{2}\right),
\end{equation}
where $H$  denotes the Shannon entropy as defined in  (\ref{eq:H(lambda)}) and $\lbrace\nu_m\rbrace_m$ are the positive eigenvalues of $i\Gamma_L$.
\end{xca}
Figure \ref{fig:vidal} displays the behavior of the entropy as a function of $L$. One of the curves (dashed)
shows the results for the XX chain, which corresponds to $\gamma=0$. From the logarithmic dependence of the entropy
we extract the value $c=1$ for the central charge in this model. The continuous curve  shows the entropy
for the critical case $\lambda=1,\gamma=0.5$, which belongs to the Ising universality class. In this case, the
value obtained for the central charge is $c=1/2$. The dots correspond to the non-critical case $\lambda=0.5, \gamma=0.5$. We can
see that in this case the entropy saturates with $L$. Thus, the results confirm the expected behavior of the entropy as predicted by CFT.
 \begin{figure}[t!]
 \centering
\includegraphics[width=0.9\linewidth]{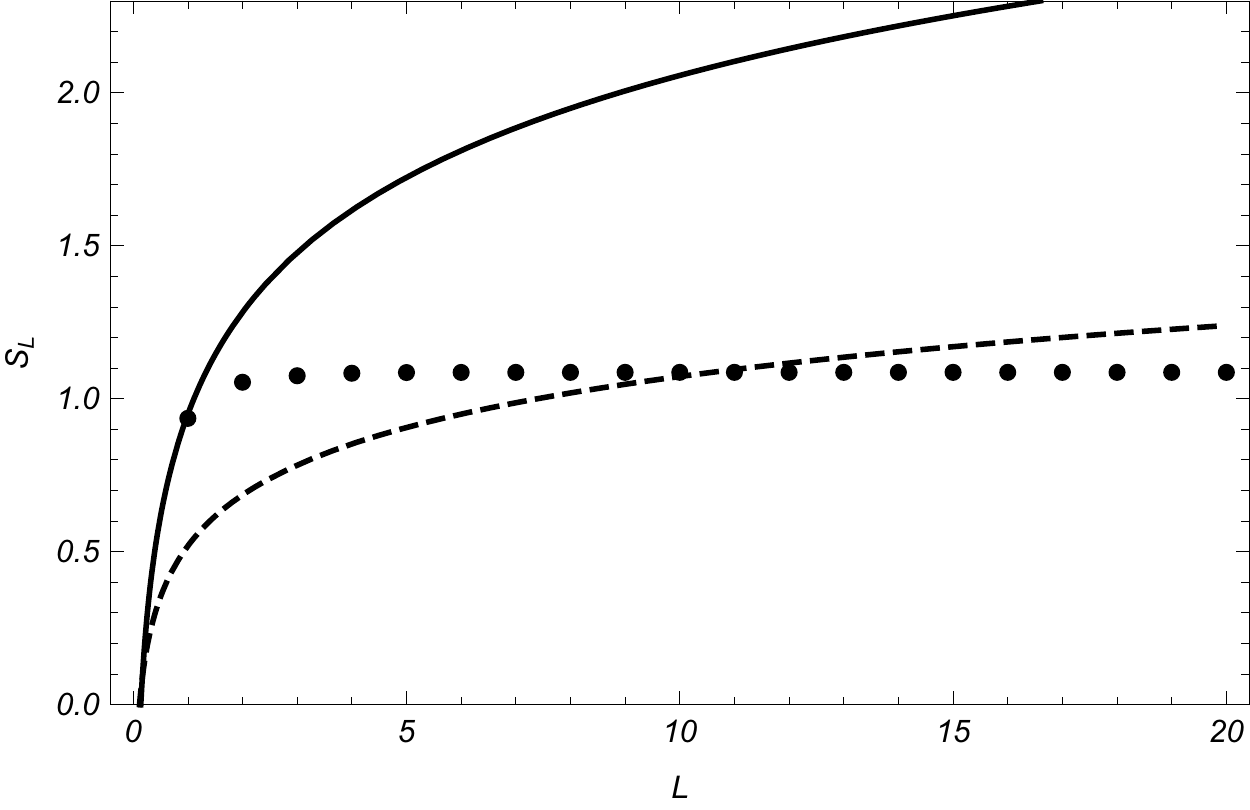}
\caption{Entanglement entropy as a function of the sub-chain size. $L$ is represents the size of the sub-chain. For model parameters
corresponding to critical phases, we observe a logarithmic behavior for the entanglement entropy. Outside the critical point, the entropy saturates.}
\label{fig:vidal}
\end{figure}




\section{Quantum fields}
\subsection{The scalar field}
In section \ref{sec:2} we mentioned that the \emph{canonical commutation relations} (CCR) lead to a $C^*$-algebra, whose representation theory plays a very important role in quantum field theory. Let us recall that the operators $U(a)$ and $V(b)$ defined in (\ref{eq:U(a)V(b)def}) satisfy the so-called Weyl form of the CCR (\emph{cf.}~exercise~\ref{ex:Weyl-CCR}). The first thing we want to do is to recognize that any implementation of the CCR in their Weyl form corresponds in fact to the construction of a representation of a certain group (the Heisenberg group). The Heisenberg group of $\mathbb R^n$, whose underlying set is $(\mathbb R^n\times\mathbb R^n\times \mathbb R)$, is the group defined by the following operation:
\[
(\vec a_1,\vec b_1,r_1)\cdot (\vec a_2,\vec b_2,r_2):= (\vec a_1+\vec a_2,\vec b_1+\vec b_2,r_1+r_2+\frac{1}{2}(\vec a_1\cdot \vec b_2-\vec a_2\cdot \vec b_1)).
\]
\begin{xca}
Study the Lie algebra of the Heisenberg group and explain how this Lie algebra is related to the CCR.
\end{xca}
\begin{xca}
Show that the operators $\mathcal R (\vec a,\vec b,r)$ defined through
\[
\mathcal R (\vec a,\vec b,r):= V(\vec b) U(\vec a)e^{i\hbar(r+\frac{1}{2}\vec a\cdot \vec b)},
\]
with $U$ and $V$ as in (\ref{eq:U(a)V(b)def}), furnish a representation of the Heisenberg group.
\end{xca}
The key point regarding these commutation relations is actually their relation to the (canonical) symplectic structure of the underlying classical phase space, $T^*\mathbb R^n$. In fact, if for $u=(\alpha,\beta)\in T^*\mathbb R^n$ we define
\begin{equation}
W (\alpha, \beta):= e^{-i(\alpha \hat q +\beta \hat p)}, \nonumber
\end{equation}
we obtain the following identity:
\[
W(u) W(v) = e^{-\frac{i}{2}\sigma(u,v)} W(u+v),
\]
where $u$ and $v$ denote  elements of the \emph{symplectic vector space} $T^*\mathbb R^n$, and $\sigma$ the standard symplectic form.
Notice that we have the following  relations between all these operators:
\begin{equation}
W(\alpha,\beta)=e^{i\frac{\hbar}{2}\alpha\beta} V(\alpha) U(\beta)= \mathcal R (\beta,\alpha, 0). \nonumber
\end{equation}
\begin{definition}{(\emph{cf.}~\cite{Moretti2012})}
Let $V$ be a real vector space and $\sigma:V\times V \rightarrow \mathbb R$ a symplectic form (\textit{i.e.} a bilinear, skew-symmetric, non-degenerate form). A $*$-algebra $\mathcal W(V,\sigma)$ is called a Weyl $*$-algebra of $(V,\sigma)$ if there is a family $\lbrace W(u)\rbrace_{u\in V}$ of ``generators'' such that
\begin{itemize}
\item[({\it i}\,)]
$W(u) W(v) = e^{-\frac{i}{2}\sigma(u,v)} W(u+v)$, $\;W(u)^*=W(-u),\;\;$ $u,v\in V$.
\item[({\it ii}\,)] $\mathcal W(V,\sigma)$ is \emph{generated} by the family $\lbrace W(u)\rbrace_{u\in V}$, \textit{i.e.}, it is the span of finite linear  combinations  of finite product of the $W(u)$.
\end{itemize}
\end{definition}
In~\cite{Moretti2012} it is proved that every symplectic vector space $(V,\sigma)$ determines  uniquely  a  Weyl $*$-algebra, up to
$*$-isomorphism. Also, if $\phi: V_1\rightarrow V_2$ is a symplectomorphism between two symplectic vector spaces, then the corresponding Weyl $*$-algebras are isomorphic. A most important fact is that $\mathcal W(V,\sigma)$ can be completed to a $C^*$-algebra, the Weyl $C^*$-algebra.

It is a fundamental result, due to Stone and von Neumann, that when the dimension of $V$ is finite (and hence necessarily even), there is essentially only one representation of the CCR, which can be taken to be the standard Schr\"odinger representation~\cite{Moretti2012}. But in infinite dimensions uniqueness is lost, and so \emph{inequivalent} representations do exist. This fact is closely  related to the non-uniqueness of a vacuum state for a free quantum field in a curved spacetime background.
The following example illustrates how the  construction of a (free, bosonic) quantum field runs in parallel to the
previous discussion of CCR, the ``only'' (!) difference here being that the symplectic vector space we are dealing with is infinite dimensional.
\begin{example}[The Klein-Gordon field, \emph{cf.}~\cite{Wald1994}]  Let $(M,g)$ denote a globally hyperbolic spacetime, and consider the Klein-Gordon equation on that background:
\begin{equation}
(\Box_g +m^2)\varphi=0. \nonumber
\end{equation}
Let $\Sigma_0$ denote a fixed  Cauchy surface, to which the initial data for the solutions of the Klein-Gordon equation will be referred and consider the following vector space, a \emph{space of solutions of the K-G equation}:
\begin{equation}
\mathcal S:= \lbrace \varphi\in C^{\infty}(M,\mathbb R):\; (\Box_g +m^2)\varphi=0\; \mbox{and}\;\varphi|_{\Sigma_0}\in C_0^\infty{(\Sigma_0)} \rbrace. \nonumber
\end{equation}
It is a most remarkable fact that the classical field equation comes equipped with a natural symplectic structure. In fact, if $\Sigma$ is any space-like (Cauchy) hypersurface, and $n$ denotes the unit normal, then on the vector space $\mathcal S$ defined above we can define the following symplectic form:
\begin{equation}
\sigma(\varphi_1,\varphi_2):= \int_\Sigma (\varphi_1 \nabla_\mu\varphi_2-\varphi_2 \nabla_\mu\varphi_1) n_\mu d\mbox{vol}_g. \nonumber
\end{equation}
For $f\in C_0^{\infty}(M)$, let $A$  and $R$ denote the advanced and retarded fundamental solutions of the K-G equation. That is,
for $f\in C^\infty_0(M)$ we have
\begin{equation}
(\Box_g+m^2)Af = f,\;\; (\Box_g+m^2)Rf = f, \nonumber
\end{equation}
where the support of $Af$ lies in the causal past of the support of $f$ (and analogously  for $Rf$). Then $E:= A-R$ gives a map
\begin{equation}
E:C_0^\infty(M) \rightarrow \mathcal S.  \nonumber
\end{equation}
The CCR for the quantum field (in its \emph{smeared form}) then take the form
\begin{equation}
\, [\hat \varphi (f), \hat \varphi (g)]=-i \sigma(Ef, Eg).  \nonumber
\end{equation}
Just as in  non-relativistic quantum mechanics, the operators obeying CCR  are unbounded, and so it is convenient to go to the Weyl form, which leads us back to the Weyl $C^*$-algebra, this time the one corresponding to the infinite dimensional vector space
$(\mathcal S,\sigma)$.
\end{example}
\subsection{Fermionic second quantization and Clifford algebras}
Representation spaces for the  Weyl algebra of the previous example are bosonic Fock spaces.
For the classical Dirac  field, the transition  to the corresponding quantum description
can be performed in a completely analogous way, only that in this case, instead of an antisymmetric form, a symmetric one is used which, again,
comes ``for free'' as part of the structure of the (vector space) of solutions of the classical field equation. It is an interesting fact that the existence of a natural bilinear antisymmetric form in the scalar field case (and of a symmetric one in the case of the Dirac field) leads to quantum fields that obey  the physically correct relation between spin and statistics.

In very general terms, quantization of free fields can be formulated as follows. Suppose we are given a real
vector space $V$ together with a non degenerate (anti-)symmetric
bilinear form $s:V \times V \rightarrow \mathbb R$.
If in addition we  choose a complex structure $J$ which is compatible with $s$ (meaning
$s(u,v)=s(Ju,Jv)$ holds), we can complexify and  obtain a complex vector space $V_J$.
 The bilinear form $s$ can then be used to define an inner product on $V_J$, making it a Hilbert space.
 According to whether the bilinear form is symmetric or antisymmetric, one then considers the antisymmetric
 or the symmetric subspace of the tensor algebra of $V_J$, call it $\mathcal{F}_J$. Endomorphisms  of $V_J$ may be
lifted to $\mbox{End}(\mathcal{F}_J)$ in a natural (\emph{i.e.} functorial) way.
These liftings are then interpreted as
the ``second quantization'' of observables (self-adjoint operators in $V_J$) and symmetries
(unitary operators).
\medskip\\
In this section we will  focus on the fermionic case, following the  beautiful treatment of the subject
presented in~\cite{Gracia-Bond'ia2001}, which makes strong use of Clifford algebras and spin group representations\footnote{We  urge the reader to consult~\cite{Gracia-Bond'ia2001} for details and explicit computations.}.
In the next section we will explore the connection between second quantization (as presented here) and  quantum criticality in spin chains (as discussed previously).
Let us start by  first considering  the simple case where $\dim V < \infty$.
If  $V$ is  a finite dimensional real vector space, and  $g$ a positive definite
symmetric bilinear form on it, we can construct the corresponding  Clifford algebra $C l(V,g)$.
 As shown in~\cite{Gracia-Bond'ia2001}, it is possible to obtain a concrete realization of the complexified
 Clifford algebra $\mathbb Cl (V)$ as a subalgebra of $\mbox{End}(\Lambda^\bullet V^\mathbb C)$, but
the corresponding representation is not irreducible. The situation changes if we assume $V$ has
even dimension, say $2m$. Then we may choose an orthogonal complex structure, that is, a linear
operator $J \in \mbox{End}_\mathbb R(V)$ such that $J^2=-1$ and $g(J u,Jv)=g(u,v)$ for all $u,v \in
V$, and regard $V$ as a complex  vector space $V_J$ of dimension $m$.
The  scalar product on $V_J$ is given by
$
\langle u|v\rangle_J := g(u,v) + i g(Ju,v).
$
It then turns out that the exterior algebra $\Lambda^\bullet V_J$ is an irreducible Clifford
module. We call this module  {\it Fock space} and use the notation $\mathcal{F}_J(V)$ for it.
The  action of $\mathbb Cl (V)$ on $\mathcal{F}_J(V)$ is obtained by defining ``creation" and
``annihilation" operators $ a^{\dagger}_J,\; a_J: V\rightarrow \mbox{End}(\mathcal{F}_J(V) )$
 acting on  $\mathcal{F}_J(V)$ as follows:
\begin{eqnarray}
 a_J(v) (u_1\wedge\ldots\wedge u_k) & := & \sum_{j=1}^k (-1)^{j-1}\langle v|u_j\rangle_J
u_1\wedge\ldots\wedge \widehat{u}_j\wedge \ldots\wedge u_k,\nonumber\\
 a^{\dagger}_J(v)(u_1\wedge\ldots\wedge u_k) & := & v \wedge u_1 \wedge\ldots\wedge u_k \,, \;\;\;\;\;\; a_J(v)(1)  := 0.\nonumber
\end{eqnarray}
 From
these definitions, we obtain the anti-commutation relations
\begin{equation}
  \lbrace a^{\dagger}_J(u),a_J(v)\rbrace   =   \langle u | v \rangle_J,\;\;\;\;\lbrace a_J(u),a_J(v)\rbrace  = 0=  \lbrace a^{\dagger}_J(u),a^{\dagger}_J(v)\rbrace,\nonumber
\end{equation}
from which $(a^{\dagger}_J(v)+a_J(v))^2= \langle v| v \rangle_J $ follows. Hence, the map
\begin{eqnarray}
\pi_J:V & \longrightarrow & \mbox{End}(\Lambda^\bullet V_J) \nonumber\\
 v &\longmapsto &\pi_J(v):= a_J(v) +a^{\dagger}_J(v)\nonumber
\end{eqnarray}
may be extended to $\mathbb Cl(V)$ by making use of the universal property of the Clifford algebra.
The scalar product $\langle \:|\:\rangle_J$ on  $V_J$ induces one on $\mathcal{F}_J(V)$, which
on basis elements is given by
\begin{equation}
\label{eq:scalar-prod}\langle u_1\wedge\ldots\wedge u_k| v_1\wedge\ldots\wedge v_{k'}\rangle
=\delta_{kk'}\det\langle u_i|v_j\rangle_J,
\end{equation}
thus making $\mathcal{F}_J(V)$ a (finite dimensional) Hilbert space. It may be checked that
$a^{\dagger}_J(v)$ is really the adjoint of $a_J(v)$, with respect to this scalar product.
 Our interest is to lift (or quantize)
operators $A\in \mbox{End}( V_J)$ to  operators acting on Fock space. In the present context,
this is done by means of maps $\Gamma, \; d \Gamma: \mbox{End}(V_J) \rightarrow
\mbox{End}(\mathcal{F}_J(V))$ defined as the graded operators obtained from
\begin{equation}
\Gamma^{(k)} A(u_1\wedge\ldots \wedge u_k):= A u_1\wedge \ldots
\wedge A u_k \nonumber
\end{equation}
and
\begin{equation}
d\Gamma^{(k)} A(u_1\wedge\ldots\wedge u_k):=\sum_{j=1}^k u_1 \wedge\ldots\wedge u_{j-1} \wedge
(A u_j)\wedge u_{j+1}\wedge\ldots\wedge u_k. \nonumber
\end{equation}
\begin{xca}
Check the correspondence between the definition of $d\Gamma$ given here and the one given by (\ref{eq:dGamma^k}) and
(\ref{eq:dGamma}).
\end{xca}
\begin{xca}[\textit{cf.}~\cite{Gracia-Bond'ia2001}]
 \label{ex:[dLambda A,v]=Av}
Derive the following identity:
$ [d\Gamma (A),a^{\dagger}(v)]=a^{\dagger}(Av)$.
\end{xca}
Given any $x\in V$ and $v \in V^\mathbb C$ any {\it unitary} vector, we see that
 $\phi(v) (x):=\chi(v)\cdot x \cdot v = x -2g(v,x)v$,  \textit{i.e.} the
twisted conjugation by $v$ using the grading automorphism $\chi$ is a reflection on the
hyperplane perpendicular to $v$. But reflections generate the orthogonal group $O(V,g)$ so that
the map $\phi$, being  defined for all invertible elements of the Clifford algebra,
 when  restricted to the subgroup generated by all products of an even number of unitary
 vectors ($\mbox{Spin}^c(V)$),   reduces to  a homomorphism onto $SO(V)$. This homomorphism
  may be restricted to $\mbox{Spin}(V)$.

Let us recall that there is a canonical vector space isomorphism between the Clifford algebra and the exterior
algebra, given explicitly in terms of an orthonormal basis by:
\begin{eqnarray}
\label{eq:Q}
Q : & \Lambda V^{\mathbb C} & \longrightarrow  \mathbb Cl(V) \\
 & v_1 \wedge\ldots\wedge v_k & \longmapsto  v_1 \cdot v_2\cdots v_k.\nonumber
\end{eqnarray}
Direct computation then shows that $\mathfrak{spin}(V)$ coincides
with $Q(\Lambda^2 V)$ and hence $Q(\Lambda^2 V)\cong \mathfrak{so}(V)$, the isomorphism being given by
\begin{eqnarray}
\rm{ad}: & Q(\Lambda^2 V)& \longrightarrow \mathfrak{so}(V) \nonumber \\
 & a& \longmapsto [a,\cdot \;]\nonumber
\end{eqnarray}
Let $c:\mathbb Cl(V)\rightarrow \mbox{End}(S)$ be an irreducible representation of $\mathbb Cl(V)$. Since
$\mathfrak{spin}(V)$ is realized as a subspace of $\mathbb Cl(V)$, we can compose $c$ with the
isomorphism $\rm{ad}^{-1}$ to obtain a representation, called the {\it infinitesimal spin
representation}, of $\mathfrak{so}(V)$ on $S$. Setting $\widetilde{B}:=ad^{-1}(B)$:
\begin{eqnarray}
\dot{\mu}:&\mathfrak{so}(V)&\rightarrow
Q(\Lambda^2 V)\hookrightarrow\mathbb Cl(V)\rightarrow\rm{End}(S)\nonumber\\
& B&\mapsto\hspace*{0.4cm}\widetilde{B}\hspace*{0.6cm}\mapsto\hspace*{0.2cm}
\widetilde{B}\hspace*{0.4cm}\mapsto\hspace*{0.2cm} c(\widetilde{B}),\nonumber
\end{eqnarray}
\textit{i.e.},
\begin{equation}
\dot{\mu}(B):=c(\widetilde{B}),\hspace*{1cm}B\in \mathfrak{so}(V).\nonumber
\end{equation}
The infinitesimal spin representation satisfies the following basic relation:
\begin{equation}
\label{basic_rel} [\dot{\mu}(B),c(v)]=c(Bv).
\end{equation}
Comparing  (\ref{basic_rel}) with the identity from exercise \ref{ex:[dLambda A,v]=Av}, one realizes that $d\Gamma$ and
$\dot{\mu}$ must be closely related. In order to express this relation, note that any real
linear operator $R$ on $V$ can be written as $R=R_{+}+R_{-}$, with $R_{+}$ linear in $V_J$ and
$R_{-}$ antilinear in $V_J$. For an element $B\in \mathfrak{so}(V)$, $B_{+}$ is skewadjoint and
$B_{-}$ antilinear and skewsymmetric (in $(V_J,\langle\;|\;\rangle_J)$).
Then, it can be shown that for any $B\in \mathfrak{so}(V)$, the following relation holds:
\begin{equation}
\label{eq:mu=dLambda+Tr} \dot{\mu}(B)=d\Gamma(B_{+})+\frac{1}{2}(a^{\dagger}B_{-}a^{\dagger}-aB_{-}a)
-\frac{1}{2}\rm{Tr}B_{+},
\end{equation}
where $a^{\dagger} B_- a^{\dagger}  :=  \sum_{k,l}\langle u_k| B_- u_l\rangle_J a^{\dagger}_k a^{\dagger}_l$
and $a T a  :=  \sum_{k,l} \langle T u_l | u_k \rangle_J a_l a_k$
(with $\lbrace u_k\rbrace_k$  any orthonormal basis in $V_J$)~\cite{Gracia-Bondia1994,Gracia-Bond'ia2001}.
This equation plays an important role for the definition of the quantization map in the
infinite dimensional case, discussed below.

In the infinite dimensional context, we begin with a separable real vector space $V$ on which a
positive definite symmetric bilinear form $g$ is defined. It is assumed that $V$ is complete in
the metric induced by $g$, so that $(V^\mathbb C,\langle\langle\;|\;\rangle\rangle\:)$ is a separable
Hilbert space, where $\langle\langle u|v\rangle\rangle:=2g(\bar{u},v)$.
 The Clifford algebra is constructed as follows.  Consider the algebra  $\mathbb Cl_{\mbox{\tiny fin}}(V)$ obtained from the  union of all algebras $\mathbb Cl(W)$, where $W$ runs through all finite
dimensional subspaces of $V$. There is a unique trace $\tau$
 on $\mathbb Cl_{\mbox{\tiny fin}}(V)$, inherited from the traces on each $\mathbb Cl(W)$.
 The scalar product induced by the trace makes $\mathbb Cl_{\mbox{\tiny fin}}(V)$ a prehilbert space.
  Its completion $\mathcal{H}_\tau$ allows one to represent $\mathbb Cl_{\mbox{\tiny fin}}(V)$
 as a subalgebra of $\mathcal{B}(\mathcal{H}_\tau)$ via the GNS construction.
  The closure of this algebra
 is then a $C^*$-algebra, which is defined to be the Clifford algebra $\mathbb Cl(V)$.
 The universal property remains valid in this context and in particular an orthogonal map
 $h\in O(V,g)$ extends to a $C^*$-algebra automorphism $\theta_h$ of $\mathbb Cl(V)$ (Bogoliubov automorphsim).

 In order to construct a representation of the Clifford algebra on Fock space we need, as before, an orthogonal complex structure $J$. The Fock space $\mathcal{F}_J(V)$ is now defined as
 the completion of the exterior algebra $\Lambda^\bullet V_J$ with respect to the scalar product
  (\ref{eq:scalar-prod}). The action of $\mathbb Cl(V)$ on the exterior algebra is defined as in the finite
  dimensional case, its extension giving rise to an irreducible representation
 $\pi_J:\mathbb Cl(V)\rightarrow \mbox{End}(\mathcal{F}_J(V))$. As in the finite dimensional case, the action of the orthogonal group $O(V)$ on the set
  of orthogonal complex structures is transitive, for  given two complex structures $J$ and $K$,
  there is a unitary map $h:V_J \rightarrow V_K$ such that $K=hJh^{-1}$. As can be shown,
  if the Bogoliubov automorphism $\theta_h$ is implementable by a unitary operator on Fock space,
  then  $\pi_J$ and $\pi_K$ are equivalent.  In general however, they will be inequivalent.
  The converse also holds: Implementability of $\theta_h$ follows from equivalence of the
  representations.
  Necessary and sufficient conditions on $h$ for $\theta_h$ to be implementable are afforded by
   the Shale-Stinespring theorem (\emph{cf.}~\cite{Gracia-Bond'ia2001}, thm. 6.16), namely that the antilinear part $\frac{1}{2}(h+JhJ)$ of
   $h$ be Hilbert-Schmidt. In terms of the representations, it says that $\pi_J$ and $\pi_K$
   are unitarily equivalent if and only if $(J-K)$ is Hilbert-Schimdt.
\begin{xca}
Provide a physical interpretation of the  Shale-Stinespring theorem. Hint: Try relating the vacua associated to two complex structures
$J$ and $K$.
\end{xca}
For the construction of quantized currents (like the charge or the number operator) we need a map
taking self-adjoint operators on $\mathcal H := V_\mathds C$ to self-adjoint operators on $\mathcal F_J (V )$. The explicit construction
of this map, which can  be found in~\cite{Gracia-Bond'ia2001}, will not be presented here. Instead, we will try to describe some of the motivations behind the construction.

Recall that $O(V)$ acts transitively on the set of orthogonal
complex structures, $\mathcal{J}(V)$. Since we have chosen a fixed complex structure $J$ to
construct the representation $\pi_J$ on Fock space, then any other complex structure will be
given in terms of $J$ and of an element $h\in O(V)$ ($J'=hJh^{-1}$). Recall also that an element of
the unitary group $U_J(V)$ is just an orthogonal map $h\in O(V)$  that commutes with $J$. A
unitary transformation $U\in U_J(V)$ is implemented in Fock space by a map $\Gamma_J$ such
that:
\begin{equation}
\Gamma_J(U)\pi_J(v)\Gamma_J(U)^{-1}=\pi_J(Uv).\nonumber
\end{equation}
Now, the observables of the 1-particle theory correspond to elements of the Lie algebra
${\mathfrak o}_J(V)$ of the restricted orthogonal group, which is defined as
$O_J(V):=\lbrace h\in O(V)\,|\,\left[h,J\right]\; \mbox{is Hilbert-Schmidt}\rbrace$. If an element
$X\in{\mathfrak o}_J(V)$ is such that $\left[X,J\right]=0$, then it follows that $JX$ is
self-adjoint on $V_J$ and it can be quantized by means of $d\Gamma_J$. The question of
(unitary) equivalence of the quantization obtained by means of $J$ on one side and by means of
$J'=hJh^{-1}$ on the other, may be formulated as follows. The complex structure $J'$ is
determined by $h$. The map $v\mapsto \pi_J(hv)$ extends to an automorphism of the Clifford
algebra and the question is then whether this automorphism is implementable on Fock space. So,
given $h\in O(V)$, we are looking for a unitary operator $\mu(h)$ on Fock space such that:
\begin{equation}
 \mu(h)\pi_J(v)\mu(h)^{-1}=\pi_J(hv).\nonumber
\end{equation}
In view of the inclusion $U_J(V)\subseteq O(V)$, we see that we are looking for an extension of
$\Gamma_J$ to $O(V)$. If $\left[J, h\right]$ is Hilbert-Schmidt, that is,
if $h\in O_J(V)$, then the map $\mu$ can be constructed (it is the pin representation of
$O_J(V)$ on Fock space). Its restriction to $SO_J(V)$ is the spin representation, from which an
infinitesimal version $\dot{\mu}$ can be defined on a dense domain of Fock space and turns out
to be given by (\ref{eq:mu=dLambda+Tr}) without the trace term and, therefore, it is not a Lie algebra
homomorphism. This gives rise the so-called \emph{Schwinger term}, defined as
\begin{equation}
\alpha(A,B):= [\dot \mu (A),\dot \mu (B)]-\dot \mu ([A,B]). \nonumber
\end{equation}
The Schwinger term is a cocycle that gives an obstruction for $\dot \mu$ to be a Lie algebra homomorphism. As explained in~\cite{Gracia-Bond'ia2001},
the Schwinger term  is related to anomalies in quantum field theory and also to noncommutative geometry.
 It appears naturally when studying the Virasoro algebra, and therefore is also related to
the notion of \emph{central charge} in conformal field theory.
 As we have seen, central charges do also appear when we study entanglement entropy
in spin chains that  display critical behavior.
Our aim in the next section is  to give an idea of how all these concepts are interrelated.
 This will allow us to close the circle by providing a new topological interpretation of quantum criticality,
 for which the underlying geometry  is not anymore the geometry of the space of external parameters,
 but a geometry in the sense of noncommutative geometry.
 This geometric interpretation of anomalies, central charges, Schwinger cocycles  \textit{etc.}, has been known for
  several decades (to the best of my knowledge the first person to realize this was Araki~\cite{Araki1987,Araki1988}).
  But the connection to  \emph{quantum critical phenomena} does not seem to have been exploited.
\subsection{Cyclic cocycles and quantum criticality}
Previously in these notes we discussed the Gelfand-Naimark theorem, which serves as a motivation for the notion of a noncommutative space.
It turns out that the Schwinger term described in the previous section also has a nice geometric interpretation, in the context of noncommutative geometry~\cite{Gracia-Bond'ia2001,Langmann2001}.
In order to give at least a glimpse of how this interpretation appears we will present, following~\cite{Connes1995}, a very basic (yet instructive) example, namely the Gauss-Bonnet theorem, in the language of noncommutative geometry.  But before that we quickly review some aspects of the geometry of surfaces, following the beautiful presentation of the subject by Pressley~\cite{Pressley2010}.
We thus start with a given 2-dimensional surface, described locally through a parametrization
\begin{eqnarray}
\sigma: U\subseteq \mathbb R^2 &\longrightarrow & \;\;\;\;\mathbb R^3 \nonumber\\
(u,v) &\longmapsto  & \sigma(u,v) = (x(u,v), y(u,v),z(u,v)).\nonumber
\end{eqnarray}
At each point, the tangent plane is generated by $\sigma_u\equiv \frac{\partial \sigma}{\partial u}$ and $\sigma_v\equiv \frac{\partial \sigma}{\partial v}$ and we can  define the normal unit vector as
\begin{equation}
n:=\frac{\sigma_u\times \sigma_v}{\|\sigma_u\times \sigma_v\|}.\nonumber
\end{equation}
The infinitesimal length element on the surface is given by
\begin{equation}
ds^2 =E du^2  + 2 F du dv + G dv^2,\nonumber
\end{equation}
where
\[ E=(\sigma_u,\sigma_u),\;F=(\sigma_u,\sigma_v), \; G=(\sigma_v,\sigma_v)\]
are the components of the \emph{first fundamental form}:
\begin{equation}
\mathcal F_I= \left ( \begin{array}{cc}  E & F \\ F & G \end{array}\right ).\nonumber
\end{equation}
Recall that the first and second fundamental forms enter the definition of Gaussian curvature of a surface, which we now recall, starting from the curvature of a curve.
For a plane curve, parametrized by $\gamma (t)$ with unit speed ($\|\dot \gamma(t)\|=1$), let $n(t)$ denote the  unit  normal at $\gamma(t)$. Then
the (signed) curvature $\kappa$ is defined through the relation $\ddot \gamma =\kappa n$. The curvature gives us an idea of how much the curve deviates from a straight line~\cite{Pressley2010}. In fact, we may compute
  \begin{equation}
  (\gamma(t+\Delta t )-\gamma(t ))\cdot n \approx (\dot \gamma(t) \Delta t +\frac{\ddot \gamma(t)}{2}\Delta t^2 +\cdots )\cdot n = \frac{1}{2}\kappa \Delta t^2. \nonumber
  \end{equation}

Now we repeat the same reasoning, but  with a surface. That is, we  want to compute $(\sigma(u+\Delta u ,v+\Delta v)-\sigma(u,v))\cdot n$, where now $n$ denotes the unit normal to the surface.
From a second order Taylor expansion and using the fact that $\sigma_u\cdot n=0=\sigma_v\cdot n$, we get
\begin{eqnarray}
\lefteqn{
\sigma(u+\Delta u ,v+\Delta v)-\sigma(u,v)\approx  { }} \nonumber\\
&& { }\hspace{0.0cm}\approx \frac{1}{2} (L \Delta u^2 +2 M\Delta u \Delta v + N\Delta v ^2 ) =
\frac{1}{2}(\Delta u ,\Delta v)\left( \begin{array}{cc}  L & M \\ M & N \end{array}\right) \left( \begin{array}{c}
                                \Delta u\\
                                \Delta v \end{array}  \right),{ } \nonumber
 \end{eqnarray}
where $L=\sigma_{uu}\cdot n$, $M=\sigma_{uv}\cdot n$ and  $N=\sigma_{vv}\cdot n$ are the coefficients of the
\emph{second fundamental form}:
\begin{equation}
 \mathcal F_{II}=\left( \begin{array}{cc}  L & M \\ M & N \end{array}\right).\nonumber
\end{equation}
Now, for any smooth curve $\gamma$ we have a decomposition of the form
\begin{equation}
\ddot \gamma = \kappa_n n + \kappa_g n\times \dot \gamma,\nonumber
\end{equation}
where $\kappa_n$ and $\kappa_g$ denote, respectively, the \emph{normal} and \emph{geodesic} curvatures of $\gamma$.
To define the curvature of the surface we need to consider only curves with $\kappa_g=0$, for which the normal curvature takes the form
\begin{equation}
\kappa_n = \ddot \gamma \cdot n = L \dot u^2 +2 M \dot u \dot v + N \dot v^2. \nonumber
\end{equation}
The value of $\kappa_n$ changes depending on the direction of the curve at the evaluation point, and it is clear it must attain maximum and minimum values. They are the \emph{principal curvatures}, denoted as $\kappa_1$ and $\kappa_2$.
The Gaussian curvature is defined as
$K:=\kappa_1 \kappa_2$. We can obtain several useful formulas for the curvature using the fundamental forms. The principal curvatures $\kappa_1$ and $\kappa_2$ are obtained from an extremization problem, where we look for extremal values of the principal curvature as we change the direction of the curve -given by $(\dot u, \dot v)$- subject to the unit-speed  restriction $\|\dot \gamma\|=1$.
So, with $x=(x_1,x_2)$, $x_1=\dot u, x_2=\dot v$, we want to extremize
\begin{eqnarray}
Q: \mathbb R^2 &  \longmapsto & \mathbb R \nonumber\\
       x       &  \longmapsto &  x^t\mathcal F_{II} x,\nonumber
\end{eqnarray}
 subject to the restriction $\|\dot \gamma\|^2=1$. But we also have
 \[ \|\dot \gamma\|^2 = E \dot u ^2 + 2F \dot u \dot v + G\dot v^2= x^t \mathcal F_{I} x.\]
\begin{xca}
Solve the extremization problem posed above, subject to the constraint $\|\dot \gamma\|^2=1$, and show that the solution
is given by the solution of the eigenvalue problem
\[
(\mathcal F_{I}^{-1} \mathcal F_{II})\, x = \lambda x,
\]
where the eigenvalue $\lambda$ is the Lagrange multiplier of the problem. Show furthermore that, for a given solution $(x,\lambda)$, the corresponding
normal curvature coincides precisely with $\lambda$ and conclude that the Gaussian curvature can be expressed as follows:
\begin{equation}
K=\kappa_1\kappa_2= \det \left( \mathcal F_{I}^{-1}\mathcal F_{II}\right)= \frac{L N - M^2}{E G -F^2}.\nonumber
\end{equation}
\end{xca}
There are other ways we can characterize $K$. One of them uses the Gauss map: For any point $p$ in the surface, consider the unit normal, $n_p$. Regarding this vector as a point in the unit sphere, we obtain a map $p\mapsto n_p$ from the surface to the unit sphere, the Gauss map. As we go around a small loop centered at $p$, the Gauss map produces an image loop on the unit sphere. The quotient of the areas enclosed by the loops converges, in the limit where the original loop shrinks to a point, to the Gaussian curvature.
Yet another way to obtain the Gaussian curvature is by means of the \emph{Weingarten map}. First notice that we have $n_u\cdot \sigma_u=-L$,
$n_u\cdot \sigma_v=n_v\cdot \sigma_u=-M$ and $n_v\cdot \sigma_v=-N$. This can be easily checked; for instance, the first relation is obtained by taking
 the partial derivative of  $\sigma_u\cdot n=0$ with respect to $u$. The others are obtained in the same way.

 The Weingarten map $\mathcal W$ is most easily defined in the basis $\lbrace \sigma_u, \sigma_v \rbrace$, where it has a matrix given
  by  $\mathcal F_{I}^{-1} \mathcal F_{II}$. In fact, if we define
 \[
 \mathcal W (\sigma_u)= -n_u, \; \mathcal W (\sigma_v)= -n_v
 \]
and write
\begin{eqnarray}
  -n_u & = & \lambda_{11} \sigma_u + \lambda_{12} \sigma_v \nonumber\\
  -n_v & = & \lambda_{21} \sigma_u + \lambda_{22} \sigma_v,\nonumber
\end{eqnarray}
then it follows that
\[
\Lambda:=\left(\begin{array}{cc}
        \lambda_{11} & \lambda_{12} \\
        \lambda_{21} & \lambda_{22}
      \end{array}
 \right)
 = \mathcal F_{I}^{-1} \mathcal F_{II}.
\]
Now we compute
\begin{eqnarray}
n_u\times n_v &= &(\det \Lambda ) \,\sigma_u \times \sigma_v \nonumber\\
&=& (\det \mathcal F_{I}^{-1} \mathcal F_{II} )\, \sigma_u \times \sigma_v, \nonumber
\end{eqnarray}
from which the following identity is obtained:
\begin{equation}
n \cdot (n_u \times n_v) = K \|  \sigma_u \times \sigma_v  \|.\nonumber
\end{equation}
Now, as the components of the unit normal $n=(n_1,n_2,n_3)$ are smooth functions of $(u,v)$,
 we can compute their exterior derivatives
\begin{equation}
d n_i = (\partial_u n_i) du + (\partial_v n_i) dv,\nonumber
\end{equation}
in order to obtain
\begin{equation}
\varepsilon_{ijk} n_k d n_i\wedge d n_j = 2 n \cdot (n_u \times n_v) du\wedge d v.\nonumber
\end{equation}
This leads to the following formula for the Euler characteristic:
\begin{equation}
\chi(\Sigma) = \frac{1}{4\pi} \int \varepsilon_{ijk} n_k d n_i\wedge d n_j.\nonumber
\end{equation}
This formula is particularly well suited to explain the noncommutative viewpoint, to which we now turn, following the example presented by Connes in~\cite{Connes1995}. Let us consider an algebra $\mathcal A$ (without assuming other properties, for the moment being) together with a trilinear map
\[
\tau: \mathcal A\times \mathcal A\times \mathcal A \rightarrow \mathbb C
\]
such that
\begin{equation}
\label{eq:cond(i)}
\tau(a_0,a_1,a_2)=\tau(a_2,a_0,a_1)
\end{equation}
and
\begin{equation}
\label{eq:cond(ii)}
\tau(a_0a_1,a_2,a_3)-\tau(a_0,a_1a_2,a_3)+\tau(a_0,a_1,a_2a_3)
-\tau(a_3a_0,a_1,a_2) =0.
\end{equation}
Now let $t\mapsto e_t$ be a (continuous) family of idempotents, $e_t^2 = e_t$. Then it can be shown that
the quantity $\tau(e_t,e_t, e_t)$ does not depend on $t$, providing an ``invariant'' of the algebra. A quick way to see what is going on is to assume that we can differentiate with respect to $t$. In this case we have
\begin{equation}
\dot e_t = [x_t,e_t],\nonumber
\end{equation}
with $x_t= [\dot e_t, e_t]$. Using this fact and  (\ref{eq:cond(i)}), we obtain
\begin{equation}
\frac{d}{dt} \tau (e_t,e_t,e_t) = 3 \tau ([x_t,e_t], e_t, e_t).\nonumber
\end{equation}
But the last expression vanishes exactly. To see this, we put $a_0= x_t$ and $a_1=a_2=a_3= e_t$ in  (\ref{eq:cond(ii)})
and use $e_t^2= e_t$. Now, coming back to Gauss-Bonnet, if we take the algebra to be $\mathcal A= C^{\infty}(\Sigma)$, we see that the trilinear map $(f_0,f_1,f_2)\mapsto \int_\Sigma f_0 df_1\wedge df_2$ satisfies  (\ref{eq:cond(i)}) and (\ref{eq:cond(ii)}) above, provided $\partial \Sigma = \emptyset$. But, as explained in~\cite{Connes1995}, in this algebra there are no interesting idempotents. This can be remedied by adding a ``bit of noncommutativity" to this algebra, that is, by replacing it by the following one: $\mathcal A = C^{\infty}(\Sigma) \otimes M_2 (\mathbb C ).$
Now, recalling the Gauss map, we realize that we can use the normal to the surface in order to obtain an idempotent, namely
$
e=1/2\,(\mathds 1_2 + n\cdot \sigma),
$
where $\sigma = (\sigma_1,\sigma_ 2, \sigma_3)$, a ``vector'' whose components are the Pauli matrices.
Then, if we define -following~\cite{Connes1995}-
\begin{equation}
\tau (f_0\otimes M_0, f_1\otimes M_1, f_2\otimes M_2):= \left(\int_\Sigma f_0 df_1\wedge df_2\right)\mbox{Tr} (M_0 M_1 M_2),\nonumber
\end{equation}
we obtain (an algebraic version of) the Euler characteristic.
\begin{xca} Prove the last statement, \textit{i.e.}, compute $\tau (e,e,e)$ and show that it is (up to a constant factor) precisely $\chi (\Sigma)$.
\end{xca}
What we learn from the previous example is that it seems plausible to encode geometry and topology in algebraic structures related to a manifold $M$.
This is precisely the way noncommutative geometry works. After establishing such algebraic and operator-theoretic characterization of certain geometric or topological structures, one can try to keep the algebraic version, but now dropping the commutativity assumption. Such is the case with differential calculus. Let us then  quickly list the main properties of de Rham cohomology and see whether they can be formulated for more general algebras. Given a compact  manifold $M$, the space of 1-forms  $\Omega^1(M)$ consists of elements of the form $f dg$, where $f$ and $g$ are smooth functions on $M$, and
$dg = \sum_i (\partial_i g) d x_i$ is the exterior derivative of $g$. Differential forms of order $k$,  $\Omega^k(M)$, are generated locally
by expressions of the form $dx_I\equiv dx_{i_1} \wedge \ldots \wedge dx_{i_k}$ so that, in general, the exterior derivative
$d: \Omega^k(M) \rightarrow \Omega^{k+1}(M)$ can be defined through
$d(f dx_I):=df\wedge d x_I$. The exterior derivative satisfies the Leibniz rule
\begin{equation}
d(\omega\wedge \eta) =d\omega\wedge\eta + (-1)^{|\omega|}\omega \wedge \eta \nonumber
\end{equation}
and fulfills the basic identity $d^2=0$, from which we obtain the (de Rham) cohomology ``groups'':
\[
H^k(M)= \mbox{Ker } d^{k+1}/\mbox{Im } d^k,
\]
that contain important information about the topology of $M$. Finally, we also have Stoke's theorem:
\[
\int_{\partial M} \omega = \int_M d\omega.
\]
Now suppose we are given a unital algebra $\mathcal A$. Thinking of it as given by some space of functions on a ``virtual'' (\textit{i.e.} noncommutative) space, we may try to define a differential calculus satisfying properties as close as possible to the ones recalled above. One possible construction goes as follows~\cite{Gracia-Bond'ia2001}:

In order to define the  space of 1-forms, we consider the map $d:\mathcal A\rightarrow \mathcal A\otimes \mathcal A$ defined as
$da:= 1\otimes a - a\otimes 1$. Its usefulness comes from the fact that it acts as a derivation:
\[
d(ab)=(da)b +a (db).
\]
But $\mathcal A\otimes \mathcal A$ is ``too big'' to be regarded as a space of 1-forms. We want to have a closer resemblance to $\Omega^1(M)$. Recalling
that 1-forms can always be put in the form $\theta = f dg$, we consider the restriction to the  subspace of $\mathcal A\otimes \mathcal A$ generated  by elements of the form $a\, db$. Noticing that
\[
a\, db= a(1\otimes b-b\otimes 1)= a\otimes b- ab\otimes 1,
\]
we arrive at the following definition:
\[
\Omega^1 \mathcal A := \mbox{Ker}(m)\cong \mathcal A\otimes \overline{\mathcal A}\;\;\;\;\;(\overline{\mathcal A}\equiv \mathcal A/\mathbb C),
\]
where $m: \mathcal A\otimes \mathcal A \rightarrow \mathcal A$ is the multiplication map $m(a\otimes b) = ab$. Then, starting from
$\Omega^1 \mathcal A=\mathcal A\otimes \overline{\mathcal A}$ we can go on and define $\Omega^k \mathcal A:=\mathcal A\otimes \overline{\mathcal A}^{\otimes  k}$
for $k\ge 1$. The exterior derivative, then,  is simply given by
\[
d(a_0\otimes \bar{a}_1 \otimes \cdots \otimes \bar{a}_k):= 1\otimes \bar{a}_0\otimes \bar{a}_1 \otimes \cdots \otimes \bar{a}_k.
\]
From these definitions we get, as can be easily checked:
\begin{itemize}
    \item[$(i)$] $d^2=0$ and Leibniz rule: $d(\omega_k\,\eta)= d\omega_k\,\eta +(-1)^k \omega_k\,d\eta$,
    \item[$(ii)$] $\Omega^k \mathcal A \cdot \Omega^l \mathcal A \subseteq \Omega^{k+l} \mathcal A$.
\end{itemize}
This means that in any complex algebra we can find an analog of the space of differential forms (\textit{i.e.}, a graded differential algebra).
The importance of this construction lies in the fact that it satisfies a universal property, as any derivation of $\mathcal A$ into a bimodule factors through a bimodule morphism from $\Omega^l \mathcal A$.

We also want to have a notion of integration. In this context this is achieved through the introduction of \emph{cycles}.
An $n$-dimensional cycle is a complex graded differential algebra
($\Omega^\bullet =\bigoplus_{k\leq n} \Omega^k, d$), along with  an ``integral'', \textit{i.e.} a linear map $\int: \Omega^\bullet\rightarrow\mathbb C$, such that:
\[
\int \omega_k =0\;(\mbox{for}\;\, k<n), \;\;\;
\int \omega_k \omega_l = (-1)^{kl} \int \omega_l \omega_k \;\;\;\mbox{and}\;\;\;
\int d \omega_{n-1}=0.
\]
The third requirement is tantamount to Stokes' theorem for the case of spaces without boundary. Now, given an algebra
 $\mathcal A$, a \emph{cycle} over $\mathcal A$ is a cycle $(\Omega^\bullet,d, \int)$ together with a
homomorphism $\rho:\mathcal A\rightarrow \Omega^0$~\cite{Gracia-Bond'ia2001,Connes1995}. A basic example is afforded by the de Rham complex on a compact manifold without boundary.

Given a cycle over $\mathcal A$ we may now define, for $a_0,a_1,\ldots,a_n\in \mathcal A$,
\[
\tau(a_0,\ldots,a_n):= \int \rho(a_0) d(\rho(a_1))\cdots d(\rho(a_n)).
\]
As shown in~\cite{Connes1995}, this map satisfies  (the generalization to $n$ of) the relations (\ref{eq:cond(i)}) and (\ref{eq:cond(ii)}), \textit{i.e.}, it is a \emph{cyclic cocycle}.
\begin{definition}
Given a complex algebra $\mathcal A$, an \emph{odd} Fredholm module over $\mathcal A$ is given by a representation
$\sigma:\mathcal A\rightarrow \mathcal B (\mathcal H)$
on the algebra of bounded operators on a Hilbert space, together with a self-adjoint operator $F:\mathcal H\rightarrow \mathcal H$, such that $F^2=\mathds 1$ and such that $[F,\sigma(a)]$ is a compact operator for each $a$ in $\mathcal A$. An \emph{even} Fredholm module contains, in addition to the structure of an odd Fredholm module, an additional grading of $\mathcal H$, say $\gamma$, such that $\gamma F=-\gamma F$ and $[\gamma ,\sigma(a)]=0, a\in \mathcal A$.
\end{definition}
On a Fredholm module one can introduce the  structure of a graded differential algebra defining the differential as
\[
d(\sigma(a)):= i[F, \sigma(a)].
\]
Letting $\Omega^1$ be spanned by elements of the form $a_0 da_1$, and so on, we arrive at the formula $d\omega= i[F,\omega]_{\pm}$ as a suitable
definition for $d$, where the bracket denotes a supercommutator (depending on the degree of $\omega$).

\begin{example}[Quasi-free Representations]
An important class of examples for cyclic-cocycles is afforded by the so-called Schwinger terms (as defined in the
previous section) in quantum field theory. In the context of charged fermionic fields, we start with a real vector
space $V$ (given as a certain space of solutions to the classical field equation), a symmetric bilinear form $g$ on $V$
and a  compatible (orthogonal) complex structure $J$. Physically, $J$ is determined by the spectrum of the Hamiltonian.
From the data  $(V, J, g)$ one then defines a complex Hilbert space which is the complexification of $V$ using $J$ as
the complex structure. So we get a Hilbert space $\mathcal{H}=(V_J, \langle\cdot|\cdot\rangle_J)$, where the inner
product is given by
\[
\langle u|v\rangle_J:= g(u,v)+ i g(Ju, v).
\]
We also have a representation of the field algebra on Fock space,
\[
\hat{\pi}_J : \mathbb{C} l (V)\rightarrow \mbox{End}(F_J(V)).
\]
Now, the point is that the Schwinger term
\begin{equation}
\label{eq:Schwinger-term}
 \alpha(A,B)=[\dot \mu(A),\dot \mu(B)] -\dot \mu[A, B ]= \frac{i}{8}\Tr \left(J[J,A][J,B]\right),
\end{equation}
defines a cyclic 1-cocycle. A geometric interpretation of this cocycle is then made possible in the context of
non-commutative geometry described above.
\end{example}

When discussing the entanglement properties of spin chains,  we found that the entanglement entropy of a sub-chain of length $L$, at the critical point,  scales logarithmically with $L$. We mentioned that the constant in front of the logarithm was related to \emph{central charge} of the corresponding conformal field theory. These central charges can also be understood as anomalies appearing in the quantization process of the effective field theories describing the spin chains. As an example, let us consider the Ising chain:
\[
H  =  -\frac{1}{2}\sum_{n=1}^N\left( \sigma_n^x \otimes\sigma_{n+1}^x
+\lambda \sigma_n^z\right).
\]
Using the Wigner-Jordan transformation and then defining the (Majorana) operators~\cite{Fradkin2013}
\[
\chi_{1}(n)=(a_n^\dagger+a_n),\;\;\; \chi_{2}(n)=i(a_n^\dagger-a_n),
\]
we obtain
\[
H  =\frac{i}{2} \sum_{n} \big( \chi_2(n) \chi_1(n+1) + \lambda \chi_1(n) \chi_2(n) \big).
\]
We can go to the continuum limit by assuming the sites in the chain are separated by a unit length $l_0$ and defining
$x_n=l_0 n$. Then, to first order in $l_0$, we have:
\[
\chi_\alpha(x_{n\pm1})\approx \chi_\alpha(x_{n}) \pm l_0\partial_x\chi_\alpha(x_n).
\]
Defining the ``chiral'' components $\chi_\pm =(\chi_1\pm\chi_2)/2$
and the corresponding spinor
\[
\psi= \left(\begin{array}{c}\chi_+\\ \chi_-\end{array}\right),
\]
one obtains, using Heisenberg's equations of motion,
\begin{equation}
\label{eq:Dirac}
\left(  i\gamma^\mu \partial_\mu -m  \right) \psi = 0,
\end{equation}
where $m=m(\lambda)$ is such that $m(\lambda_c)=0$.
\begin{xca} Defining light-cone coordinates $x_{\pm}= x^0\pm x^1$, show that, at the critical point ($m=0$), the Dirac equation
(\ref{eq:Dirac}) is equivalent to $\partial_+\chi_- = 0 = \partial_-\chi_+$.
\end{xca}
\begin{xca}
In the massless case, (\ref{eq:Dirac}) can be obtained from the Lagrangian $\mathcal L = i\overline{\psi}\gamma^\mu \partial_\mu\psi$. Considering the spatial coordinate $x_1$ to be defined on $S^1$ (which amounts to imposing periodic boundary conditions on the spin chain), expand the chiral fields in Fourier modes and then express the Hamiltonian density $\mathcal H = \dot\psi\frac{\partial \mathcal L}{\partial \dot \psi }-\mathcal L$, and the energy function
$H=\int_{S^1}d x_1\mathcal H(x_0,x_1)$ in terms of the Fourier modes.
\end{xca}

As the previous calculations show,  when $m=0$ the two chiral fields are completely decoupled from each other. Thus, we may study them separately.
Let us consider $\chi_+$, for which the mode expansion must be of the form
\[
\chi_+(t,x)= \sum_{n\in \mathbb Z} a_n e^{i n(t+x)}.
\]
Imposing canonical anticommutation relations to the field (which is real), we obtain:
\[
\lbrace a_n,a_m\rbrace= \delta_{n+m,0},\;\;\;\; a_n^*=a_{-n}.
\]
From the previous exercise, we see that the  contribution of $\chi_+$ to the (second quantized) Hamiltonian will be
\[
H_+= \sum_{n\in \mathbb Z} n  :a_{-n}a_n: \;,
\]
where normal ordering has been introduced. Now, how do we define normal ordering depends on what we regard as the ``vacuum state''. This has important consequences for the commutation relations of the (Fourier components of the) stress-energy tensor.
\begin{xca}
Decomposing the stress-energy tensor
\[
T_{\mu\nu}= \frac{1}{4}(\overline{\psi}\gamma_\mu \overset{\leftrightarrow}{\partial_\nu} \psi +
\overline{\psi}\gamma_\nu \overset{\leftrightarrow}{\partial_\mu} \psi)
\]
in light-cone components $T_{++},T_{--},T_{-+}$ and $T_{+-}$, show that $T_{+-}=T_{-+}=0$,  and that the Fourier components ($L_n$)
of $T_{++}$ satisfy the following  commutation relations (Virasoro algebra):
\[
 \left[ L_m, L_n\right] = (m-n) L_{m+n} + \frac{c}{12} (m^3-m)\delta_{m+n,0},
\]
with $c=1/2$.
\end{xca}

The correct solution of the previous exercise depends crucially in a careful handling of the normal ordering prescription. This in turn
is related to the ``filling-up of the Dirac sea'' which, in mathematical terms, can be reduced to the problem of quantizing a linear system with an appropriately chosen complex structure. In fact, writing the Hamiltonian density in terms of the spinor $\psi$, we obtain
\[
\mathcal H= -i \psi^\dagger \sigma_3 \partial_x \psi.
\]
This means that the 1-particle Hamiltonian (which is defined on the Hilbert space $L^2(S^1)\otimes \mathbb C^2$) is given by $H^{(1)}= -i\sigma_3 \partial_x$. As we have seen,
the splitting of the 1-particle Hilbert space into positive and negative energy states gives rise to a complex structure, which is the one that should be used
for quantization. When this is done, one obtains the anomalous term of the Virasoro algebra as an anomaly (Schwinger term) given precisely as a cyclic 1-cocyle of the form (\ref{eq:Schwinger-term}). Now, the relevance of this fact in the context of quantum phase transitions is that there are already
some geometric characterizations of the critical point in terms of, \textit{e.g.}, Berry phases~\cite{Carollo2005,Zhu2006}. This has  also been related to the behavior of certain Chern numbers associated to the parameter space of the spin chain~\cite{Contreras2008}. Now, using Araki's self-dual formalism, it should also be possible to compute the cocycle outside the critical point. The behavior of the cocycle as a function of the model's external parameters may provide
a new geometric characterization of the critical point.

\bibliographystyle{alpha} 

\begin{thebibliography}{10}


\bibitem{Maassen2010}
H.~Maassen, Quantum probability and quantum information theory, in {\em Quantum
  Information, Computation and Cryptography\/},  eds. F.~Benatti, M.~Fannes,
  R.~Floreanini and D.~Petritis, Lecture Notes in Physics, Vol.~808 (Springer
  Berlin Heidelberg, 2010) pp. 65--108.

\bibitem{Einstein1935}
A.~Einstein, B.~Podolsky and N.~Rosen, Can quantum-mechanical description of
  physical reality be considered complete?, {\em Phys. Rev.} {\bf 47}, 777 (May
  1935).

  \bibitem{Guzman2015}
D~A Guzm\'an, L~J Uribe, A~Valencia, F~J Rodr\'iguez, and L~Quiroga,
  Contrasting classical probability concepts with quantum mechanical
  behavior in the undergraduate laboratory, {\em European Journal of Physics}
  \textbf{36} (2015), no.~5, 055039.


\bibitem{Jaynes2003}
E.~T. Jaynes, {\em Probability theory: The logic of science} (Cambridge
  university press, 2003).

\bibitem{Connes1995}
A.~Connes, {\em Noncommutative geometry} (Academic press, 1995).

\bibitem{Werner2000}
D.~Werner, {\em Funktionalanalysis}, 3 edn. (Springer, 2000).

\bibitem{Haag1996}
R.~Haag, {\em Local Quantum Physics.} Texts and Monographs in Physics, Texts and
  Monographs in Physics, 2nd edn. (Springer Verlag, 1996).

\bibitem{Bratteli1997}
O.~Bratteli, {\em Operator Algebras and Quantum Statistical Mechanics:
  Equilibrium States. Models in Quantum Statistical Mechanics} (Springer,
  1997).

\bibitem{Rudin1991}
W.~Rudin, {\em Functional Analysis}, 2nd edn. (McGraw-Hill, 1991).

\bibitem{Gelfand1943}
I.~Gelfand and M.~Neumark, On the imbedding of normed rings into the ring of
  operators in hilbert space., {\em Rec. Math. [Mat. Sbornik] N.S.} {\bf 12},
  197  (1943).

\bibitem{Gracia-Bond'ia2001}
J.~Gracia-Bond\'{i}a, J.~V\'{a}rilly and H.~Figueroa, {\em Elements of
  Noncommutative Geometry} (Birkh\"auser, 2001).

\bibitem{Paschke2001}
M.~Paschke, Von nichtkommutativen Geometrien, ihren Symmetrien und etwas
  Hochenergiephysik, PhD thesis, Institut f\"ur Physik, (Universit\"at Mainz,
  2001).

\bibitem{Papadopoulos2004}
N.~Papadopoulos, M.~Paschke, A.~Reyes  and F.~Scheck, The spin-statistics relation
  in nonrelativistic quantum mechanics and projective modules, {\em Annales
  Math\'ematiques Blaise-Pascal} {\bf 11}, 205  (2004).

\bibitem{Papadopoulos2010}
N.~Papadopoulos and A.~Reyes-Lega, On the geometry of the Berry-Robbins
  approach to spin-statistics, {\em Found. Phys.} {\bf 40}, p. 829-851  (2010).

\bibitem{Reyes-Lega2011}
A.~Reyes-Lega, On the geometry of quantum indistinguishability, {\em Journal of
  Physics A: Mathematical and Theoretical} {\bf 44}, p. 325308  (2011).

\bibitem{Wegge-Olsen1994}
N.~E. Wegge-Olsen, {\em K-theory and C*-algebras: A Friendly Approach} (Oxford
  University Press, 1994).

\bibitem{Balachandran2013}
A.~P. Balachandran, T.~R. Govindarajan, A.~R. de~Queiroz and A.~F. Reyes-Lega,
  Entanglement and particle identity: A unifying approach, {\em Phys. Rev.
  Lett.} {\bf 110}, p. 080503 (Feb 2013).

\bibitem{Balachandran2013a}
A.~Balachandran, T.~Govindarajan, A.~R. de~Queiroz and A.~Reyes-Lega, Algebraic
  approach to entanglement and entropy, {\em Physical Review A} {\bf 88}, p.
  022301  (2013).

\bibitem{Nielsen2010}
M.~A. Nielsen and I.~L. Chuang, {\em Quantum computation and quantum
  information} (Cambridge university press, 2010).

\bibitem{Keyl2002}
M.~Keyl, Fundamentals of quantum information theory, {\em Physics Reports} {\bf
  369}, 431  (2002).

\bibitem{Holevo2011}
A.~S. Holevo, {\em Probabilistic and statistical aspects of quantum theory}
  (Springer, 2011).

\bibitem{Balachandran2013b}
A.~Balachandran, T.~Govindarajan, A.~R. De~Queiroz and A.~Reyes-Lega, Algebraic
  theory of entanglement, {\em Il Nuovo Cimento C} {\bf 36}, 27  (2013).

\bibitem{Tichy2011}
M.~C. Tichy, F.~Mintert and A.~Buchleitner, Essential entanglement for atomic
  and molecular physics, {\em Journal of Physics B: Atomic, Molecular and
  Optical Physics} {\bf 44}, p. 192001  (2011).

\bibitem{Benatti2012}
F.~Benatti, R.~Floreanini and U.~Marzolino, Bipartite entanglement in systems
  of identical particles: The partial transposition criterion, {\em Annals of
  Physics} {\bf 327}, 1304   (2012).

\bibitem{Eckert2002}
K.~Eckert, J.~Schliemann, D.~Bruss and M.~Lewenstein, Quantum correlations in
  systems of indistinguishable particles, {\em Annals of Physics} {\bf 299}, 88
    (2002).

\bibitem{Ghirardi2004}
G.~Ghirardi and L.~Marinatto, General criterion for the entanglement of two
  indistinguishable particles, {\em Phys. Rev. A} {\bf 70}, p. 012109  (2004).

\bibitem{Sachdev2011}
S.~Sachdev, {\em Quantum Phase Transitions}, 2nd edn. (Cambridge, 2011).

\bibitem{Lieb1961}
E.~Lieb, T.~Schultz and D.~Mattis, Two soluble models of an antiferromagnetic
  chain, {\em Annals of Physics} {\bf 16}, 407   (1961).

\bibitem{Henkel1999}
M.~Henkel, {\em Conformal invariance and critical phenomena} (Springer, 1999).

\bibitem{Karevski2000}
D.~Karevski, Surface and bulk critical behaviour of the XY chain in a
  transverse field, {\em Journal of Physics A: Mathematical and General} {\bf
  33}, p. L313  (2000).

\bibitem{Schmidt1974}
U.~Schmidt, The excitation spectrum of the one-dimensional XY-model in a
  magnetic field, {\em Zeitschrift f\"ur Physik} {\bf 267}, 271  (1974).

\bibitem{Peschel2009}
I.~Peschel and V.~Eisler, Reduced density matrices and entanglement entropy in
  free lattice models, {\em Journal of Physics A: Mathematical and Theoretical}
  {\bf 42}, p. 504003  (2009).

\bibitem{DeChiara2012}
G.~De~Chiara, L.~Lepori, M.~Lewenstein and A.~Sanpera, Entanglement spectrum,
  critical exponents, and order parameters in quantum spin chains, {\em
  Physical review letters} {\bf 109}, p. 237208  (2012).

\bibitem{Rivera2014}
C.~Rivera, DMRG algorithm and quantum phase transitions, Master's thesis,
  Universidad de los Andes (July 2014).

\bibitem{Vidal2003}
G.~Vidal, J.~I. Latorre, E.~Rico and A.~Kitaev, Entanglement in quantum
  critical phenomena, {\em Phys. Rev. Lett.} {\bf 90}, p. 227902 (Jun 2003).

\bibitem{Moretti2012}
V.~Moretti, {\em Spectral Theory and Quantum Mechanics with an introduction to
  the Algebraic Formulation} (Springer-Verlag, 2012).

\bibitem{Wald1994}
R.~M. Wald, {\em Quantum field theory in curved spacetime and black hole
  thermodynamics} (University of Chicago Press, 1994).

\bibitem{Gracia-Bondia1994}
J.~M. Gracia-Bond\'{i}a and J.~C. V\'{a}rilly, Quantum electrodynamics in
  external fields from the spin representation, {\em Journal of Mathematical
  Physics} {\bf 35}, 3340  (1994).


\bibitem{Araki1987}
H.~Araki, Bogoliubov automorphisms and fock representations of canonical
  anticommutation relations, {\em Contemp. Math} {\bf 62}, 23  (1987).

\bibitem{Araki1988}
H.~Araki, Schwinger terms and cyclic cohomology, in {\em Quantum Theories and
  Geometry\/},  (Springer, 1988) pp. 1--22.

\bibitem{Langmann2001}
E.~Langmann, {\em Geometric Methods for Quantum Field Theory: Proceedings of  the Summer School: Villa de Leyva, Colombia, 12-30 July 1999} (World
  Scientific, 2001), ch.~Quantum Theory of Fermion Systems: Topics between
  Physics and Mathematics, pp. 170--237.

\bibitem{Pressley2010}
A.~N. Pressley, {\em Elementary differential geometry} (Springer, 2010).

\bibitem{Fradkin2013}
E.~Fradkin, {\em Field theories of condensed matter physics} (Cambridge
  University Press, 2013).

\bibitem{Carollo2005}
A.~C.~M. Carollo and J.~K. Pachos, Geometric phases and criticality in
  spin-chain systems, {\em Phys. Rev. Lett.} {\bf 95}, p. 157203 (Oct 2005).

\bibitem{Zhu2006}
S.-L. Zhu, {S}caling of {G}eometric {P}hases {C}lose to the {Q}uantum {P}hase
  {T}ransition in the {XY} {S}pin {C}hain, {\em Phys. Rev. Lett.} {\bf 96}, p.
  077206 (Feb 2006).

\bibitem{Contreras2008}
H.~Contreras and A.~Reyes-Lega, {B}erry phases, quantum phase transitions and
  {C}hern numbers, {\em Physica B: Condensed Matter} {\bf 403}, 1301   (2008),
  {P}roceedings of the International Conference on Strongly Correlated Electron
  Systems.

\end{thebibliography}

\end{document}